\let\cite\citep
\newcommand{\eqn}{\mbox{Eq.}}
\newcommand{\eqns}{\mbox{Eqs.}}
\newcommand{\degreeC}{$^{\circ}$C}
\newcommand{\perMolePerSecond}{\mbox{ M}^{-1} \mbox{ s}^{-1}}
\newcommand{\reversible}{detailed-balance\xspace}
\newcommand{\Reversible}{Detailed-balance\xspace}
\newcommand{\statebias}{s_b}
\newcommand{\model}{\mathcal{C}}
\newcommand{\reversiblemodel}{\mathcal{C}^R}
\newcommand{\hatreversiblemodel}{\hat{\mathcal{C}}^R}
\newcommand{\initialpi}{\mathbf{\pi}}
\newcommand{\reversibleCTMC}{$\reversiblemodel=(\statespace, \tmat, \initialpi_0, \statesfinal, \initialpi)$}
\newcommand{\reversibleCTMCviceversa}{$(\statespace, \tmat,  \initialpi_0, \statesfinal, \initialpi) = \reversiblemodel$}
\newcommand{\CTMCviceversa}{$(\statespace, \tmat, \initialpi_0, \statesfinal) = \model$}
\newcommand{\CTMC}{$\model=(\statespace, \tmat,  \initialpi_0, \statesfinal)$}
\newcommand{\hCTMC}{$\hat{\model}=(\hat{\statespace}, \hat{\tmat},  \hat{\initialpi}_0, \hatstatesfinal)$}
\newcommand{\hreversibleCTMC}{$\hatreversiblemodel=(\hat{\statespace},  \hat{\tmat}, \hat{\initialpi}_0, \hatstatesfinal, \hat{\initialpi})$}
\newcommand{\statesFinal}{\mathcal{S}_{\mathrm{target}}}
\newcommand{\statesfinal}{\statespace_{\mathrm{target}}}
\newcommand{\statesinitial}{\statespace_{\mathrm{init}}}
\newcommand{\hatstatesfinal}{\hat{\statespace}_{\mathrm{target}}}
\newcommand{\expectt}{\tau_{\initialpi_{0}}}
\newcommand{\expecttdelta}{\tau_{\initialpi_{0}}^\delta}
\newcommand{\expect}{\mathbb{E}}
\newcommand{\setofstrands}{\mathrm{\Psi}^{*}}
\newcommand{\dGBox}{\Delta G_{\rm box}^{\circ}}
\newcommand{\dGVol}{\Delta G_{\rm volume}^{\circ}}
\newcommand{\dG}{\Delta G}
\newcommand{\kUni}{k_{\mathrm{uni}}}
\newcommand{\kBi}{k_{\mathrm{bi}}}
\newcommand{\argmin}{\text{argmin }}
\definecolor{forestgreen(traditional)}{rgb}{0.0, 0.4, 0.13}
\newcolumntype{L}[1]{>{\raggedright\let\newline\\\arraybackslash\hspace{0pt}}m{#1}}
\newcolumntype{C}[1]{>{\centering\let\newline\\\arraybackslash\hspace{0pt}}m{#1}}
\newcolumntype{R}[1]{>{\raggedleft\let\newline\\\arraybackslash\hspace{0pt}}m{#1}}
\newcommand{\pmat}{\mathbf{P}\xspace}
\newcommand{\tmat}{\mathbf{K}\xspace}
\newcommand{\pathwaytime}{\kappa}
\newcommand{\kssa}{ \hat{k}_\text{SSA}} 
\newcommand{\kpathway}{ \hat{k}_\text{PE}} 
\newcommand{\statespace}{\mathcal{S}}
\newcommand{\state}{s}
\newcommand{\stateInit}{s_{\text{0}}}
\newcommand{\stateFinal}{s_{\text{f}}}
\newcommand{\staterd}{s}
\newcommand*{\logten}{\mathop{\log_{10}}}
\newcommand\sh[1]{\textcolor{black}{#1}}
\theoremstyle{plain}% Theorem-like structures provided by amsthm.sty
\newtheorem{theorem}{Theorem}[section]
\newtheorem{proposition}[theorem]{Proposition}
\theoremstyle{definition}
\theoremstyle{remark}
\begin{document}

\begin{frontmatter}
%%%%%%%%%%%%%%%%%%%%%%%%%%%%%%%%%%%%%%%%%%%%%%
%%                                          %%
%% Enter the title of your article here     %%
%%                                          %%
%%%%%%%%%%%%%%%%%%%%%%%%%%%%%%%%%%%%%%%%%%%%%%
\title{The Pathway Elaboration Method for Mean First Passage Time Estimation in  Large  Continuous-Time Markov Chains with Applications to Nucleic Acid Kinetics}
%\title{A sample article title with some additional note\thanksref{T1}}
\runtitle{The Pathway Elaboration Method}
\thankstext{T1}{Joint first authorship for first two authors: Sedigheh Zolaktaf and Frits Dannenberg contributed equally to this work. }

\begin{aug}
%%%%%%%%%%%%%%%%%%%%%%%%%%%%%%%%%%%%%%%%%%%%%%
%%Only one address is permitted per author. %%
%%Only division, organization and e-mail is %%
%%included in the address.                  %%
%%Additional information can be included in %%
%%the Acknowledgments section if necessary. %%
%%%%%%%%%%%%%%%%%%%%%%%%%%%%%%%%%%%%%%%%%%%%%%
\author[A]{\fnms{Sedigheh Zolaktaf} \snm{}\ead[label=e1,mark]{nasimzf@cs.ubc.ca}},
\author[B]{\fnms{Frits Dannenberg} \snm{}\ead[label=e2,mark]{fdannenberg@live.nl}},
\author[A,C]{\fnms{Mark Schmidt} \snm{}\ead[label=e3,mark]{schmidtm@cs.ubc.ca}},
\author[A]{\fnms{Anne Condon} \snm{}\ead[label=e4,mark]{condon@cs.ubc.ca}},
\and
\author[B]{\fnms{Erik Winfree} \snm{}\ead[label=e5,mark]{winfree@caltech.edu}}

%%%%%%%%%%%%%%%%%%%%%%%%%%%%%%%%%%%%%%%%%%%%%%
%% Addresses                                %%
%%%%%%%%%%%%%%%%%%%%%%%%%%%%%%%%%%%%%%%%%%%%%%
\address[A]{University of British Columbia, \printead{e1,e3,e4}}
\address[B]{California Institute of Technology, \printead{e2,e5}}
\address[C]{Alberta Machine Intelligence Institute}
\end{aug}

\begin{abstract}
For predicting the kinetics  of nucleic acid reactions,   continuous-time Markov chains (CTMCs) are widely used. The rate of a reaction can be obtained through the mean first passage time (MFPT)  of its CTMC. However, a typical issue in CTMCs is that the number of states could be large, making MFPT estimation challenging, particularly for \sh{events that happen on a  long time scale (rare events).} We propose the pathway elaboration method, a time-efficient probabilistic truncation-based approach for detailed-balance CTMCs. It can be used for estimating the MFPT for rare events in addition to rapidly evaluating perturbed parameters without expensive recomputations. We demonstrate that pathway elaboration is suitable for predicting nucleic acid  kinetics by conducting computational experiments on 267 measurements that cover a wide range of rates for different types of reactions.   We utilize pathway elaboration to gain insight on the kinetics of two contrasting   reactions, one being a rare event.  We then compare the performance of pathway elaboration with the stochastic simulation algorithm (SSA) for MFPT estimation on 237 of the reactions for which SSA is feasible. We further  build truncated CTMCs with SSA and transition path sampling (TPS) to compare with pathway elaboration.  Finally, we use pathway elaboration to rapidly evaluate perturbed model parameters during optimization with respect to experimentally measured rates for these 237 reactions. The testing error on the remaining 30 reactions, which involved rare events and were not feasible to simulate with SSA, improved comparably with the training error.    Our framework and dataset are available at \url{https://github.com/DNA-and-Natural-Algorithms-Group/PathwayElaboration}.

\end{abstract}

\begin{keyword}
\kwd{continuous-time Markov chain} 
\kwd{mean first passage time}
\kwd{nucleic acid kinetics}
\end{keyword}

\end{frontmatter}
%%%%%%%%%%%%%%%%%%%%%%%%%%%%%%%%%%%%%%%%%%%%%%
%% Please use \tableofcontents for articles %%
%% with 50 pages and more                   %%
%%%%%%%%%%%%%%%%%%%%%%%%%%%%%%%%%%%%%%%%%%%%%%
%\tableofcontents

%%%%%%%%%%%%%%%%%%%%%%%%%%%%%%%%%%%%%%%%%%%%%%
%%%% Main text entry area:

\section{Introduction}
Predicting the kinetics  of reactions involving  interacting nucleic acid strands is desirable for  building autonomous nanoscale devices whose nucleic acid sequences and experimental conditions need to be carefully designed to control their behaviour, such as RNA toehold switches~\cite{angenent2020deep} and oscillators~\cite{srinivas2017enzyme}. By kinetics, we mean non-equilibrium dynamics, such as the
rate of a reaction and the order in which different strands interact  when a system is not in thermodynamic equilibrium.  Accurate and efficient prediction methods would  facilitate the design of complex  molecular devices by reducing, though not  eliminating, the need for debugging deficiencies with wet-lab experiments. 

The kinetics of nucleic acid reactions are often modeled as continuous-time Markov chains (CTMC)  with elementary steps~\cite{schaeffer2015stochastic,flamm2000rna,dykeman2015implementation}.   A CTMC  is a stochastic process on a discrete set of states that have the Markov property so that future possible states are independent of past states given the current state.  The time in a state before transitioning to another state, the holding time, is continuous; to retain the Markov property, holding times follow an exponential distribution with a single rate parameter for each state-to-state transition. 
In elementary step models of nucleic acid kinetics, states  correspond to   secondary structures  and a transition between two states corresponds to the breaking or forming of a base pair. The transition rates are specified with kinetic  models~\cite{metropolis1953equation}  along with  thermodynamic models~\cite{hofacker2003vienna,zadeh2011nupack}.
 We call the states corresponding to the reactants and the products  of a reaction as initial states and target states, respectively.

A fundamental kinetic property of interest in a nucleic acid reaction is the reaction rate constant. We  can estimate the rate using  the    mean  first passage time (MFPT) to reach the set of  target states starting from the set of initial states in the CTMC~~\cite{schaeffer2013stochastic}.   The MFPT is  commonly used to estimate the rate of a process~\cite{schaeffer2013stochastic,reimann1999universal,singhal2004using}.  For a CTMC with  a  reasonable state space size,  a matrix equation can provide an exact  solution to the MFPT~\cite{suhov2008probability}. However, direct application of matrix methods is not feasible for CTMCs that have large state spaces.   In this work we are interested in two computational challenges when the state space  is too large to allow exact matrix methods. The first challenge is   efficiently estimating MFPTs for reactions that happen on long time scale (rare events) and  the second challenge is  efficiently recalculating MFPTs for  mildly perturbed model parameters. Next we describe these challenges and then we describe our contributions.

 For large state spaces  researchers may resort to stochastic simulations~\cite{gillespie2007stochastic,ripley2009stochastic,asmussen2007stochastic,doob1942topics,gillespie1977exact}. 
   For CTMCs in particular, the stochastic simulation algorithm (SSA)~\cite{doob1942topics,gillespie1977exact}  is a widely used  Monte Carlo procedure that can  numerically generate statistically correct trajectories.   By trajectory, we mean a path from one state to another state, plus a holding time at each state along the path.   By sampling enough trajectories from the initial states to the target states, an estimate of the MFPT can be obtained.    However, SSA is inefficient for estimating MFPTs of rare events, that is reactions that happen on a  long time scales, such as reactions that involve high-energy barrier states.  A number of techniques have been developed for efficient sampling of simulation trajectories relevant to the event of interest~\cite{bolhuis2002transition,allen2009forward,rubino2009rare}.  Such sampling techniques, unfortunately, must generally be re-run if model parameters change, which makes it costly to perform parameter scans or to optimize a model.

Alternatively, large CTMCs can be approximated by models with just a subset of   ``most relevant''  states, in order to estimate MFPTs and other properties of interest with matrix methods~\cite{munsky2006finite,kuntz2019exit,singhal2004using}.  In truncated-based models, a set of most relevant states is selected, and transitions are added between selected states that are adjacent in the original CTMC. Methods have also been developed for reusing truncated models under mildly perturbed conditions such as temperature~\cite{singhal2004using}. The key challenge here is to efficiently enumerate a suitable subset of states that are  sufficient for accurate  estimation and also few enough that the matrix methods are tractable. As we discuss further in our related work section, the methods to date fail to do this for applications such as MPFT estimation of nucleic acid reactions that are rare events.

%Large  CTMC systems may be treated by methods that truncate the state space to just a subset of ``most relevant'' states, so long as those states can be identified and they are few enough that matrix methods can  {compute the property of interest, such as transient probabilities~\cite{munsky2006finite,kuntz2019exit} or MFPT estimates~\cite{singhal2004using}}. The truncated CTMC have been reused to compute   the property of interest for mildy perturbed parameters~\cite{singhal2004using}.
%For example,   we can reuse truncated CTMCs to speed up  parameter inference  or to optimize experimental conditions, such as the  temperature, to obtain a desired functionality.     The key challenge here is to efficiently enumerate a suitable subset of states that are  sufficient for accurate  estimation and also few enough that the matrix methods are tractable.

Here we are interested in  a method that successfully addresses both  challenges for  MFPT estimation in large CTMCs:  rare events and efficient recomputation for perturbed model parameters.     We develop a method  which uses both biased and local stochastic simulations to build truncated CTMCs relevant to a (possibly rare) event of interest.  With extensive experiments we demonstrate  that  our method  is  suitable for   predicting   nucleic   acid      kinetics modeled as CTMCs with elementary steps. 
Although in this work we only evaluate our method  in the context of nucleic acid kinetics, we believe it could also be useful in other applications of \reversible CTMCs,  such as chemical reaction networks~\cite{anderson2011continuous} and  protein folding~\cite{mcgibbon2015efficient}.  \\

\noindent \textbf{Our contributions.} In Section~\ref{section-method}, we propose the  \emph{pathway elaboration} method for estimating MFPTs in \reversible CTMCs. Pathway elaboration is a time-efficient probabilistic truncation-based approach which can be used for MFPT estimation of  rare events and also enables the rapid evaluation of  perturbed parameters. In pathway elaboration, we  first construct a pathway by  biasing SSA simulations from the initial states to the targets states. The biased simulations are guaranteed to reach the target states    in  expected time that is linear in the distance from initial to target states.   
Then,   we  expand the pathway by running  SSA simulations for a limited time from every state of the pathway, with the intention of increasing accuracy by increasing representation throughout the pathway.    Finally, we compute all possible transitions  between the sampled states  that were not encountered in the previous two steps. For the resulting truncated CTMC, we   solve a matrix equation to compute the MFPT to the target state (or states).   Since solving matrix equations could be slow for large CTMCs, pathway elaboration includes a $\delta$-pruning step to efficiently prune CTMCs  while keeping MFPT estimates  within  predetermined upper bounds. In this way, solving the system for other parameter settings becomes faster.  
 Figure~\ref{logo} illustrates the pathway elaboration method and its applications.

\begin{figure}[t]
\centering
\subfloat[Pathway construction]{\label{logo1} \includegraphics[width=0.19\textwidth]{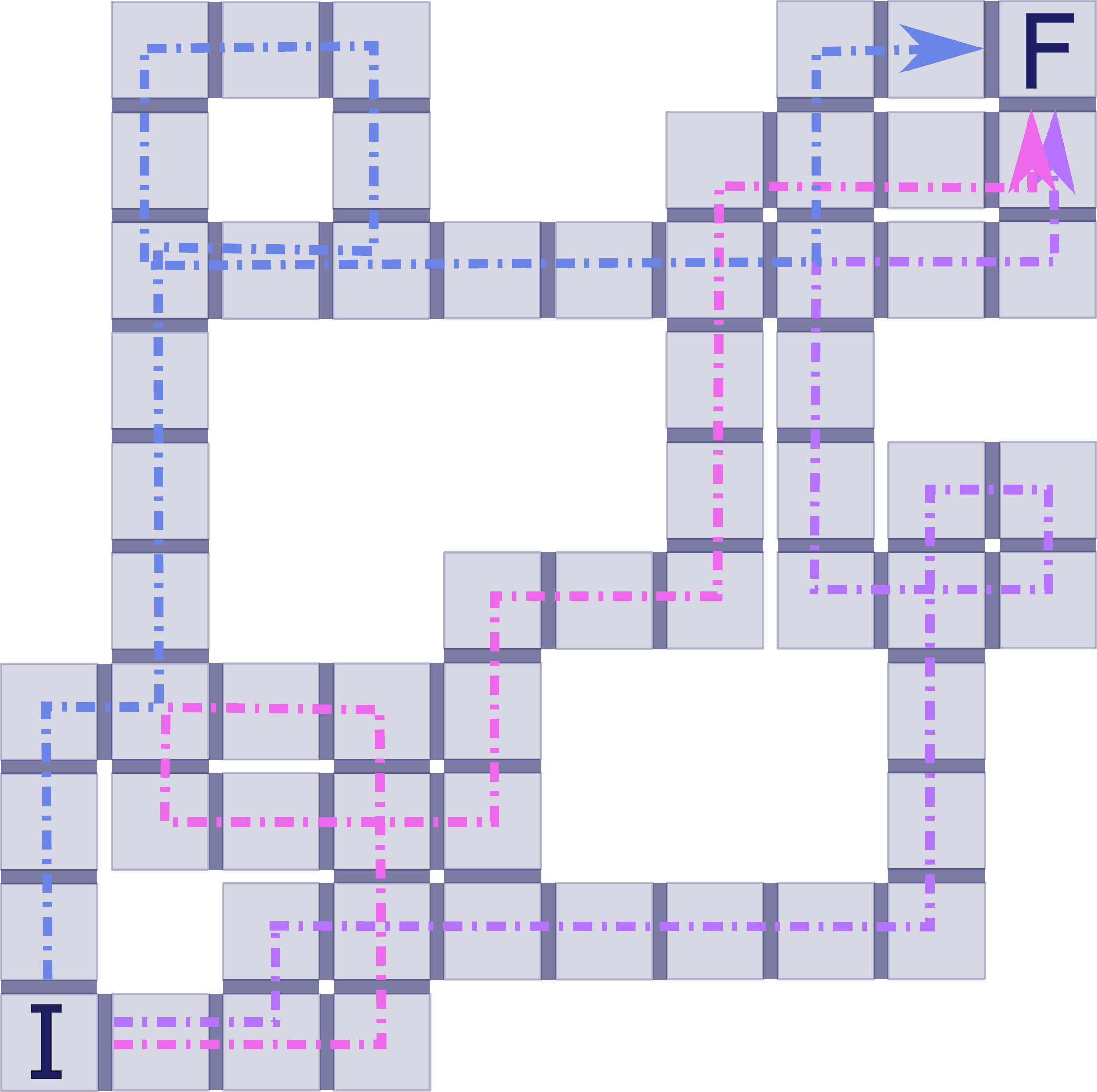}}
\subfloat[State elaboration]{\label{logo2} \includegraphics[width=0.19\textwidth]{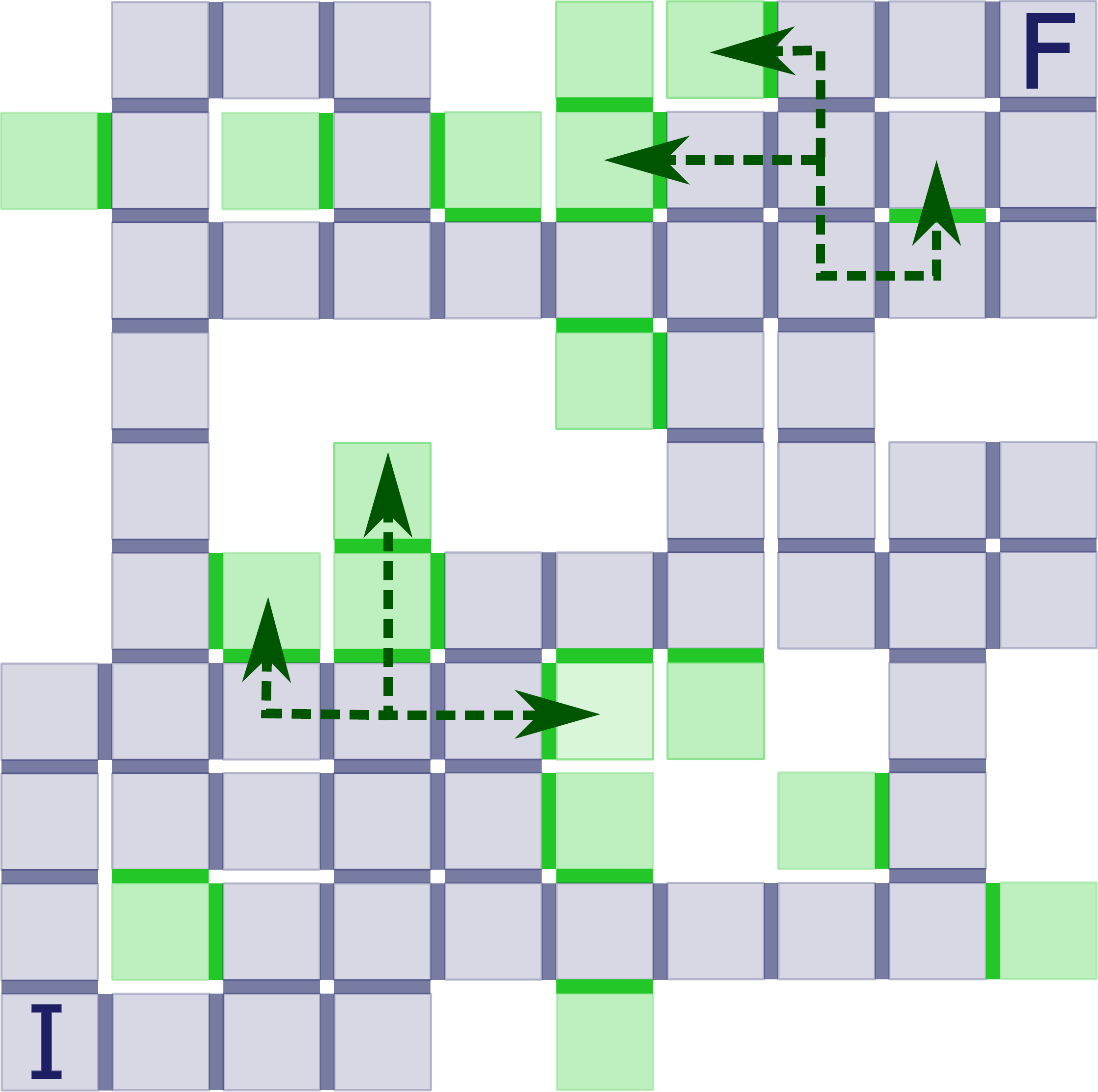}}
\subfloat[Transition construction]{\label{logo3} \includegraphics[width=0.19\textwidth]{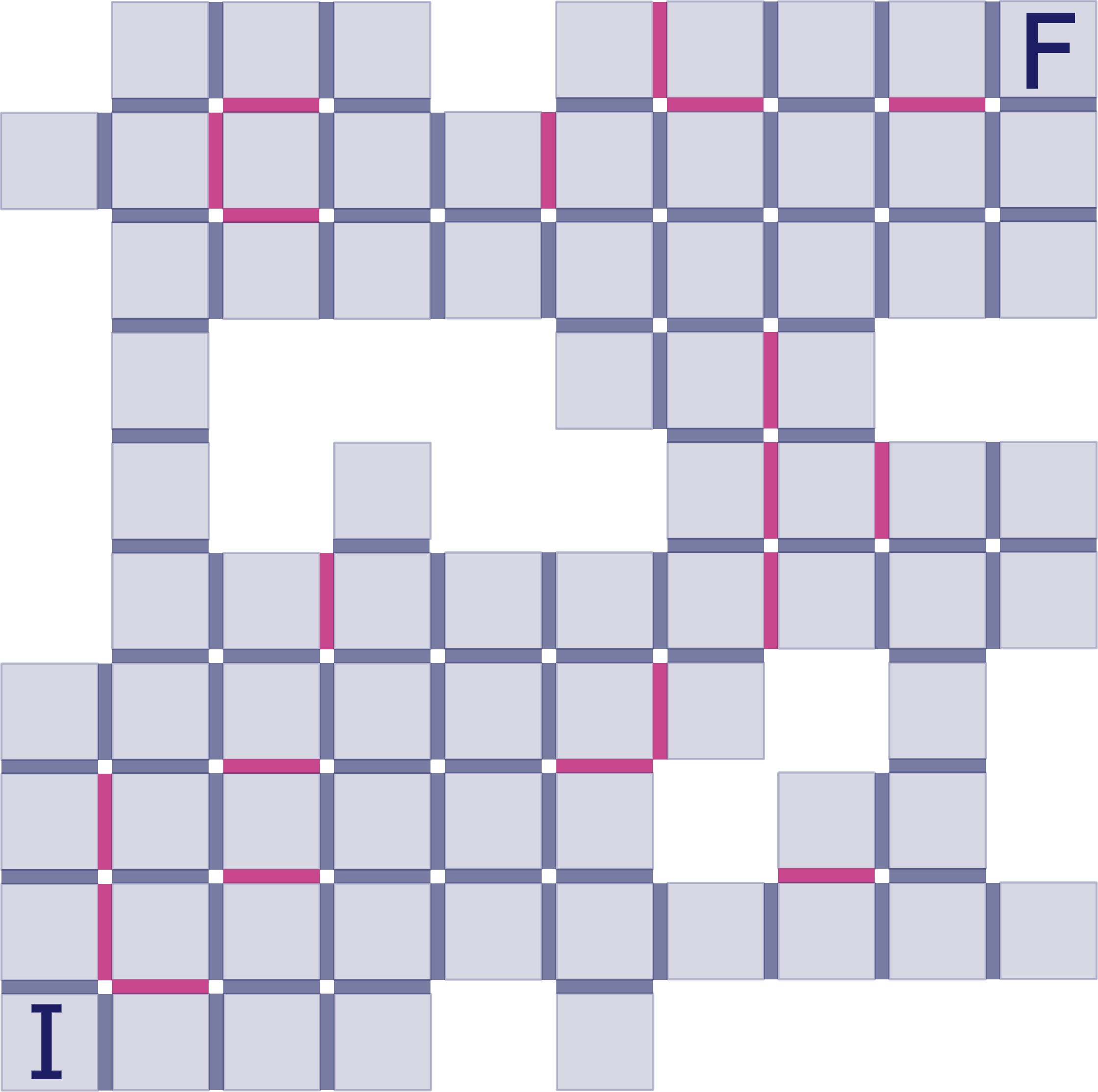}} 
\subfloat[$\delta$-pruning]{\label{logo4}
\includegraphics[width=0.19\textwidth]{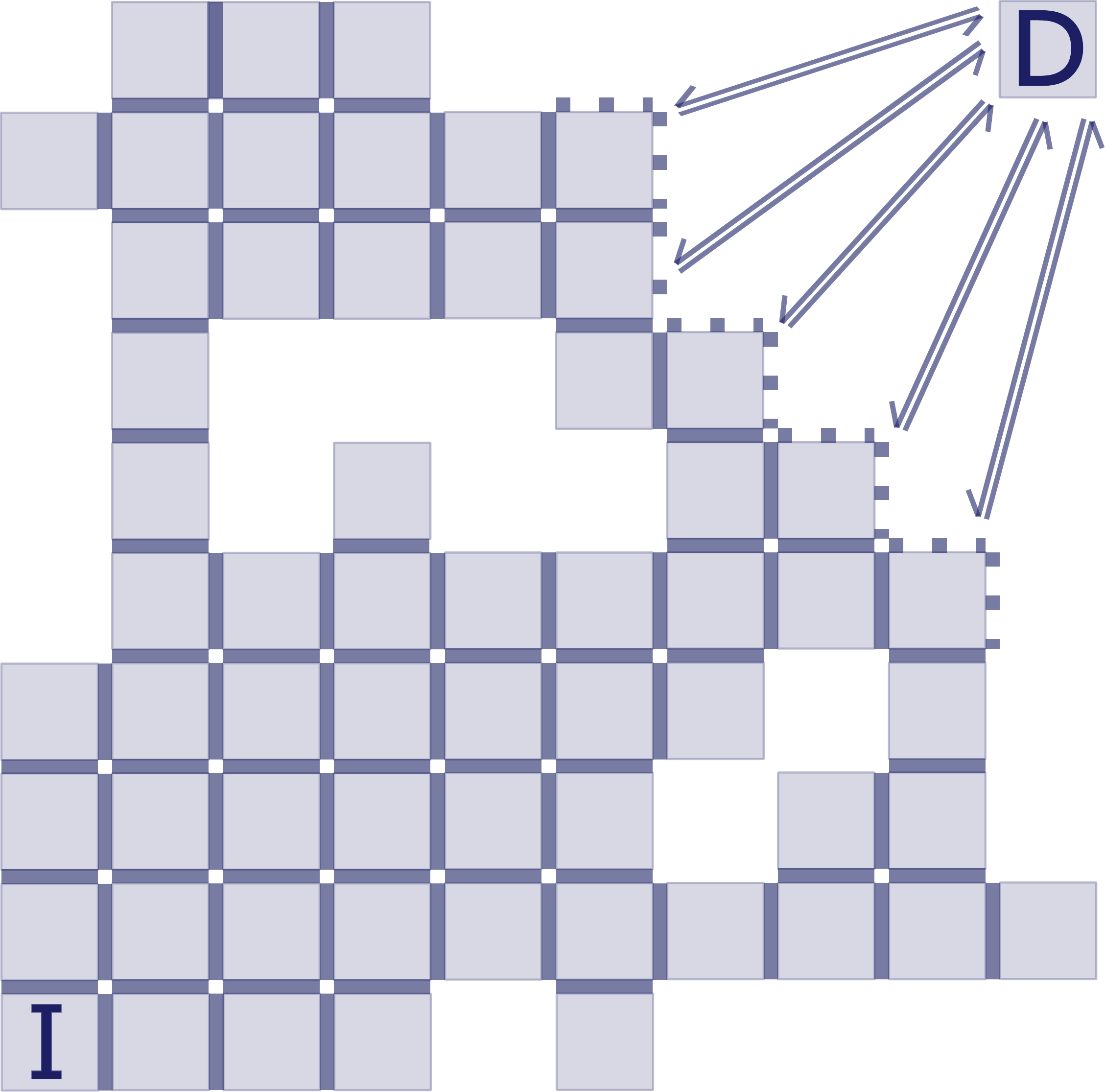}}
\\\subfloat[Updating perturbed parameters]{\label{logo5}   \includegraphics[width=0.19\textwidth]{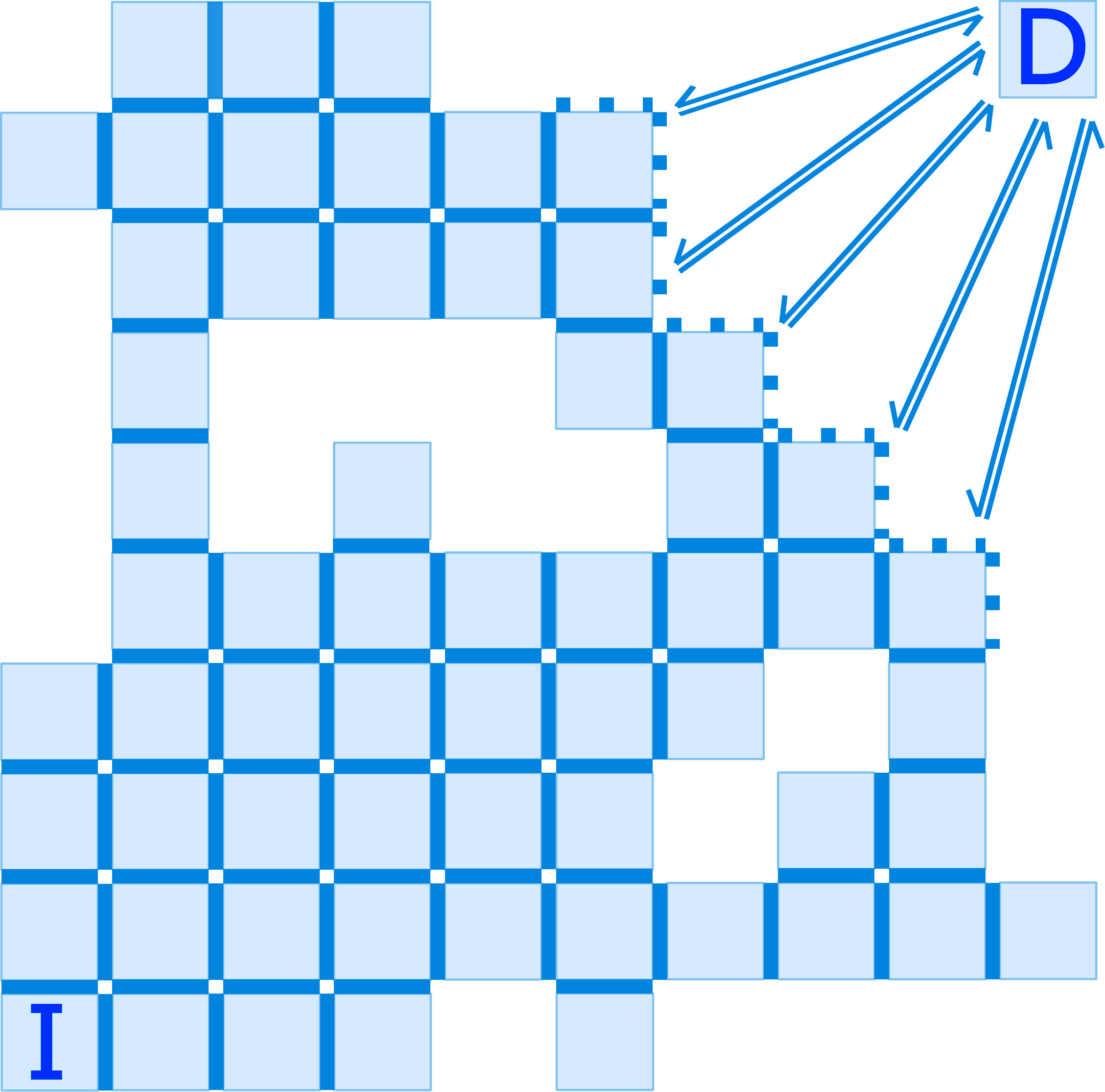}}\quad 
\subfloat[Parameter estimation]{\label{logo6} \includegraphics[width=0.24\textwidth]{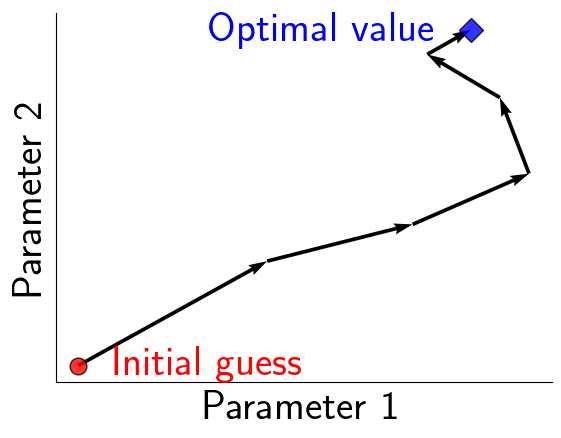}} \quad 
\subfloat[Obtain functionality]{ \label{logo7}\includegraphics[width=0.24\textwidth]{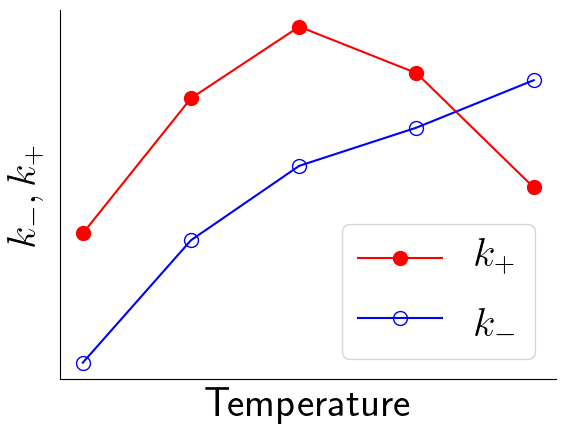}}
 \caption[]{The pathway elaboration method  and its applications. Pathway elaboration makes possible MFPT estimation of rare events and the rapid evaluation of perturbed parameters. Here, in the underling \reversible CTMC, boxes in a square grid represent states of the CTMC, with  transitions between adjacent boxes, initial state I at bottom left and target state F at top right.  (\textbf{a}) From state I, sample paths that are biased towards the target state F. Three sampled paths are shown with  blue, pink and purple dotted lines.  (\textbf{b}) From each sampled state found in the previous step,  run short unbiased simulations to fill in the neighborhood. Simulations from two states are shown with green dashed lines. The green states and transitions are sampled.  (\textbf{c}) Include all missing transitions between the states that were sampled in steps \textbf{a} and \textbf{b}. The red transitions are included. (\textbf{d}) Prune states that are expected to reach the target state quickly  by redirecting their transitions into a new state.  (\textbf{e}) For perturbed model parameters, keep the topology of the truncated CTMC, but update the   transition rates.  (\textbf{f})  We can use  truncated CTMCs for  perturbed  parameters, such as to estimate model parameters or     (\textbf{g}) to predict forward $(k_+)$ and reverse $(k_-)$ reaction rate constants as temperature  changes.  } 
\label{logo}
\end{figure}
%Pruning: The states are converted to a single target state and only the forward transition into the states is used in the computation of  the MFPT. 

To evaluate pathway elaboration, we focus on predicting the kinetics  of nucleic acids.    We implement the method using   the    Multistrand kinetic simulator~\cite{schaeffer2013stochastic,schaeffer2015stochastic}. Multistrand provides a secondary-structure level model of the folding kinetics of  multiple interacting nucleic acid strands, with thermodynamic energies consistent with NUPACK~\cite{zadeh2011nupack} and a stochastic simulation method based on SSA.  The challenges of large state spaces, rare events, and handling perturbed parameters all arise for nucleic acid kinetics. Since the number of  secondary structures may  be  exponentially large  in the length of the strands, applying matrix equations  is infeasible. Also,  SSA often takes a long time to complete for  rare nucleic acid reactions.  Moreover,  the rapid evaluation of mildly perturbed parameters is required, for example to calibrate the underlying kinetic  model or to obtain a desired functionality (see Figures~\ref{logo6} and~\ref{logo7}).     We conduct computational experiments on a  dataset of 267   nucleic acid kinetics~\cite{bonnet1998kinetics,cisse2012rule,hata2017influence,zhang2018predicting,machinek2014programmable} (described in Section~\ref{dataset}).  The dataset consists of various types of reactions, such as  helix association and toehold-mediated three-way strand displacement, for which experimentally measured reaction rate constants vary over 8.6 orders of magnitude. We  partition the 267 reactions into two sets,  237 where SSA is feasible for MPFT estimation, i.e., completes within two weeks,  and the remaining 30 for which SSA is not feasible.

%  To evaluate pathway elaboration for the remaining 30 reactions that are not feasible to simulate with SSA and to evaluate pathway elaborations potential in rapidly evaluating perturbed parameters, we  further use pathway elaboration in  optimization of Multistrand kinetic parameters. 

In our  experiments, first in Section~\ref{exp-casestudy},  we conduct a case study and  use pathway elaboration to gain insight on the kinetics of two contrasting  reactions, one being a rare event. Then, in Section~\ref{exp-eval}, to evaluate the estimations of pathway elaboration,  {first}, we compare them with estimations obtained from SSA  for the 237 feasible reactions that were feasible  with SSA. We use SSA since obtaining MFPTs with matrix equations is not possible for many of these reactions and SSA  provides statistically correct trajectories.  We find that for the settings we use, the mean absolute error (MAE) of the $\log_{10}$ reaction rate constant (or equivalently the MAE of  the $\log_{10}$   MFPT) is $0.13$. This is a reasonable accuracy since the  $\log_{10}$  reaction constant predictions of  SSA  vary over $7.7$ orders of magnitude  (see Figure~\ref{kssavskpathwayfigure}). In our experiments, pathway elaboration is on average 5 times faster than SSA on these reactions.   Furthermore, to evaluate the estimations of pathway elaboration, we  build truncated CTMCs using simulations from SSA and transition path sampling (TPS)~\cite{bolhuis2002transition, singhal2004using,eidelson2012transition}.  In our experiments, the  MAE of pathway elaboration with SSA simulations compares well with the MAE of SSA-based truncated CTMCs with SSA simulations.  Moreover, the truncated CTMCs built with TPS have a larger MAE with SSA than pathway elaboration with SSA and the estimations have a larger variance.  Finally, in Section~\ref{section-parameterestimation}, we  use pathway elaboration to rapidly evaluate perturbed model parameters  during optimization of Multistrand kinetic parameters.   We use the same 237 reactions for training  the optimizer and the remaining 30 as our testing set. Using the optimized parameters, pathway elaboration estimates of reaction rate constants on our dataset are greatly improved over the estimates using non-optimized parameters. For the training set, the MAE of the $\log_{10}$ reaction rate constants of pathway elaboration with experimental measurements reduces from  $1.43$ to $0.46$, that is, a $26.9$-fold  error in the reaction rate constant reduces to a  $2.8$-fold error on average. The MAE over the 30 remaining reactions -- which involve rare events and have large state spaces -- reduces from  $1.13$ to  $0.64$, that is, a  $13.4$-fold  error  in the reaction rate constant reduces to  a  $4.3$-fold error on average.  On average for  these 30 reactions, pathway elaboration takes less than two days, whereas SSA  is  not feasible within two weeks. The entire optimization  and evaluation takes less than five days.

\begin{figure}[H]
    \centering
  \subfloat[Hairpin closing]{\label{hairpin-fig}  \includegraphics[width=0.18\textwidth,valign=m]{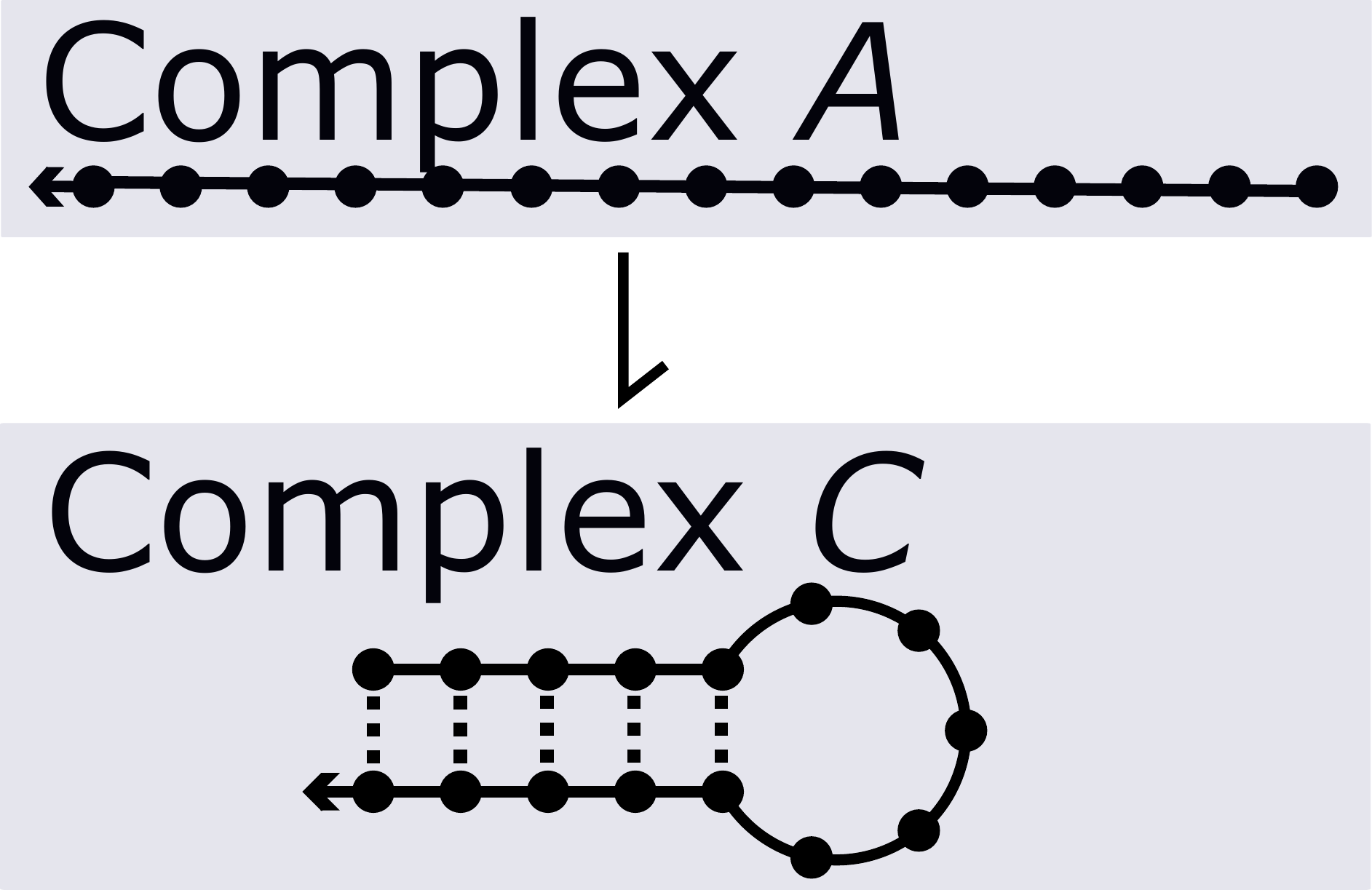}}\quad 
  \subfloat[Helix dissociation]{\label{helix-fig} 
   \includegraphics[width=0.19\textwidth,valign=m]{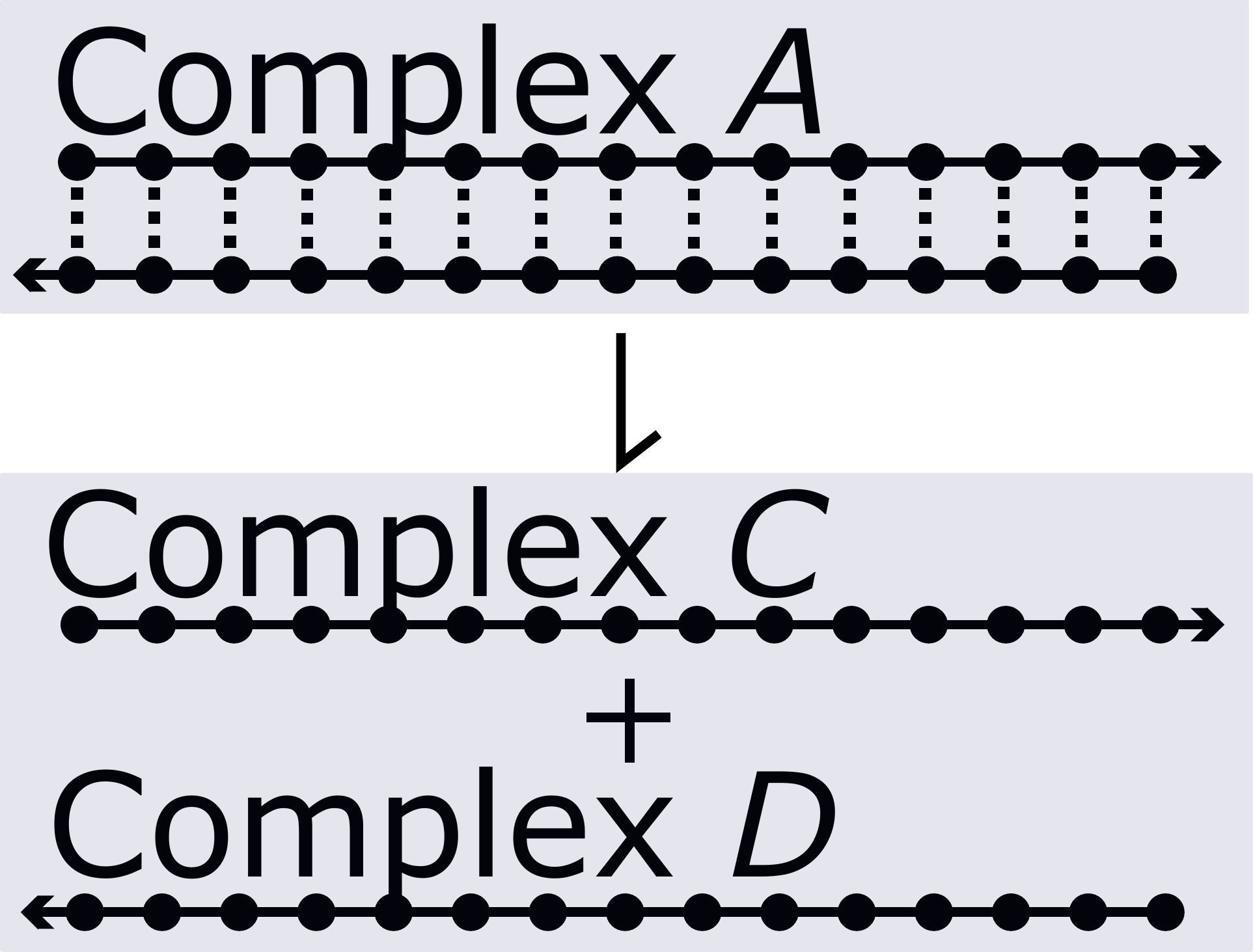}}\quad 
  \subfloat[Toehold-mediated three-way strand displacement]{\label{threewaystranddisplacement-fig}\includegraphics[width=0.52\textwidth,valign=m]{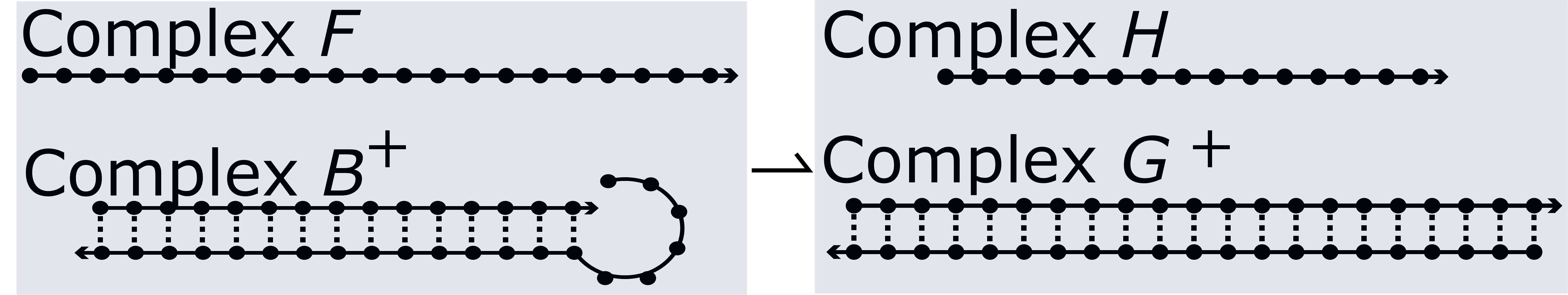}}
 \caption{  Examples of unimolecular and bimolecular interacting nucleic acid strand reactions. (\textbf{a}) Hairpin closing is a unimolecular reaction. It has  one reactant complex ($A$) and one product complex ($C$). The reverse reaction,  hairpin opening, is also  a unimolecular reaction. (\textbf{b}) Helix dissociation is a unimolecular reaction. It has one reactant complex ($A$)  and two product complexes ($C$ and $D$). The reverse reaction, helix association, is  a bimolecular reaction. (\textbf{c}) Toehold-mediated three-way strand displacement is  a bimolecular reaction. It has two reactant complexes ($B$ and $F$)  and two product complexes ($G$ and $H$). }
   \label{reactionexamples} 
\end{figure}

\section{Related Work} \label{relatedwork-section}
%~\cite{kuwahara2008efficient,shahabuddin1994importance,hajiaghayi2014efficient,doucet2009tutorial}
 There  exist  numerous Monte Carlo   techniques~\cite{rubino2009rare} 
 for driving simulations towards the target states or to reduce the variance of estimators. For example, importance sampling techniques~\cite{hajiaghayi2014efficient,doucet2009tutorial} use an auxiliary sampler to bias simulations, after which estimates are corrected  with  importance weights.     
  Moreover, many  accelerated  variants of SSA have been developed  for CTMC models of chemically reacting systems ~\cite{gillespie2007stochastic,cao2007adaptive,gillespie2001approximate}, which can be  adapted to simulate arbitrary CTMCs. %  in  chemically reacting models , states represent "counts of species". So to adapt these methods for arbitrary CTMCs, each state would be a vector of only one "1" and all others "0".
  There also exists  a proliferation of rare event simulation methods  for molecular dynamics~\cite{zuckerman2017weighted,allen2009forward,bolhuis2002transition}. The ideas behind these methods can  be adapted for CTMCs  and can  be used along with SSA for more efficient computations.  % For example, the weighted ensemble (WE)~\cite{zuckerman2017weighted,huber1996weighted} approach has been used with  SSA to estimate MFPTs~\cite{donovan2013efficient}. In brief, in this approach, a number of non-overlapping bins are defined and a  number of weighted  trajectories are  initiated.  Within each bin the number of trajectories is held constant at a target number.   At fixed time intervals, the trajectories are examined based on their location at that time. If a bin has fewer trajectories than the target number, then the  trajectories are replicated  so that the replicates carry an equal share of their parent's weight. Conversely, if a bin has more trajectories than the target number, then the trajectories are pruned down to the target number and their weights are redistributed to the remaining trajectories. 
  %~\cite{weinan2002string,cabriolu2017foundations,zuckerman2017weighted,allen2006simulating,allen2009forward,elber2017new,bolhuis2002transition}. 
 For example,  in transition path sampling (TPS)~\cite{bolhuis2002transition}  an ensemble of paths are generated using a Monte Carlo procedure. First, a single path is generated that connects the initial and target states. New paths are then generated by picking random  states along the current paths and running time-limited simulations from the states. Sampled states along paths that do not reach the initial or target states are rejected. Even though we could use TPS  along with SSA to simulate rare events for CTMCs~\cite{eidelson2012transition}, it is likely that many of the simulated paths require a long  simulation time. For example, if the energy landscape has more than one  local maximum between the initial and target states, then paths simulated from in between these local maxima could require a long simulation time to reach either the initial or the target states. Moreover,   the simulated paths  could be correlated and depend on the initial path, and therefore the estimations of different runs could have a high variance. The correlation of paths could be reduced by retaining a fraction of the paths but it would also reduce the computational efficiency.
  
  Stochastic simulations are usually not  reusable  for the rapid evaluation of perturbed parameters and have to be adapted.  This is because the holding times of simulated trajectories need to be updated, which requires that information about all transitions from each sampled state is also stored. With the pathway elaboration method we can rapidly evaluate perturbed parameters by updating transition rates of the truncated CTMC.  Stochastic simulation  methods have been to some extent adapted for the rapid evaluation of perturbed parameters.  SSA has  been adapted in the fixed path ensemble inference (FPEI)  approach~\cite{zolaktaf2019efficient} for parameter estimation.  In this approach,   an  ensemble of paths  are generated using SSA and are then compacted and reused for mildly perturbed parameters. To estimate MFPTs, a Monte Carlo approach is used based on expected holding times of states. Despite being useful  for parameter estimation in general, this method  is not suitable for rare events, because  the paths are generated  according to SSA. In Section~\ref{exp-othermethods}, we use SSA to build truncated CTMCs for MFPT estimation and we compare its performance with pathway elaboration.

An alternative to sampling methods is to develop a smaller CTMC, whose MFPT well approximates that of the original large CTMC model. As is the case with sampling methods, techniques that have been developed to approximate the continuous state spaces of molecular dynamics simulations can be adapted for this purpose. In the context of predicting protein folding kinetics, the collection of paths produced by TPS has been used to build a so-called Markovian state model (MSM)~\cite{singhal2004using}. The MSM is the CTMC obtained by including all states and transitions along the sampled paths; since each state appears once, the MSM is more compact then the underlying set of paths.
The MSM approach can easily be adapted to build approximations to large CTMC models, for the purpose of estimating MFPTs and other properties of the CTMC.  The resulting MSM is a truncated CTMC. That is, it contains a subset of the states of the original CTMC, with transitions between states that are adjacent in the original CTMC.  Our pathway elaboration approach is similar to this approach in that sampled paths are used to build a truncated CTMC.  However, in pathway elaboration, we have a state elaboration step that helps model low-energy basins that might have a big influence on MFPTs in a complex landscape with multiple barriers and/or deep basins. Since the CTMCs in our application of predicting nucleic acid kinetics satisfy detailed balance, the initial and final states are still reachable from all states in the CTMC and so the MFPT is still finite when these additional states are included.
  In Section~\ref{exp-othermethods}, we use TPS  to build truncated CTMCs for MFPT estimation and we compare its performance with pathway elaboration.

Another important problem in CTMCs is computing  transient probabilities, that is the probability distribution of the states over time.   Transient probabilities can be computed exactly  with the master equation~\cite{van1992stochastic} for CTMCs that have a  feasible state space size.   An important tool that has been developed  to quantify the error of transient probability estimations for truncated CTMCs is  the finite state projection  (FSP) method~\cite{munsky2006finite}.  The FSP method  tells us that as the size of the state space of the  truncated CTMC grows, the approximation monotonically improves and provides upper and lower bounds on the true transient probabilities. In Section~\ref{section-method}, we show how we can adapt the FSP method to quantify the error of MFPT estimates for truncated CTMCs. As the authors of the FSP method mention, there are many ways to  grow the state space, for example by iteratively adding states that are reachable from the already-included states within  a fixed number of steps.   There have been many attempts to  enumerate  a suitable set of states that provides   good approximations while being small enough that transient probabilities can be computed efficiently~\cite{dinh2016understanding}. In the Krylov-FSP-SSA approach~\cite{sidje2015solving} an SSA approach is used to drive the FSP and  adaptive Krylov methods are  used   to efficiently evaluate the matrix exponential for transient probability estimation.  In brief, the method starts from an initial state space and proceeds iteratively in three steps. First, it drops  states that have become improbable.  Second, it runs SSA  from each state of the  remaining state space to incorporate probable states. Third, it adds states that are reachable within a fixed number of steps.  Despite its great potential, this way of building the state space  may not be suitable for estimating MFPTs of rare events. The pathway elaboration method is similar to this approach in the sense that it uses SSA in the state elaboration step. However, the pathway elaboration method  uses biased simulations to reach target states efficiently.

The Krylov-FSP-SSA method has also been used to build truncated CTMCs for the purpose of optimizing parameter sets that are used for transient probability estimation~\cite{dinh2017application}.  Moreover, in related work~\cite{georgoulas2017unbiased}, an ensemble of truncated CTMCs is used  to obtain an unbiased estimator of transient probabilities, which are  further used for Bayesian inference.

A probabilistic roadmap is another type of graph-based model, related to our work~\cite{kavraki1996probabilistic,tang2005using}. States in a probabilistic roadmap can be selected by random sampling or according to relevant properties, such as having low free energy. Then edges are added to connect nearby (though not necessarily adjacent) states. However, there are some challenges with this method that make it unsuitable for our purposes. First, it is not clear that sampling methods based on state (as opposed to path) properties will include important states on the most likely folding trajectories from initial to target states. Another challenge is determining appropriate transition rates between states that are not adjacent in the CTMC model. Instead, we rely on biased sampling of paths  from initial to target states.

%changing the footnote ruler (should be here not in paper.txt because of the  ruler in the first page)
\renewcommand{\footnoterule}{
  \kern -3pt
  \hrule width 0.48\textwidth height 0.4pt
  \kern 2pt
}

\section{Background}
In this section, we first describe the  continuous-time Markov chain (CTMC) model to which our pathway elaboration method (Section~\ref{section-method}) applies and also provide related definitions. Then  we  explain how nucleic acid kinetics can be modeled using CTMCs with the Multistrand kinetic model~\cite{schaeffer2013stochastic,schaeffer2015stochastic} which is background for our experiments (Section~\ref{results}).

   \textbf{Continuous-time Markov chain (CTMC).}\label{ctmcDef}  We indicate a CTMC as a tuple $\model=(\statespace, \tmat,  \initialpi_0, \statesfinal$), where $\statespace$ is a countable set of states,  
$\tmat: \statespace \times \statespace \rightarrow \mathbb{R}_{\geq 0}$ is the rate matrix and $\tmat(s,s) = 0$ for $s\in\statespace$,  $\initialpi_{0} : \statespace \rightarrow  [0,1]$ is the initial state distribution in which 
$	
\sum_{\state \in \statespace}\initialpi_{0}(s)=1
$, and  $\statesfinal$ is the set of target states.   We define the set of initial states  as    $
 \statesinitial = \left\{ \state \in \statespace~ \middle|~ \initialpi_{0}(\state) \neq  0 \right\}.$    For CTMCs considered here, $\statesfinal \cap \statesinitial = \emptyset$.   A transition between states $s,s'\in \statespace$ can occur only if $\tmat(s,s')>0$. The probability of moving from state $s$ to state $s'$ is defined by the transition probability matrix ${\pmat}: \statespace \times \statespace \rightarrow  [0,1]$ where  \begin{equation}\label{pmateq}\pmat(s,s')  = \frac{\tmat(s,s')}{\mathbf{E}(s,s)}.\end{equation} Here  $\mathbf{E}: \statespace \times \statespace \rightarrow \mathbb{R}_{\geq 0}$ is a diagonal  matrix in which $\mathbf{E}(s,s) =\sum_{s'\in \statespace}\tmat(s,s')$ is the exit rate.  The time  spent in state $s$ before a transition is triggered  is  exponentially distributed with exit rate $\mathbf{E}(s,s)$.
The generating matrix $\mathbf{Q}: \statespace \times \statespace \rightarrow \mathbb{R}$ is  $\mathbf{Q}=\tmat-\mathbf{E}$.

\textbf{\Reversible CTMC.}
In a \reversible CTMC \reversibleCTMC,  also known as a reversible CTMC,   a probability distribution $\initialpi : \statespace \rightarrow  [0,1]$  over the states exists that satisfies the detailed balance condition   $
    \initialpi(s)\tmat(s,s')= \initialpi(s')\tmat(s',s)$ 
for all  $s, s' \in \statespace$.   The detailed balance condition is a sufficient condition for ensuring that $\initialpi$ is a stationary distribution ($\initialpi\pmat =  \initialpi$).   For a \reversible finite-state CTMC, $\initialpi$  is the unique stationary distribution of the chain and is also the unique equilibrium distribution~\cite{whitt2006continuous}.

\textbf{Boltzmann distribution.}  In many Markov models of physical systems, eventually  the population of states will  stabilize and reach a Boltzmann distribution~\cite{schaeffer2015stochastic,flamm2000rna,tang2010techniques} at equilibrium. With this distribution, the probability that a system is in a state $s$ is 
\begin{equation}
    \initialpi(s)= \frac{1}{Z}e^{-\frac{E(s)}{k_BT}},
\end{equation}
where   $E(s)$ is the energy of the system at state $s$, $T$ is the temperature, $k_B$ is the Boltzmann constant, and  $Z= \sum_{s \in \statespace} e^{-\frac{E(s)}{k_BT}}$ is the partition function. To ensure that   at equilibrium  states are Boltzmann distributed,  the detailed balance conditions are 
 \begin{equation}
     \frac{  \tmat (\state,\state')}{  \tmat (\state',\state)} = e^{-\frac{E(\state')  -E(\state)}{K_BT}}. 
   \label{detailedbalanceenergy}
 \end{equation}

 \textbf{Reversible transition.} In this work, a reversible transition between states $s$ and $s'$ means  $\tmat(s,s') >0$ if and only if $\tmat(s',s)>0$.%, consistent with the literature of chemical reaction networks and nucleic acid kinetics.

\textbf{Trajectories and paths.}\label{pathsOverCTMC}
A trajectory $(s_0, t_0), ( s_1, t_1, ), ...,  ( s_m, t_m, ) $  with $m$ transitions over a CTMC \CTMC\ is a  sequence of states $s_{i}$ and holding times $t_{i}$ 
for which $\tmat(s_{i},s_{i+1})>0$ and $t_{i} \in \mathbb{R}_{> 0} $ for $i \geq 0$.
  We define a path $s_0, s_1, ..., s_m$ with $m$ transitions over a  CTMC \CTMC\ as a sequence of states $s_{i}$ 
for which $\tmat(s_{i},s_{i+1})>0$.

 \textbf{The stochastic simulation algorithm (SSA).}   SSA~\cite{gillespie1977exact,doob1942topics} simulates statistically correct trajectories  over a CTMC  \CTMC. At state $s_i$, the probability of sampling   $s_{i+1}$ is $\pmat(s_i,s_{i+1})$. At a jump from state $s_i$, it samples the holding time $T_i$ from an exponential distribution with exit rate $\mathbf{E}(s,s) =\sum_{s'\in \statespace}\tmat(s,s')$.

\textbf{Mean first passage time (MFPT).}\label{firstpassagetime}
 In a CTMC   \CTMC, for  a state $s \in \statespace$ and a target state $\stateFinal \in \statesFinal$,  the MFPT $\tau_s$ is the expected time to first  reach  $\stateFinal$ starting from state $s$.  For  state $s$, the  MFPT from $s$ to $\stateFinal$ equals the expected holding time in state $s$ plus the MFPT to $\stateFinal$ from the next visited state~\cite{suhov2008probability}, so
  \begin{align} \label{eqq} 
  \tau_s=  \frac{1}{\mathbf{E}(s,s) }+ \sum_{s' \in \statespace } \frac{\tmat(s,s')}{\mathbf{E}(s,s)}\tau_{s'}.
 \end{align}
 Multiplying the equation by the exit rate  $\mathbf{E}(s,s)=\sum_{s'\in \statespace}\tmat(s,s')$ then yields
\begin{align}\label{fpt00}
\sum_{s' \in \statespace} \tmat(s,s') (\tau_{s'} - \tau_{s}) = -1   .
 \end{align}
  Now writing $\mathbf{t} : \statespace  \setminus \stateFinal \rightarrow \mathbb{R}_{\geq 0}$ to be the vector of MFPTs for each state, such that $\mathbf{t}[s] =\tau_{s}$, we find a matrix equation as \begin{equation}
  \tilde{\mathbf{Q}} \mathbf{t} = -\mathbf {1},
  \label{mfptequation}
  \end{equation} where $\tilde{\mathbf{Q}}$ is obtained from ${\mathbf{Q}}$ by eliminating the row and column corresponding to the target state, and $\mathbf{1}$ is a vector of ones.  If there exists a path from every state  to the final state $\stateFinal$, then $\tilde{\mathbf{Q}}$ is  a weakly chained diagonally dominant matrix   and   is non-singular~\cite{azimzadeh2016weakly}.    The MFPT from the initial states  to the target state $\stateFinal$ is found  as
\begin{align}\label{mfpt}
\expectt = \sum_{\state \in \statespace} \initialpi_{0}(\state) \tau_{s}.
\end{align} 

If instead of a single target state $\stateFinal$ we have a set of target states $\statesfinal$, then to compute the MFPT to $\statesfinal$ we  convert all target states into one state  $\stateFinal$ so that  $\statespace^*= \statespace \setminus \statesfinal \cup  \left\{\stateFinal\right\}$. For $s,s'\in  \statespace^* \setminus \left\{\stateFinal\right\}$, we update the rate matrix  $ \tmat^* : \statespace^*  \rightarrow \mathbb{R}_{\geq 0} $ by $\tmat^*(s,\stateFinal) = \sum_{s''\in\statesfinal} \tmat(s,s'') $, $\tmat^*(s,s') =\tmat(s,s')$,  and   $\tmat^*(\stateFinal,s)$  is not used in the computation of the MFPT (see \eqn~\ref{mfptequation}).

\textbf{Truncated  CTMC.} 
Let $\hat{\statespace} \subseteq \statespace$ be a subset of the states over the  CTMC \CTMC\  or \reversible CTMC \reversibleCTMC\ and let $\hatstatesfinal$ $\subseteq \hat{\statespace}$.  We construct the rate matrix  $\hat{\tmat}: \hat{\statespace} \times \hat{\statespace} \rightarrow \mathbb{R}_{\geq 0}$ as
\begin{equation}\hat{\tmat}(\state, \state') = \tmat(\state,\state').
\label{reconstructratematrix} \end{equation} We  construct the initial probability distribution  $\hat{\initialpi}_0 : \hat{\statespace} \rightarrow [0,1]$ as \begin{equation}
    \hat{\initialpi}_0(s) = \frac{\initialpi_{0}(s)}{\sum_{s\in \hat{\statespace} } \initialpi_{0} (s)}.
    \label{reconstructinitialdistribution}
\end{equation}
 We define the truncated  CTMC as \hCTMC\  and  \hreversibleCTMC\ for  $\model$ and  $\reversiblemodel$, respectively.   For a \reversible $\hatreversiblemodel$,  $\hat{\initialpi} : \hat{\statespace} \rightarrow [0,1]$ defined as \begin{equation}
    \hat{\initialpi}(s) = \frac{\initialpi(s)}{\sum_{s\in \hat{\statespace} } \initialpi (s)}, 
    \label{reconstructequilibrium}
\end{equation}
satisfies the detailed balance conditions in $\hatreversiblemodel$ and is   the unique equilibrium distribution of   $\hat{\statespace}$  in $\hatreversiblemodel$~\cite{whitt2006continuous}.

\subsection{The Multistrand  Kinetic Model of Interacting Nucleic Acid Strands}\label{preliminaries-multistrand}
Here we provide background for our experiments in Section~\ref{results}.   We describe  how the   Multistrand kinetic simulator \cite{schaeffer2013stochastic,schaeffer2015stochastic}   models the  kinetics  of multiple interacting nucleic acid  strands  as CTMCs and how it estimates  reaction rate constants from MFPT estimates for these reactions.

\textbf{Interacting nucleic acid strands (reactions)}. 
Following Multistrand~\cite{schaeffer2013stochastic,schaeffer2015stochastic}, we are interested in modeling the interactions of nucleic acid strands in a stochastic regime.  In this regime, we have a discrete number of nucleic acid strands (a set called $\setofstrands$) in a fixed volume $V$ (the ``box'') and under fixed conditions, such as the temperature $T$ and the concentration of Na$^+$ and Mg$^{2+}$ cations. This regime can be found in systems that have a small volume with a fixed count of each molecule, and  can also be applied to larger volumes when the system is well mixed. Moreover, it can  be used to derive reaction rate constants of reactions in a chemical reaction network that follows mass-action kinetics~\cite{schaeffer2013stochastic,schaeffer2015stochastic}.

Following Multistrand~\cite{schaeffer2013stochastic,schaeffer2015stochastic}, a complex is a subset of strands of $\setofstrands$ that are connected through base pairing (see Figure~\ref{reactionexamples}).   A complex microstate is the  complex base pairs, that is secondary structure. 
A system microstate is  a set of complex microstates, such that each strand $\psi \in \setofstrands$ is part of exactly one complex.
A unimolecular reaction with reaction rate constant $k_1$  (units s$^{-1}$) has the form
\begin{align}\label{reactionsuni}
 A 	\xrightarrow{k_{1}}  C + D,	
\end{align}
and a bimolecular reaction with reaction rate constant $k_2$ (units  M$^{-1}$s$^{-1}$) has the form
\begin{align}\label{reactionsbi}
 B+ F 	\xrightarrow{k_{2}}  G + H.
\end{align}
Each reactant and product is a complex; $A$, $B$, $C$ and $G$ are nonempty
but $D$ and $H$ may be empty complexes.  For example, hairpin closing (Figure~\ref{hairpin-fig}) is a
unimolecular reaction involving one strand, where complexes $A$ and $C$ are
comprised of this one strand, while $D$ is empty. Helix dissociation (Figure~\ref{helix-fig}) is an
example of a unimolecular reaction where complex $A$ has two strands while
$C$ and $D$ are each of one of these strands.  An example of a bimolecular
reaction with two reactants and two non-empty products is  toehold-mediated three-way strand
displacement (Figure~\ref{threewaystranddisplacement-fig}). We discuss these type of reactions further in Section~\ref{dataset}.  We are interested in computing the reaction rate constants of such reactions.

\textbf{The Multistrand kinetic model.}\label{multistrandmodel} 
  Multistrand~\cite{schaeffer2013stochastic,schaeffer2015stochastic} is a kinetic simulator  for analyzing the folding kinetics of multiple interacting nucleic acid strands. It can handle both a system of  DNA strands and a system of RNA strands\footnote{Currently, Multistrand does not handle a system of mixed DNA and RNA strands, though it can be extended to handle such systems using  good thermodynamic parameters.}. 
 The Multistrand kinetic model is a  \reversible CTMC   \reversibleCTMC\ for a set of interacting nucleic acid strands $\setofstrands$  in a  fixed volume   $V$ (the ``box'') and under fixed  conditions, such as the temperature $T$ and the concentration of Na$^+$ and Mg$^{2+}$ cations.    The  state space  $\statespace$  of the CTMC is the set of all non-pseudoknotted system microstates\footnote{ A pseudoknotted secondary structure  has at least two base pairs in which one nucleotide of a base pair is intercalated between the two nucleotides of the other base pair. A non-pseudoknotted system microstates does not contain any pseudoknotted secondary structures. Currently, Multistrand  excludes  pseudoknotted secondary structures due to computationally difficult energy model calculations.}   of the set $\setofstrands$ of interacting strands.    The transition rate $\tmat(s,s')$ is non-zero if and only if $s$ and $s'$ differ
by a single base pair\footnote{Multistrand allows Watson-Crick base pairs to form, that is A-T and G-C in  DNA  and A-U and G-C in RNA. Additionally, it provides an option to  allow G-T in DNA and G-U in RNA.}.   Multistrand distinguishes between unimolecular transitions, in which the number of strands in each complex remains constant, and bimolecular transitions where this is not the case.  There are bimolecular join moves, where two complexes merge, and bimolecular break moves, where a complex falls apart and releases two separate complexes.   

The transition rates in the Multistrand kinetic model obey detailed balance as
\begin{equation}
          \frac{  \tmat (\state,\state')}{  \tmat (\state',\state)} = e^{-\frac{\dGBox(\state')  -\dGBox(\state)}{RT}},
\end{equation}
where $\dGBox(\state)$ is the free energy of state $\state$  (units: $\mbox{kcal}\mbox{ mol}^{-1}$) and   depends on the temperature  $T$ (units: $K$) as  $\Delta G = \Delta H - T \Delta S$,  and  $R \approx  1.98 \times 10^{-3} \mbox{ kcal} \mbox{ K}^{-1} \mbox{ mol}^{-1}$ is the gas constant. The   enthalpy $\Delta H$ and entropy $\Delta S$ are fixed and calculated in the model using thermodynamic models that depend on  the concentration of Na$^+$ and Mg$^{2+}$ cations and also  on a volume-dependent entropy term.     The detailed balance condition  determines the ratio of rates for reversible transitions.  A standard kinetic model that is used  in Multistrand to determine the transition rates is the Metropolis kinetic model~\cite{metropolis1953equation},  where all energetically favourable transitions occur at the same fixed rate and energetically unfavourable transitions scale with the difference in free energy. Unimolecular transition rates are given as
\begin{equation}
    \tmat (\state,\state') = 
  \begin{cases}
    \kUni 											&   \text{if}  \dGBox(\state) <  \dGBox(\state'),	\\
    \kUni	e ^{  - \frac{ \dGBox(\state') - \dGBox(\state)}{ RT} }	       	& \text{ otherwise,}
  \end{cases} 
  \label{metropolis} 
  \end{equation} 
and bimolecular transition rates  are given as 
 \begin{equation}
 \tmat (\state,\state') = 
  \begin{cases}
    \kBi 	u										&  \text{join move,	}		\\
    \kBi   e^{ - \frac{ \dGBox(\state') - \dGBox(\state) + \dGVol }{ RT}   }\times \text{M} 		       	&\text{break move,}
  \end{cases}\label{metropolis}
 \end{equation}
  where $u$ is the concentration of the strands (units: M), $\dGVol=-RT \ln u$, $\kUni > 0 $ is the unimolecular rate constant  (units:  $\mbox{s}^{-1}$),  and $ \kBi > 0 $   is the bimolecular rate constant (units: $\mbox{M}^{-1} \mbox{ s}^{-1}$). The kinetic parameters $\theta = \{\kUni, \kBi \}$   are calibrated from experimental measurements~\cite{wetmur1968kinetics,morrison1993sensitive}.

 The  distribution $\initialpi_0$ is an initial distribution over the microstates of
the reactant complexes, and the set $\statesfinal$ is a subset of the
microstates of the product complexes, which we determine  based on the type of the reaction (see Section~\ref{dataset}).
   To set $\initialpi_0$ for unimolecular reactions,  we use  particular complex microstates.  One illustrative example is the
(unimolecular) hairpin closing reaction, where we set $\initialpi_0(h) = 1$
for the system microstate that has no base pairs and $\initialpi_0(s) = 0$ for all other structures, and
$\statesfinal$ is the system microstate where the 
%$^{\dagger\dagger}$ For unimolecular  reactions (datasets No.  \numrange[range-phrase = --]{1}{3}), $k$ has units s$^{-1}$.   For bimolecular reactions  (datasets No.  \numrange[range-phrase = --]{4}{8}), $k$ has units  M$^{-1}$s$^{-1}$strand has a fully formed duplex and a loop. 
 For a bimolecular reaction, when  the  bimolecular transitions are slow enough between the two complexes,  it is valid to  assume the  complexes each reach equilibrium before bimolecular transitions occur and therefore are Boltzmann distributed~\cite{schaeffer2013stochastic}. % page 77 of Joseph's thesis.  
Let $\mathcal{CM}$ be the set of all possible complex microstates of a complex $B$ in a volume.
A distribution $\initialpi_{b}$ is Boltzmann distributed with respect to complex $B$ if and only if
\begin{align}
     \initialpi_{b}(c') = \frac{e^{-\dG(c') / RT } }{ \sum_{c \in \mathcal{CM}} e^{-\dG(c) / RT }}
\end{align}
for all complex microstates $c' \in \mathcal{CM}$.  In a bimolecular reaction of the form in $\eqn$~\ref{reactionsbi}, for a system microstate $s$  that has  complex microstates $c$ and $c'$ corresponding  to complexes $B$ and $F$, we define the initial distribution as  $ \initialpi_{0}(s) =  \initialpi_{b}(c) \times \initialpi_{b}(c')$. For all other states, we define $  \initialpi_{0}(s) = 0 $.

Following the conventions of Multistrand~\cite{schaeffer2013stochastic}, we estimate the reaction rate constant   for a  reaction  from its MFPT $\expectt$ (\eqn~\ref{mfpt}). For a reaction in the form of \eqn~\ref{reactionsuni}, 
\begin{equation} 
  k_1= \frac{1}{\expectt}.
\label{rate-mfpt-unimolecular}
\end{equation}
In the limit of low concentrations for a reaction in the form of  \eqn~\ref{reactionsbi}, \begin{equation}
    k_{2} = \frac{1}{u} \frac{1}{\expectt}.
    \label{rate-mfpt-bimolecular}\end{equation}

\section{The Pathway Elaboration Method}
\label{section-method}
 We are interested in  efficiently estimating MFPT  of rare events in \reversible CTMCs and also the rapid evaluation of mildly perturbed parameters.    Our approach is to create a reusable in-memory representation of CTMCs, which we call a truncated CTMC,  and to compute the MFPTs through  matrix equations (\eqns~\ref{mfptequation} and~\ref{mfpt}).

 We propose the  \emph{pathway elaboration} method for building a truncated  detailed-balance CTMC $\hatreversiblemodel$ for a  detailed-balance  CTMC $\reversiblemodel$.   We call this approach the pathway elaboration method as we build a truncated CTMC by  elaborating an  ensemble of  prominent  paths in the system.    The method has three main steps  to build a truncated CTMC, and an additional step for the rapid evaluation of perturbed parameters. 
 \begin{enumerate} \item  The ``pathway construction'' step uses biased simulations to find an ensemble of short paths from the initial states to the target states. This step is inspired by importance sampling~\cite{madras2002lectures,rubino2009rare,andrieu2003introduction,hajiaghayi2014efficient} and exploration-exploitation trade-offs~\cite{sutton2018reinforcement}. \item  The  ``state elaboration'' step uses SSA from every state in the pathway to add additional states to the pathway,  with the intention of increasing accuracy. This step is inspired by the string method~\cite{weinan2002string}. \item   The ``transition construction'' step creates a matrix of transitions between every pair of states  obtained from the first and second steps.  \item The  ``$\delta$-pruning'' step prunes the CTMC obtained from the previous steps to facilitate the rapid evaluation of perturbed parameters. 
 \end{enumerate} 
These steps result in a  truncated detailed-balance CTMC \text{\hreversibleCTMC}. Figure~\ref{logo}, parts (\textbf{a}) to (\textbf{d}), illustrates the key steps of the pathway elaboration method, and Algorithm~\ref{alg:pathwayelaboration} provides high-level pseudocode. We next describe these steps in  detail.\\
 \begin{algorithm}[t]
\DontPrintSemicolon
\SetKwFunction{Keys}{Keys} 
\SetKwFunction{genpath}{ConstructPathway($\model$,$N$,$\beta$,$\initialpi'$) }
\SetKwFunction{genpathreversible}{ConstructPathway($\reversiblemodel$,$N$,$\beta$,$\initialpi'$) }
\SetKwFunction{func}{PathwayElaboration($\reversiblemodel$,$N$,$\beta$,$K$,$\pathwaytime,\initialpi'$)}
\SetKwFunction{ssa}{ ElaborateState($s,\reversiblemodel,$K$,\pathwaytime$)}

\SetKwProg{myfunc}{Function}{}{} \myfunc{\func}
{  
\reversibleCTMCviceversa\;
${\statespace}_0  \gets$  \genpathreversible \; 
$\hat{\statespace} \gets \emptyset $ \;
\For{$s\in {\statespace}_0 $}
{

$\statespace' \gets \ssa$  \tcp*[l]{Run  SSA  $K$ times from $s$ with a time limit of $\kappa$ and return the visited states.} 
$\hat{\statespace} \gets  \hat{\statespace} \cup  \statespace' $
} 
$\hat{\tmat} \gets$  Construct rate matrix from  $\hat{\statespace}$  and $\tmat$    \tcp*[l]{$\eqn$~\ref{reconstructratematrix}.} 
%$\hat{\distr_0} \gets$  Reconstruct initial probability distribution  from  $\hat{\statespace}$  and $\distr_0$  \tcp*[l]{ $\eqn$~\ref{reconstructinitialdistribution}.} 
\Return \hreversibleCTMC \; \tcp*[l]{For  $\hat{\initialpi}_0 $ and $\hat{\initialpi}$, see $\eqn$~\ref{reconstructinitialdistribution} and $\eqn$~\ref{reconstructequilibrium}, respectively.}
}   
       
\SetKwProg{myfunc}{Function}{}{}
  \myfunc{\genpath}{  
\CTMCviceversa\;
$\statespace_0 \gets \emptyset $ \; 
\For{n = 1 to N}{
$\text{Sample } s \sim  \initialpi_{0}$ \; 
$\statespace_0 \gets   \statespace_0 \cup \{s\}$ \;
$\text{Sample } \statebias \sim  \initialpi'$ \; 
\For{t =1,2, ... }{ 
\lIf{$s = \statebias$}{break} 
$\text{Sample }  z \sim \text{Uniform}(0,1)$\; 
\If(\tcp*[h]{Bias simulations towards $\statebias$ using $\eqn$~\ref{modifiedinitialpathway2}.}){$z <  \beta$} {Sample   $s'|s \sim  \pmat(\cdot|X_{t-1}=s) $} 
 \Else{Sample $s'|s \sim \Breve{\pmat}_{\statebias}(\cdot|X_{t-1}=s) $         } 
    $\statespace_0 \gets \statespace_0 \cup s'$ \; 
    $s \gets s'$\;
} 

}
\Return $\statespace_0$
}
        \caption{The pathway elaboration method.  }
\label{alg:pathwayelaboration}
\end{algorithm}

%The distance is equal to the shortest path (number of transitions) from $s$ to $\statebias$.
\noindent \textbf{Pathway construction.} We construct a pathway by biasing $N$ SSA simulations towards the target states.  We bias a simulation by  using   the  shortest-path distance function $d: \statespace \times \statesFinal  \rightarrow \mathbb{R}_{\geq 0}$  from every state $s \in \statespace$ to a fixed target state $\statebias \in \statesfinal$~\cite{kuehlmann1999probabilistic,hajiaghayi2014efficient}.     For every biased path, we can  use a different  $\statebias$. Therefore, in general, we can sample $\statebias$ from a probability distribution $\initialpi'$ over the target states. Given $\statebias$, we use an exploitation-exploration trade-off approach.  At each transition, the process randomly chooses to either decrease the distance  to $\statebias$ or to explore the region based on the actual probability  matrix of the transitions. 

Let $\mathcal{D}_{\statebias}(s)$ be  the set of all  neighbors of $s$ whose  distance with  $\statebias$ is  one less than the distance of $s$ with  $\statebias$, and let $\pmat(s,s’)$ be as in \eqn~\ref{pmateq}. Instead of sampling states according to $\pmat$, we use $\Tilde{\pmat}: \statespace \times \statespace \rightarrow \mathbb{R}_{\geq 0} $  where 

  \begin{equation}
\Tilde{\pmat}(s,s')  =  \begin{cases}
      \pmat(s,s') =  \frac{\tmat(s, s')}{\sum_{s'' \in \statespace} \tmat(s, s'') } &  0\leq  z  \leq   \beta,\\
    \Breve{\pmat}_{\statebias}(s,s') = \frac{\tmat(s, s')\bm{1}\{ s' \in \mathcal{D}_{\statebias}(s)\}}{\sum_{s'' \in \statespace} \tmat(s, s'')  \bm{1}\{ s'' \in \mathcal{D}_{\statebias}(s)\}}   &    \beta < z \leq 1.
    \end{cases} 
    \label{modifiedinitialpathway2}
\end{equation}
Here $z$ is chosen uniformly at random from $[0,1]$, $\beta$ is a threshold, and $\bm{1}\{.\}$ is an indicator function that is equal to 1 if the condition is met and 0 otherwise.  When $\beta=1 $, then $\Tilde{\pmat}(s,s') = \pmat(s,s') $.

\begin{proposition}
Let $d_\text{max}$ be the maximum distance  from a  state in a CTMC to target state $\statebias$. Then when $0 \leq \beta < 1/2$, the expected length of a pathway that is sampled according to \eqn~\ref{modifiedinitialpathway2} is at most $\frac{d_\text{max} }{1-2\beta}$.\label{proposition1}
\end{proposition}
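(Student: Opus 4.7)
The plan is to run a drift/supermartingale argument on the shortest-path distance $D_t := d(s_t, \statebias)$, where $(s_t)_{t\ge 0}$ is the state sequence of the biased walk inside the inner loop of \texttt{ConstructPathway} and $\tau := \inf\{t \ge 0 : s_t = \statebias\}$ is the number of transitions until the target is reached. The goal is to show $\mathbb{E}[\tau] \le d_\text{max}/(1-2\beta)$, conditional on any choice of $s_0$ and $\statebias$; averaging over $s_0 \sim \initialpi_0$ and $\statebias \sim \initialpi'$ then preserves the bound.

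First I would establish a one-step drift inequality for any $s_t \ne \statebias$. With probability $1-\beta$ (the exploitation branch of \eqn~\ref{modifiedinitialpathway2}), the walk samples $s_{t+1} \in \mathcal{D}_{\statebias}(s_t)$, so by the definition of $\mathcal{D}_{\statebias}$ we have $D_{t+1} = D_t - 1$ deterministically. With probability $\beta$ (the exploration branch), $s_{t+1}$ is a CTMC-neighbor of $s_t$; since $d(\cdot,\statebias)$ is a shortest-path graph distance, adjacent states differ by at most $1$ in $d$-value, so $D_{t+1} - D_t \le 1$. Combining,
\[
\mathbb{E}[\,D_{t+1} - D_t \mid s_t\,] \;\le\; \beta\cdot 1 + (1-\beta)\cdot(-1) \;=\; -(1-2\beta).
\]
Here I tacitly use that $\mathcal{D}_{\statebias}(s_t)$ is nonempty whenever $s_t \ne \statebias$, which follows from the hypothesis $d_\text{max} < \infty$: $\statebias$ is reachable from every state, and along a shortest path at least one neighbor of $s_t$ strictly decreases $d(\cdot,\statebias)$.

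Second, I would form the process $M_t := D_{t \wedge \tau} + (1-2\beta)(t \wedge \tau)$. The drift inequality gives $\mathbb{E}[M_{t+1}\mid\mathcal{F}_t] \le M_t$ on $\{t < \tau\}$, with equality on $\{t \ge \tau\}$, so $\{M_t\}$ is a nonnegative supermartingale. Applying optional stopping to the bounded stopping time $\tau \wedge n$,
\[
(1-2\beta)\,\mathbb{E}[\tau \wedge n] \;\le\; \mathbb{E}[M_{\tau \wedge n}] \;\le\; \mathbb{E}[M_0] \;=\; D_0 \;\le\; d_\text{max}.
\]
This uniform bound on $\mathbb{E}[\tau \wedge n]$ in $n$ already shows $\tau$ is integrable (in particular almost surely finite), and monotone convergence lets me send $n \to \infty$ to conclude $\mathbb{E}[\tau] \le d_\text{max}/(1-2\beta)$.

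The main delicate point is the passage to the limit, which requires $\mathbb{P}(\tau < \infty) = 1$; as noted this is obtained for free from the uniform bound on $\mathbb{E}[\tau \wedge n]$ together with monotone convergence, so no separate geometric-tail argument is needed. The only graph-theoretic input is the triangle-inequality bound $|d(s',\statebias) - d(s,\statebias)| \le 1$ for CTMC-adjacent $s,s'$, which I would record as a one-line lemma; neither detailed balance nor any other structural property of $\reversiblemodel$ enters the argument, so the same bound in fact applies to any CTMC on which $d$ is defined.
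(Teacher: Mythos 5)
Your proof is correct, and it rests on the same underlying idea as the paper's --- tracking the shortest-path distance $D_t = d(s_t,\statebias)$ as a one-dimensional process that drifts toward $0$ at rate $1-2\beta$ --- but you finish it differently. The paper projects the biased path onto a one-dimensional random walk that moves toward $0$ with probability at least $1-\beta$ and away with probability at most $\beta$, and then cites the classical hitting-time bound $E(R,k)\le k/(1-2\beta)$ for biased walks (Feller XIV.2). You instead prove the bound from scratch: the one-step drift inequality $\expect[D_{t+1}-D_t\mid s_t]\le -(1-2\beta)$, the supermartingale $M_t = D_{t\wedge\tau}+(1-2\beta)(t\wedge\tau)$, optional stopping at $\tau\wedge n$, and monotone convergence. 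Your route buys three things the paper's compressed argument leaves implicit: (i) it handles ``lazy'' exploration steps cleanly, since a CTMC neighbor may sit at the \emph{same} distance to $\statebias$ (the triangle-inequality bound $|D_{t+1}-D_t|\le 1$ covers $0$ as well as $\pm 1$), whereas the paper's projection to a strictly $\pm1$ walk tacitly needs a stochastic-domination step; (ii) it establishes $\mathbb{P}(\tau<\infty)=1$ explicitly rather than assuming it; and (iii) it is self-contained and makes visible that neither detailed balance nor any other structure of $\reversiblemodel$ is used beyond reachability of $\statebias$ (ensuring $\mathcal{D}_{\statebias}(s)\neq\emptyset$, which the paper also invokes). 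What the paper's approach buys is brevity via the classical reference, plus the immediate corollary it actually wants, namely the expected truncated state-space size bound $\expect[|\hat{\statespace}|]\le N d_\text{max}/(1-2\beta)$, which follows from your bound on $\expect[\tau]$ exactly as in the paper by summing over the $N$ sampled (initial, target) pairs --- your closing remark about averaging over $s_0\sim\initialpi_0$ and $\statebias\sim\initialpi'$ already covers this.
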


\begin{proof}
 Based on  the  distance of states with  $\statebias$, we can project a biased path that is generated with \eqn~\ref{modifiedinitialpathway2} to a 1-dimensional random walk $R$, where coordinate $x=0$ corresponds to $\statebias$  and  coordinate $x >0 $ corresponds to  all states $s \neq \statebias$ with $d(s,\statebias)=x$.     From the definition of $\Tilde{\pmat}$  and  since  all states have a path to $\statebias$ by a transition to a neighbor state that decreases the distance by one,  at each step,  the random walk either takes one step closer to  $x=0$   with probability at least $1 - \beta$  or one step further from  $x=0$ with probability at most $\beta$.  If we let $E(R,k)$  denote the expected time for random walk $R$ to reach $0$ from $k$,  then we have that 
 when $0 \le \beta < 1/2$,
\begin{equation}
 E(R,k) \le  \frac{k}{1-2\beta},
\end{equation}
which follows from classical results on biased random walks---see Feller XIV.2 \cite{Feller-1968}.
Therefore, if $0 \leq \beta < 1/2$, the proposition  holds, and  the state space  built with $N$ biased paths from the initial state $\stateInit$ to a target state $\statebias$ has expected size
\begin{equation}\begin{split}
\expect[|\hat{\statespace}|]\leq \frac{N\cdot d(\stateInit, \statebias) }{1-2\beta} \leq \frac{N\cdot d_\text{max} }{1-2\beta}.
\end{split}
\label{eqsizebound}
\end{equation}
If for each biased path, the initial state is sampled from $\initialpi_0$ and the target state is sampled from $\initialpi'$, then we sum over the $N$ sampled (initial state, target state) pairs, and  the total expected state space size is still bounded by $\frac{N\cdot d_\text{max} }{1-2\beta}$.
\end{proof}

For efficient computations, we should compute the shortest-path distance efficiently.  For elementary step models of interacting nucleic acid strands,  we can compute $d(s,\statebias)$ by computing the minimum number of base pairs that need to be deleted or formed to convert $s$ to  $\statebias$.  Multistrand provides  a list of base pairings for every complex microstate in a  system microstate (state) and we can calculate the distance between  two states in a running time of O($b$), where $b$ is the number of bases in the strands.\\

\noindent \textbf{State elaboration.} By using \eqn~\ref{modifiedinitialpathway2}, a biased path could have a low probability of reaching a  state that has a high probability of being visited with SSA.  For example, in some helix association reactions~\cite{zhang2018predicting}, intra-strand base pairs are likely to form  before completing  hybridization. However, the corresponding states  do not lie on the shortest paths from the initial states to the target states.  % Also, fast transitions from states  in the pathway to other states could be missing.
Let $c$ be  the minimum number of transitions from $\stateInit$ that are required to  reach   $\staterd$ but which  increase the distance to $\statebias$.   Let the random walk $R$   be defined as the previous step.
Let $P_1$  denote the  probability of reaching $\statebias$ before reaching $\staterd$ for this random walk.    Following classical results on biased random walks~\cite{Feller-1968}, for $\beta \neq 1/2$,  
$
 P_1 \geq  \frac{(\frac{\beta}{1-\beta})^{c} -1}{(\frac{\beta}{1-\beta })^{d_{\statebias}(\stateInit)+c} -1 }.
$   In the extreme case if $\beta  = 0$, then $P_1=1$ and the probability of reaching $\staterd$  will be 0. 
 
 \iffalse
Not including states  that are likely to be visited with SSA could lead to inaccurate MFPT estimates. 
 For example, assume for three states $s$, $I$ and $F$,  there exists reversible transitions between $I$ and $F$ and between $I$ and $s$ but there are no transitions between $s$ and $F$. Let $\initialpi$ be a probability distribution over the states that satisfies the detailed balance conditions with the rate matrix $\tmat: \statespace \times \statespace \rightarrow \mathbb{R}_{\geq 0}$.  Assume $\initialpi(I)$  is low, whereas $\initialpi(F)=\initialpi(s)$ are high.  Also, assume $\tmat(I,s)=\tmat(I,F)$ is large, and that  $\tmat(s,I)=\tmat(F,I)$  is small.    Therefore, in the actual CTMC, starting from $I$ the process has an equal chance of transitioning  to either $s$ or $F$. Upon transition to $s$ it is expected to have a large holding time, which increases the  MFPT to $F$. However, if we use $\beta=0$ for generating a biased path from $I$ to $F$, it will never  reach $s$ and the MFPT to $F$ will be an  underestimate of the actual  MFPT.  
 \fi 
 \begin{figure}
\center
 \includegraphics[width=0.2\textwidth]{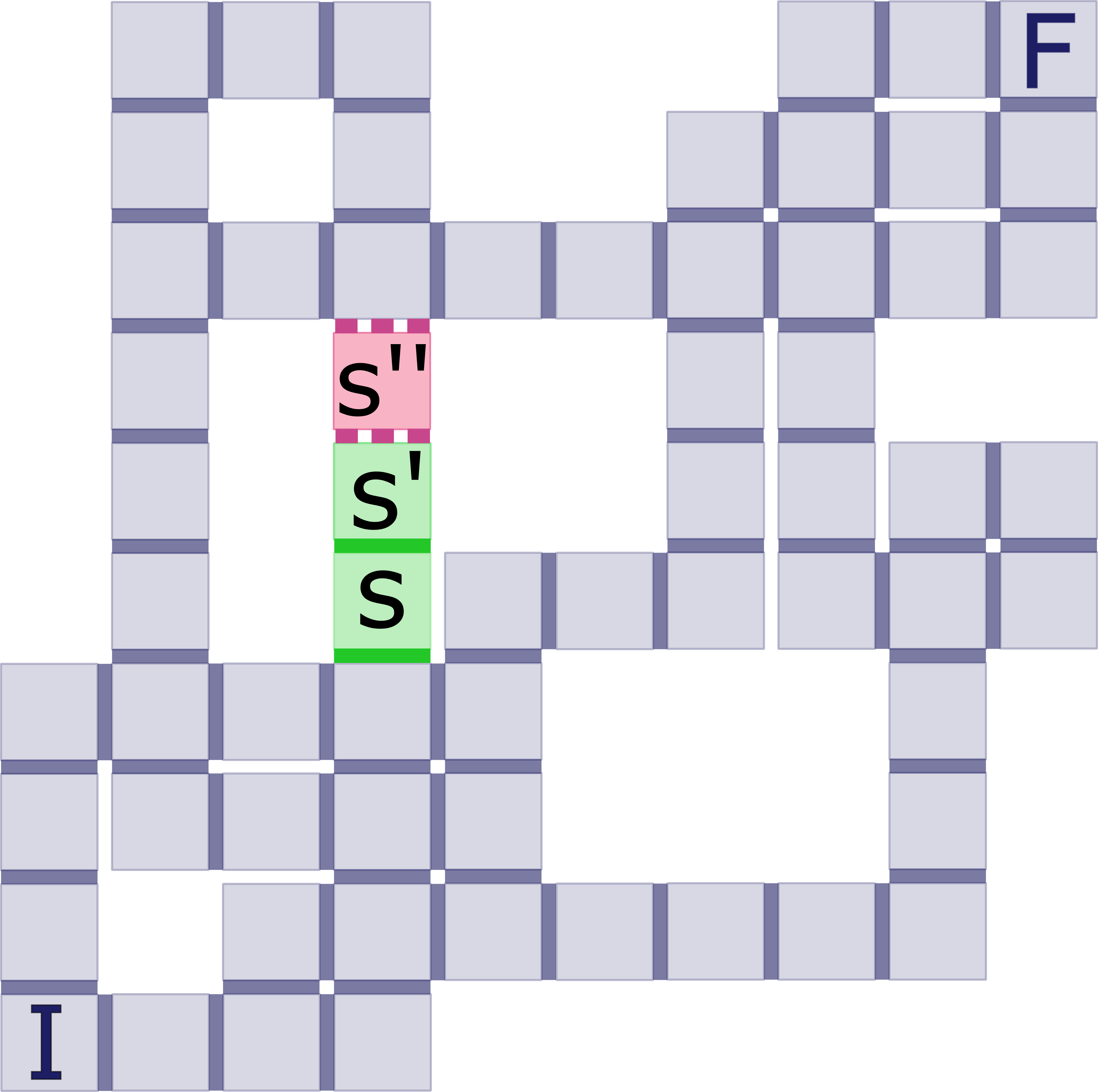}
\caption[]{In the elaboration step, the simulation finds $s$ and  $s'$  but not $s''$. Without detailed balance, a slow transition from $s'$  to $s$ could make the MFPT to $F$ large. However, in the full state space, $s'$  might quickly reach  $F$ via a fast transition to $s''$.}
\label{figdetailedbalance}
\end{figure}

Therefore, for \reversible CTMCs,   we  elaborate the pathway to  possibly include  states  that have a high probability of being visited with SSA but were not included with our  biased sampling. Here, we use SSA to elaborate the pathway;  we run $K$ simulations from each state of the pathway for  a maximum simulation time of $\pathwaytime$, meaning that a simulation stops as soon as the simulation time becomes greater than $\pathwaytime$.  By simulation time we mean the  time of a SSA trajectory, not the wall-clock time.  $K$  and $\pathwaytime$ are tuning parameters that affect the quality of  predictions.  The running time of elaborating the  states in the  pathway with this approach is  O($|\hat{\statespace}|K \pathwaytime k_\text{max}$), where $\hat{\statespace}$ is the state space of the pathway and  $k_\text{max}$ is the fastest rate in the pathway.  Alternatively, we could use a fixed number of transitions instead of a  fixed simulation time.  Another approach is to add all states that are within distance $r$ of every state of the pathway. However, with this approach, the size of the state space  could explode, whereas by using SSA the most probable states will be chosen.

Note that any elaboration  which stops before  hitting the target state might be problematic for  non-\reversible CTMCs.  Trajectories  that stop  while visiting a state for the first time might effectively be introducing a spurious sink into the enumerated state space. Without  reversibility that last transition of the elaboration might be   irreversible.  Sink states that are not a target state  make the MFPT to the target states infinite. For example  in Figure~\ref{figdetailedbalance},  assume   in the elaboration step, the simulation  finds $s$ and   $s'$  but not $s''$ (or any other neighbor of $s'$). Then without the  reversible transition, $s'$ will be a sink state and the MFPT to the target state  $F$ will be infinite. 
Moreover, having reversible transitions that do not obey the detailed balance condition may make MFPT estimations large.  For example, in Figure~\ref{figdetailedbalance} assume that the reversible transitions between $s$ and $s'$ do not obey detailed balance. Also, assume   $\pi(s)$ and $\pi(s')$  are both  high, and  that $\tmat(s,s')$ is large whereas $\tmat(s',s)$ is   small.  Therefore, if the elaboration stops at $s'$ it will make the MFPT large. However, in the full state space, $s'$  might quickly reach  $F$ through a fast transition to $s''$.   Thus,  the state elaboration step may not be suitable for non-\reversible CTMCs. \\

%However, having reversible transitions where the reverse transition rate and consequently transition probability is very small,  may still lead to the same problem of a spurious   sink state.     The detailed balance property  ($\eqn$~\ref{detailedbalanceequation}) further prohibits this effect; that is if during elaboration, $s'$ is last visited from $s$, then $\tmat(s',s)$ can be small compared to $\tmat(s,s')$ only if   $\initialpi(s)$ is small compared to $\initialpi(s')$. 

\noindent \textbf{Transition construction.} After the previous two steps,  fast transitions  between the states of the pathway  could still be missing.  To make computations more accurate, we further compute all possible  transitions in  $\hat{\statespace}$ that were not identified in the previous two steps.    In related roadmap planning work~\cite{kavraki1996probabilistic,tang2005using}, states are connected to their nearest neighbors as identified by a distance metric. We can include all missing transitions  by checking whether every two states in $\hat{\statespace}$ are neighbors in $\text{O}(|\hat{\statespace}|^2)$  or by checking for every state in $\hat{\statespace}$ whether its neighbors are also in $\hat{\statespace}$  in O($|\hat{\statespace}|m\}$), where $m$ is the maximum number of neighbors of the states in  the original CTMC.  \\
 %tang2005using,thomas2013rigidity,kavraki1996probabilistic

\noindent \textbf{$\delta$-pruning.} Given a  (truncated) CTMC in which  we can compute the MFPT from every state to the target  state, one question is: 
which states and transitions can be removed from the Markov chain without changing the  MFPT from the initial states significantly?
This question is especially relevant for the rapid evaluation of perturbed parameters, where MFPTs need to be recomputed often. 

 Given a CTMC  \CTMC\ and a pruning bound $\delta$,
  let 
the MFPT from any state  $s$ to $\statesfinal$ be $\tau_s$  and  let the  MFPT from the initial states  to 
$\statesfinal$ be $\expectt$. 
Let $\statespace_{\delta p}  = \left\{  \state \in \statespace~ \middle|~ \tau_s < \delta \expectt \mbox{ and } \pi_{0}(\state) = 0 	\right\}
$ be the set of states that are $\delta$-close to $\statesfinal$ and that are not an initial state.  We construct the $\delta$-pruned CTMC $\model_{\delta} = (\statespace_{\delta},\initialpi_{0}, \tmat_{\delta},\{s_d\})$   over the pruned set of states $\statespace_{\delta} = \statespace  \setminus \statespace_{\delta p} \cup \left\{ \state_{d} \right\} 
$, where $\state_{d}$ is the new target state. For $s, s' \in \statespace_{\delta}  \setminus   \left\{ \state_{d} \right\} $, we update  the rate matrix $ \tmat_{\delta} : \statespace_{\delta}  \rightarrow \mathbb{R}_{\geq 0} $   by  $ \tmat_{\delta}(s, s_d)    = \sum_{\state' \in \statespace_{\delta_p}}\tmat(\state, \state') $ and $ \tmat_{\delta}(s, s') =  \tmat(s,s')$. Note that  $\tmat_{\delta}(s_d,s)$ is not used in the computation of the MFPT  (\eqn~\ref{mfptequation}), so we can simply assume $\tmat_{\delta}(s_d,s)=0$.  Alternatively, to retain \reversible conditions, we can  define the  energy of $s_d$ as  $E(s_d) = -RT \log \sum_{s''\in \statespace_{\delta p}} e^{- \frac{E(s'')}{RT}} $ (see \eqns 7.1 and 7.2 from Schaeffer~\cite{schaeffer2013stochastic}) and  define 
$\tmat_{\delta}(s_d,s)= e^{- \frac{E(s)-E(s_d)}{RT}}\tmat_{\delta}(s,s_d)$.  For the pruned CTMC $\model_{\delta} = (\statespace_{\delta},\initialpi_{0}, \tmat_{\delta},\{s_d\})$, let the MFPT $ \expecttdelta$ be given as usual (\eqn~\ref{mfpt}). Then by construction 
\begin{equation}
 \expecttdelta  \leq \frac{\expectt}{1 + \delta}.
 \label{deltapruningbound}
\end{equation}

We can calculate the MFPT from every state to the target states by  solving  \eqn~\ref{mfptequation} once for CTMC $\model$. Therefore, the running time of $\delta$-pruning depends on the running time of the matrix equation solver that is used. For a CTMC with state space $\statespace$, the running time of a direct  solver  is at most O($|\statespace|^3$). For iterative solvers the running time is generally less than O($|\statespace|^3$).   After the equation is solved, the CTMC can be pruned in O($|\statespace|$) for any $\delta$.    
   Note that for  a given bound $\delta$,  the running time for solving \eqn~\ref{mfptequation}  for the  pruned CTMC $\model_{\delta}$ might still be  high. In that case, a larger value of $\delta$ is required. To set  $\delta$ in practice, it could be useful to consider the number of states that will be pruned for a given $\delta$, that is $|\statespace_{\delta p}|$.  %Note that after  solving \eqn~\ref{mfptequation} once for  CTMC $\model$,  we can find $|\statespace_{\delta p}|$ for any  $\delta$ in O($|\statespace|$).   
 \\
 % Recently, nearly-linear time algorithms have been developed for directed Laplacian systems, which are applicable to the generative matrices of CTMCs~\cite{cohen2018solving}.
 
\noindent \textbf{Updating perturbed parameters.} We are interested in rapidly estimating the MFPT to  target states  given mildly  perturbed parameters.  Our approach is to reuse a truncated CTMC for mild perturbations.  The MFPT estimates will be biased in this way. However,  we could have significant savings in running time by  avoiding the cost of  sampling and building truncated CTMCs  from scratch. We would still have to  solve \eqn~\ref{mfptequation}, but it could be negligible compared to the other costs.    For example in Table~\ref{kssavskpathwaytable}, on average, solving the matrix  equation is  faster than SSA by a factor of 47  and is faster than   building the truncated CTMC by a factor of 10. %  Moreover,  it might be possible to reduce the cost of solving matrix equations by  reusing calculations from previous equation~\cite{parks2006recycling,bunch1978updating,brand2006fast}. We do not take advantage of these methods for solving matrix equations, but we still obtain significant speed-ups  by reusing truncated CTMCs.

A perturbed thermodynamic model parameter affects the energy of the states. Therefore, to update the transition rates,  we would also  have to recompute the energy of the states.  A  perturbed kinetic model only affects the transition rates. A perturbed experimental condition could affect both the energy of the states and the transition rates.   Therefore, assuming the energy of a state can be updated in a constant time, the  truncated CTMC can be updated in  O($|\hat{\statespace}| + |\hat{\mathcal{E}}|$), where   $\hat{\mathcal{E}}$  is the  set of transitions of the truncated CTMC.      For nucleic acid kinetics  with elementary steps, the energy of a state can be computed from scratch in O($b$) time, 
  or in O($1$) time using  the energy calculations  of  a neighbor state~\cite{schaeffer2013stochastic}.\\

\noindent \textbf{Quantifying the error.} After we build truncated CTMCs, we need to quantify  the error of MFPT estimates when experimental  measurements are not available.  It would help us  set values for  $N$, $\beta$, $K$ and $\pathwaytime$ for  fixed model parameters, and also  evaluate when a truncated CTMC  has a high error for   perturbed model parameters.    For exponential decay processes, one possible approach is to adapt the finite state projection  FSP~\cite{munsky2006finite} method that is developed to quantify the error of truncated CTMCs for transient probabilities. We adapt it as follows. We  combine all target states into one single  absorbing state $\stateFinal$.  We  project all states that are not in the truncated CTMC into an absorbing  state $s_o$ and we redirect all transitions from the truncated CTMC to  states out of the CTMC into $s_o$.  Then we  use the standard matrix exponential equations to compute the full distribution on the state space at a given time. However,  we  only care about the probabilities that  $\stateFinal$ and $s_o$  are occupied.     We search to compute the half-completion time $t_{1/2}$ with bounds by
  %\begin{equation}    \label{eq-tmin}     t_\text{min} \quad \text{s.t. } p(\stateFinal \;;\; t_\text{min}) + p(s_o \; ; \;t_\text{min}) =  \frac{1}{2},     \end{equation}     and     \begin{equation}  t_\text{max} \quad  \text{s.t. } p(\stateFinal \; ;  \; t_\text{max}) = \frac{1}{2},		\label{eq-tmax}  \end{equation}
 \begin{equation}
  \begin{cases}
    t_\text{min} & \text{s.t. } p(\stateFinal \;;\; t_\text{min}) + p(s_o \; ; \;t_\text{min}) =  \frac{1}{2}, \\
    t_\text{max} &  \text{s.t. } p(\stateFinal \; ;  \; t_\text{max}) = \frac{1}{2},		\label{eq-tmax}
  \end{cases} 
  \end{equation} 
where $p(s \; ; \; t)$ is the probability that the process will be at state $s$ at time $t$ starting from the set of initial states.  Since $\stateFinal$ and $s_o$ are the only absorbing states, then $t_\text{min}$ exists and clearly  $t_\text{min} \leq t_{1/2}$. Based on FSP, $p(\stateFinal \; ;  \; t_\text{max})$ is an underestimate of the actual probability at time $t_\text{max}$, if it exists.  A possible way to determine if a solutions exists is to determine  the probability of reaching  state $\stateFinal$ compared to state $s_o$ from the initial states, which can be calculated by solving a system of linear equations (see \eqn~2.13 from \citet{metzner2009transition}).   If the probability is greater or equal to $\frac{1}{2}$ then a solutions exists.     If a solution does not exist for the given statespace, then based on FSP the error is guaranteed to decrease   by adding more states and we can eventually find  a solution to \eqn~\ref{eq-tmax}.   The search for $t_\text{max}$ can be completed with binary search.   Thus, the true $t_{1/2}$ is guaranteed to satisfy $t_\text{min}  \leq t_{1/2} \leq t_\text{max} $.   For exponential decay processes,  the  relation between the half-completion time and the MFPT  is~\cite{cohen1977quantum,simmons1972differential} %in cohen1977quantum see page 337-340 for mean lifetime (MFPT)
\begin{equation}
t_{1/2} = \frac{\text{ln}2}{\lambda} \text{ and  } \tau = \frac{1}{\lambda}\rightarrow \tau = \frac{t_{1/2}}{ \text{ln} 2}, 
\end{equation}
where $\lambda$ is the rate of the process. 
Thus,  $\frac{t_\text{min}}{\text{ln}2}  \leq \tau  \leq \frac{t_\text{max}}{\text{ln}2} $. 
%\footnote{For a  unimolecular reaction,  $\lambda = k_1$ (see  \eqn~\ref{rate-mfpt-unimolecular}). For a bimolecular reaction in the limit of low concentrations, $\lambda=[F]k_2$ (see  \eqn~\ref{rate-mfpt-bimolecular}).}.  

A drawback of this approach is that we might need a large number of states to find a solution to \eqn~\ref{eq-tmax},  which might make the master equation or the linear system solver infeasible in practice.  Efficiently quantifying the error of MFPT estimates in truncated CTMCs for exponential and non-exponential decay processes is beyond the  scope of this paper. It might be possible to use some other existing work~\cite{kuntz2019exit,backenkohler2019bounding}.

 %  In the  Discussion section,  we  discuss some practical  approaches to tune the parameters of  pathway elaboration.

\section{Dataset and Experiments for Interacting Nucleic Acid Strands}\label{results}
We implement  pathway elaboration  for  interacting nucleic acid strands on top of Multistrand~\cite{schaeffer2013stochastic,schaeffer2015stochastic} (see Section~\ref{preliminaries-multistrand} for related background).   Our framework and dataset are available at \url{https://github.com/DNA-and-Natural-Algorithms-Group/PathwayElaboration}.

Here, in Section~\ref{dataset},     we describe our dataset  of  nucleic acid kinetics. In Section~\ref{exp-setup}, we describe  our experimental setup that is common in our experiments. In Section~\ref{exp-casestudy}, we use pathway elaboration in a case study to gain insight on the kinetics of two contrasting  reactions. In Section~\ref{exp-eval}, first we evaluate estimations of pathway elaboration by comparing them with estimations of SSA. Then we   build truncated CTMCs using SSA and TPS on a subset of our dataset and compare their performance with pathway elaboration. After that, we show the effectiveness of the $\delta$-pruning step. Finally, in Section~\ref{section-parameterestimation}, we use pathway elaboration  for the rapid evaluation of perturbed parameters in parameter estimation.

\subsection{Dataset of Interacting Nucleic Acid Strands}
\label{dataset}
We conduct computational experiments on  interacting nucleic acid strands  (see Section~\ref{preliminaries-multistrand} for related background).   The speed at which nucleic acid strands interact is difficult to predict  and depends on reaction topology, strands' sequences, and experimental conditions.   The number of secondary structures  interacting nucleic strands may form is exponentially large in the length of the strands. Typical to these reactions are high energy barriers that prevent the reaction from completing, meaning that long periods of simulation time are required before successful reactions occur. Consider reactions that occur with rates lower than $10000 \perMolePerSecond$ such as 
three-way strand displacement at room temperature (see Table~\ref{sourceTable}). 
 These types of reactions are slow to simulate not because the simulator takes longer to generate trajectories for larger molecules,
 but the slowness is instead a result of the energy landscape: at low temperatures, duplexes simply are more stable, and require longer simulated time until their dissociation is observed. 
 \sh{ Predicting the kinetics of interacting nucleic acid strands is also difficult with classical machine learning methods and neural network models. For example, \citet{zhang2018predicting}  successfully predict hybridization rates with a weighted neighbour voting prediction algorithm and  \citet{angenent2020deep} successfully predict toehold switch function with neural networks.  However, despite the   accurate and fast computational prediction of these methods, to treat different type of reactions or to treat different initial and target states the models have to be  adapted. On the other hand, the  CTMC model of Multistrand can readily be applied to  unimolecular and bimolecular reactions. Moreover, the CTMC model could provide an unlimited number of  unexpected intermediate states.   In contrast, with the neural networks
models of \citet{angenent2020deep},  which use attention maps to  interpret intermediate states, the number is limited.
}
 \begin{table}\centering
\caption{ \label{sourceTable}  Summary of the dataset of 267 nucleic acid kinetics.   The  initial concentration of the reactants is denoted as $u$ and $k$ is the experimental reaction rate constant.   } 
\hspace*{0.3cm}\makebox[\linewidth][c]{
 \begin{tabular}{C{0.8cm}C{3.2cm} C{.9cm}C{0.8cm}C{1.4cm}C{0.9cm}C{1.2cm}C{1.2cm}}
  \hline
Dataset No. & Reaction type \& source$^{\dagger}$ 	 &  \# of reactions	& Mean \# of bases &  $[\text{Na}^{+}]$ (M)	&  T  (\degreeC)  & u (M) & $\logten {k}$  \\ \hline
 \end{tabular}}
\vspace{-0.1cm}\[
 \begin{turn}{90} $\mathcal{D}_\text{train}$\end{turn}
  {
  \left\{
 \begin{tabular}{C{0.8cm}C{3.2cm} C{.9cm}C{0.8cm}C{1.4cm}C{0.9cm}C{1.2cm}C{1.2cm}}
  \hline
     $1$ &  Hairpin   opening~\cite{bonnet1998kinetics}& $63$&$25$ &  \numrange[range-phrase = --]{0.15}{0.5}						&\numrange[range-phrase = --]{10}{49}	 &$ 1\times 10^{-8}$  &  \numrange[range-phrase = --]{1.4}{4.6} \\ \hline 
  $2$ &Hairpin closing~\cite{bonnet1998kinetics} &	 $62$ & $25$&  \numrange[range-phrase = --]{0.15}{0.5}							&\numrange[range-phrase = --]{10}{49} &	$ 1\times 10^{-8}$    & \numrange[range-phrase = --]{3.2}{4.8} \\
   \hline
   $3$ &Helix dissociation~\cite{cisse2012rule}&	 $39$ 		& $18$		& \numrange[range-phrase = --]{0.01}{0.2} &	 \numrange[range-phrase = --]{23}{37} & $ 1\times 10^{-8}$    & \numrange[range-phrase = --]{-1.2}{0.9}\\ \hline
		       	 	$4$ &
  		Helix association~\cite{hata2017influence} & $43$ &$46$	&  $0.195$				&  $25$ & $ 5\times 10^{-8}$ & \numrange[range-phrase = --]{4.0}{6.7}	\\  \hline
  		 $5$ &Helix association~\cite{zhang2018predicting}	& $20$ & $72$&$0.75$ &  	 \numrange[range-phrase = --]{37}{55} &  $1 \times 10^{-5}$ &  \numrange[range-phrase = --]{4.4}{7.4} \\ 
   \hline $6$& Toehold-mediated three-way strand displacement~\cite{machinek2014programmable}  & $10$& $102$
				& $0.05^{\dagger\dagger}$	& 				$23$ &  \numrange[range-phrase = --]{5e-9}{1e-8}&   \numrange[range-phrase = --]{5.3}{6.8}
					  \end{tabular}
\right.
   }
\]
  \vspace{-.4cm}
\[
 \begin{turn}{90} $\mathcal{D}_\text{test}$\end{turn}
  {
  \left\{
\begin{tabular}{C{0.8cm}C{3.2cm} C{.9cm}C{0.8cm}C{1.4cm}C{0.9cm}C{1.2cm}C{1.2cm}}
   \hline        	 	$7$ &
  		Helix association~\cite{hata2017influence} & $4$ &$46$	&  $0.195$				&  $25$ & $ 5\times 10^{-8}$ &  \numrange[range-phrase = --]{4.0}{5.0}  	\\  \hline
						 $8$&  Toehold-mediated three-way strand displacement~\cite{machinek2014programmable} &	 $26$& $ 100$
				& $0.05^{\dagger\dagger}$&23	&  \numrange[range-phrase = --]{5e-9}{1e-8} & \numrange[range-phrase = --]{2.7}{6.3} \\
						\hline

  \end{tabular}
\right.
   }
\]
\vspace{-1.2cm}
  \begin{minipage}{14cm}%
\footnotesize{
$^{\dagger}$ See Figure~\ref{reactionexamples}  for example figures of these reactions. \\
$^{\dagger\dagger}$ The  experiment was  performed without {$\text{Na}^{+}$} in the buffer. 
}
\vspace{1cm}
  \end{minipage}%
\end{table}

 %For example, the  weighted neighbour voting prediction algorithm has been developed to  successfully predict  hybridization rates  from sequence~\cite{zhang2018predicting} without enumerating the secondary structures of the reaction.  Neural networks~\cite{angenent2020deep} have also been utilized to successfully  predict toehold switch function~\cite{angenent2020deep}. 
 
  We curate a dataset of   267 interacting DNA strands from the  published literature, summarized  in Table~\ref{sourceTable}.   The reactions are annotated with the  temperature, the buffer condition, and the experimentally determined reaction rate constant.   
 The dataset  covers a wide range of slow and fast  unimolecular and bimolecular reactions where the reaction rate constants vary over 8.6 orders of magnitude. 
  For unimolecular reactions,  we consider  hairpin opening~\cite{bonnet1998kinetics}, hairpin closing~\cite{bonnet1998kinetics}, and helix dissociation~\cite{cisse2012rule}. For bimolecular reactions, we consider helix association~\cite{hata2017influence,zhang2018predicting} and  toehold-mediated three-way strand displacement~\cite{machinek2014programmable}.  The reactions from \citet{cisse2012rule} and \citet{machinek2014programmable}  may have mismatches between the bases of the strands. \sh{  The type of reactions in Table~\ref{sourceTable} are widely used in nanotechnology, such as  in molecular  beacon probes~\cite{chen2015dna}}.

  For  bimolecular reactions, we Boltzmann sample initial reacting complexes.   For reactions in which  we define only one target state, in the pathway construction step, we bias the paths towards that state. In this work, for reactions in which we define a set of target states,  we bias  paths towards  only one target state, so that $\initialpi'(\statebias)=1$ for one state and $\initialpi'(s)=0$ for all other states.  \sh{Next we describe these states.} 
    
     \textbf{Hairpin closing and hairpin opening.}  For a hairpin opening reaction, we define the initial state to be the  system microstate  in which a  strand has fully formed a duplex and a loop  (see Figure~\ref{hairpin-fig}).  We define the target state to be the system microstate in which the strand has no base pairs.
 Hairpin closing is the reverse reaction, where a strand with no base pair forms a fully formed duplex and a loop. 
 
 \textbf{Helix dissociation and helix association.}   
For a helix dissociation reaction, we specify the initial state to be the system microstate in which  two strands  have fully formed a  helix (see Figure~\ref{helix-fig}). We define the set of target states to be the set of  system microstates in which  the strands have detached and there are no base pairs within one of the strands. We bias  paths towards the target state in which there are no base pairs formed within any of the strands.
 Helix association is the reverse reaction. We  Boltzmann sample the initial reacting complexes in  which the strands have not formed base pairs with each other. We define the target state to be the system microstate in which the duplex has fully formed. 
 
  \textbf{Toehold-mediated  three-way strand displacement.}
 In this reaction, an invader strand displaces an incumbent strand in a duplex, where a toehold domain facilitates the reaction (see Figure~\ref{threewaystranddisplacement-fig} and Figure~\ref{fig2}).   We Boltzmann sample initial reacting complexes in which the incumbent and substrate form a complex through base pairing and the invader forms another complex.   We define the set of target states  to be the set of microstates where the incumbent is detached from the substrate and there are no base pairs within the incumbent. We bias  paths towards the  target state  in which the substrate and invader have fully formed base pairs  and there are no base pairs within the incumbent.

  In datasets No.~1-6 from Table~\ref{sourceTable}, we consider reactions that are feasible with  SSA with our parameterization of Multistrand, given two weeks computation time,  since we compare SSA results with pathway elaboration results.   We indicate these reactions as $\mathcal{D}_\text{train}$ since we also use them as training set in  Section~\ref{section-parameterestimation}. We indicate
  datasets No. 7-8  as $\mathcal{D}_\text{test}$ since we use them as testing set in  Section~\ref{section-parameterestimation}.

\subsection{Experimental  Setup}~\label{exp-setup}
Experiments are performed on a system with 64 2.13GHz Intel Xeon processors and 128GB RAM in total, running openSUSE Leap 15.1.  An experiment for a reaction is conducted on one processor.  Our framework is implemented in Python, on top of the Multistrand kinetic simulator~\cite{schaeffer2013stochastic,schaeffer2015stochastic}.    To solve the matrix equations in \eqn~\ref{mfptequation},  we use the sparse direct  solver from SciPy~\cite{virtanen2020scipy} when possible\footnote{The implementation we used  allowed the sparse direct solver to use only up to 2GB of RAM}. Otherwise we use the sparse iterative   biconjugate gradient algorithm~\cite{fletcher1976conjugate} from SciPy. 

In all of our experiments, the thermodynamic parameters for predicting the energy of the states are fixed and the energies are calculated with  Multistrand.  Each reaction uses its own experimental condition as provided in the dataset.  In all our experiments, we use the Metropolis kinetic model from Multistrand.  For all experiments except for Section~\ref{section-parameterestimation}, we fix the kinetic parameters  to  the Metropolis Mode parameter set~\cite{zolaktaf2017inferring}, that is $\theta_1= \{ \kUni \approx 2.41 \times 10^6 \text{ s}^{-1}, \kBi \approx 8.01 \times 10^5 \text{ M}^{-1}\text{s}^{-1} \}$.   To obtain MFPTs with SSA,  we use  1000 samples, except for three-way strand displacement reactions   in  which we use 100 samples, since the simulations take a longer time to complete.

\begin{figure}
\center
\subfloat[]{ \label{fig2a}\includegraphics[width=1\textwidth]{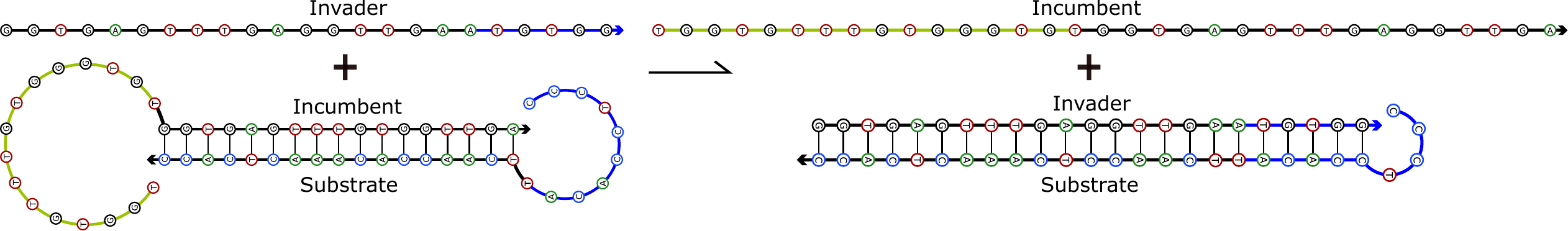}}\\ \subfloat[]{ \label{fig2b}\includegraphics[width=1\textwidth]{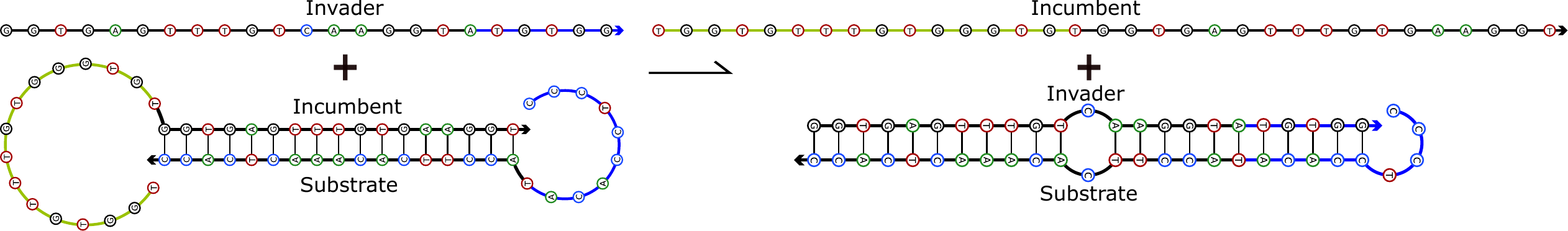}}\\
 \subfloat[]{ \label{fig2c}\includegraphics[width=0.25\textwidth]{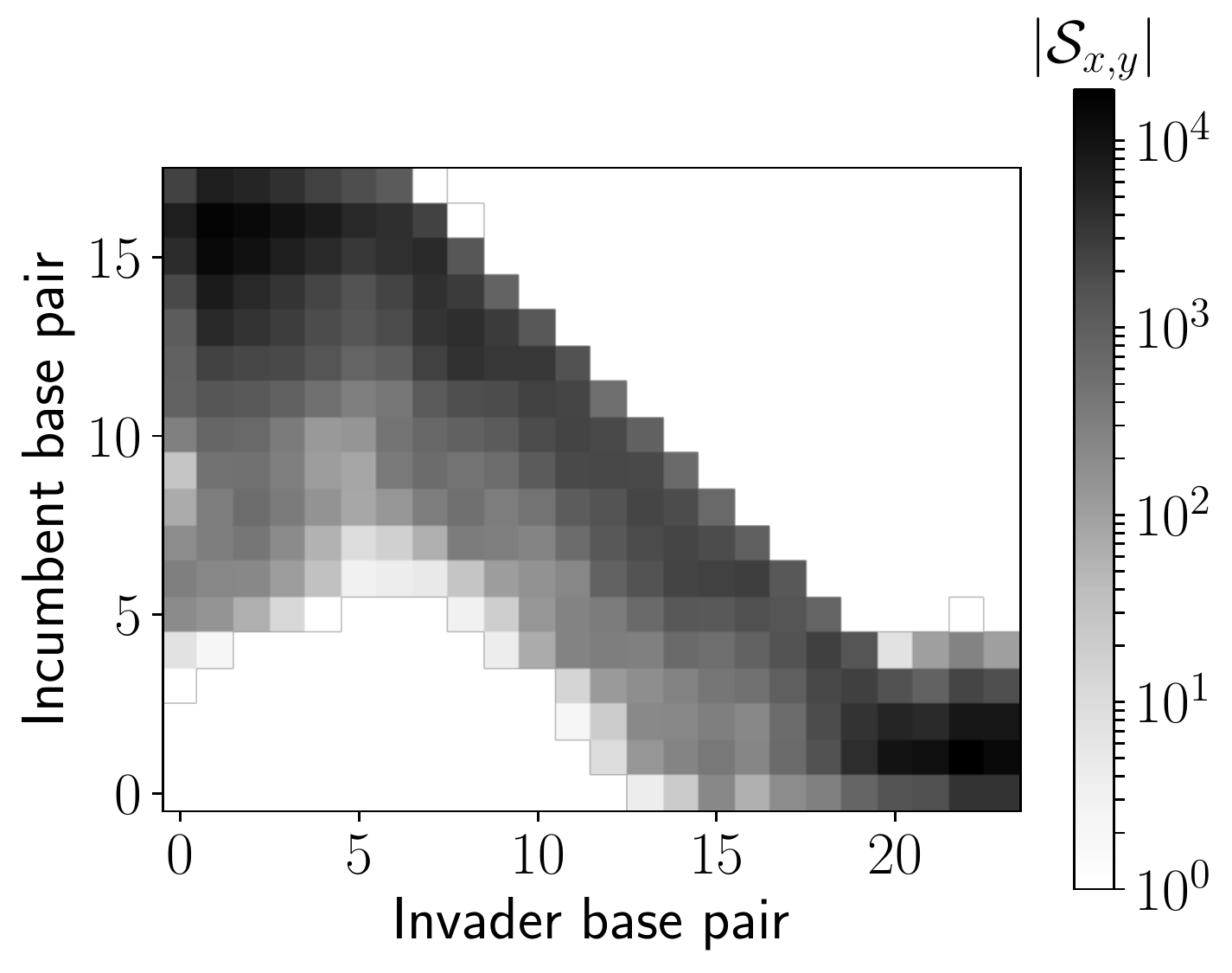}}
  \subfloat[]{ \label{fig2d}\includegraphics[width=0.25\textwidth]{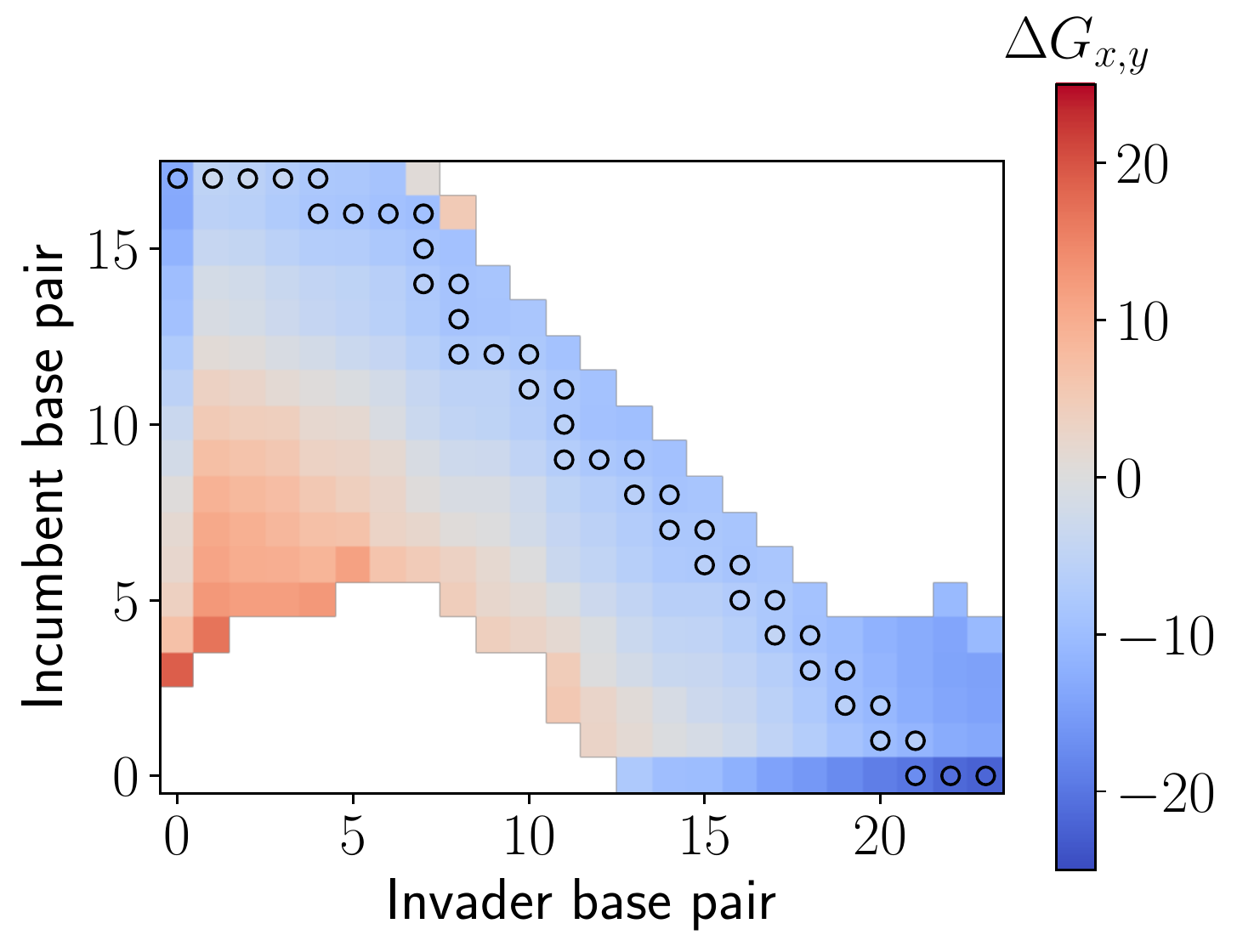}}
  \subfloat[]{\label{fig2e} \includegraphics[width=0.25\textwidth]{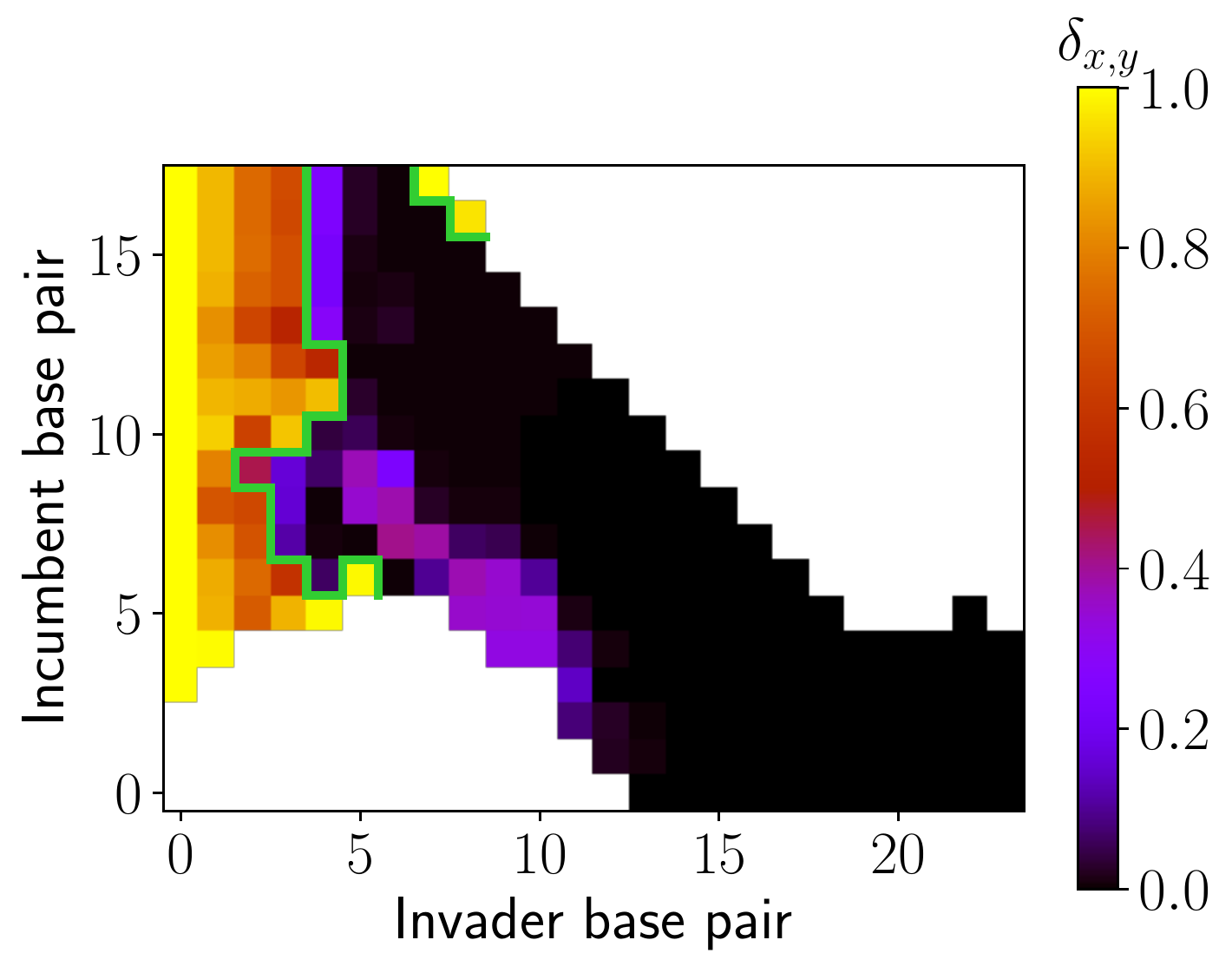}}
  \subfloat[]{ \label{fig2f}\includegraphics[width=0.235\textwidth]{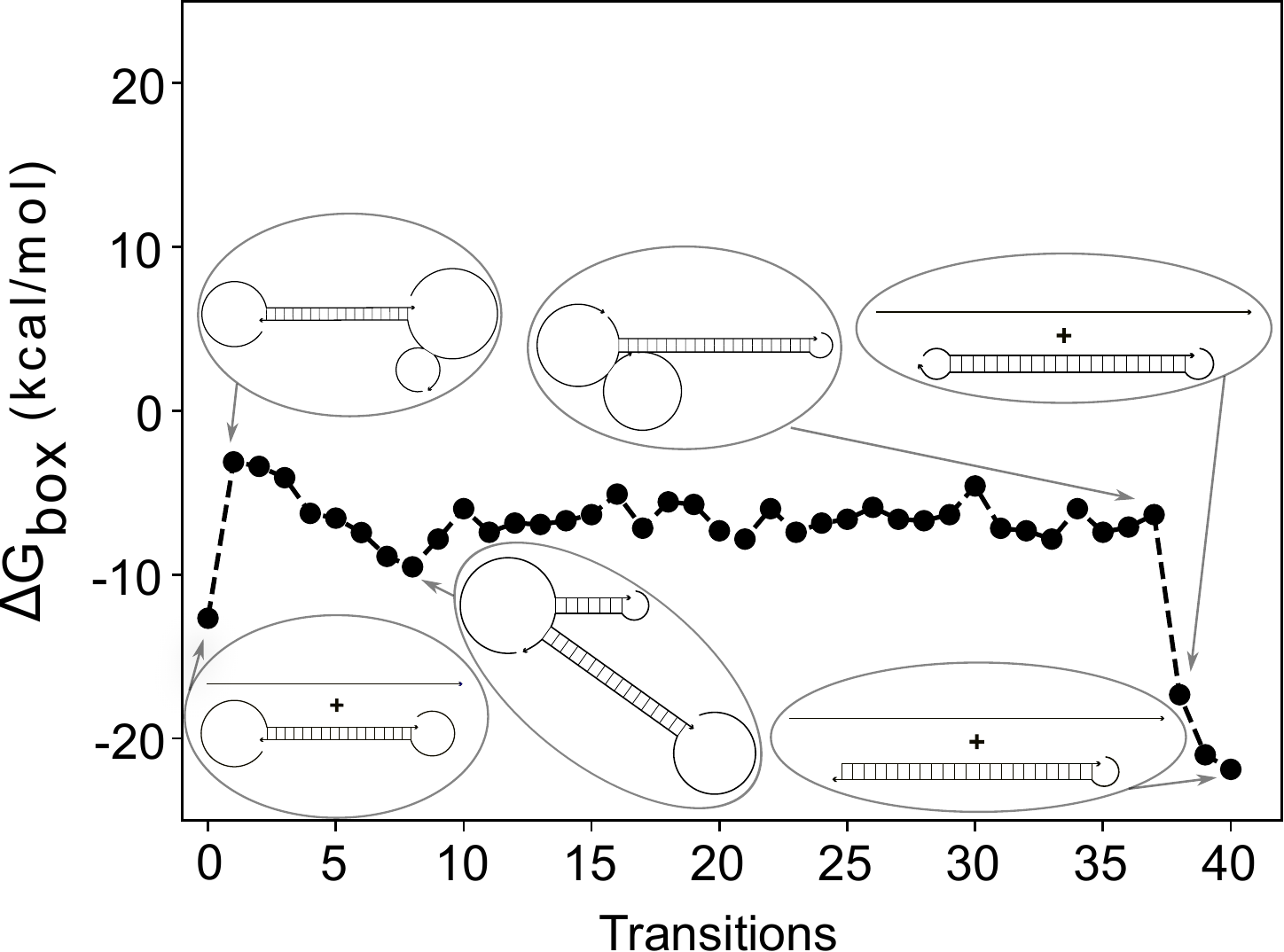}}\\
  \subfloat[]{ \label{fig2g}\includegraphics[width=0.25\textwidth]{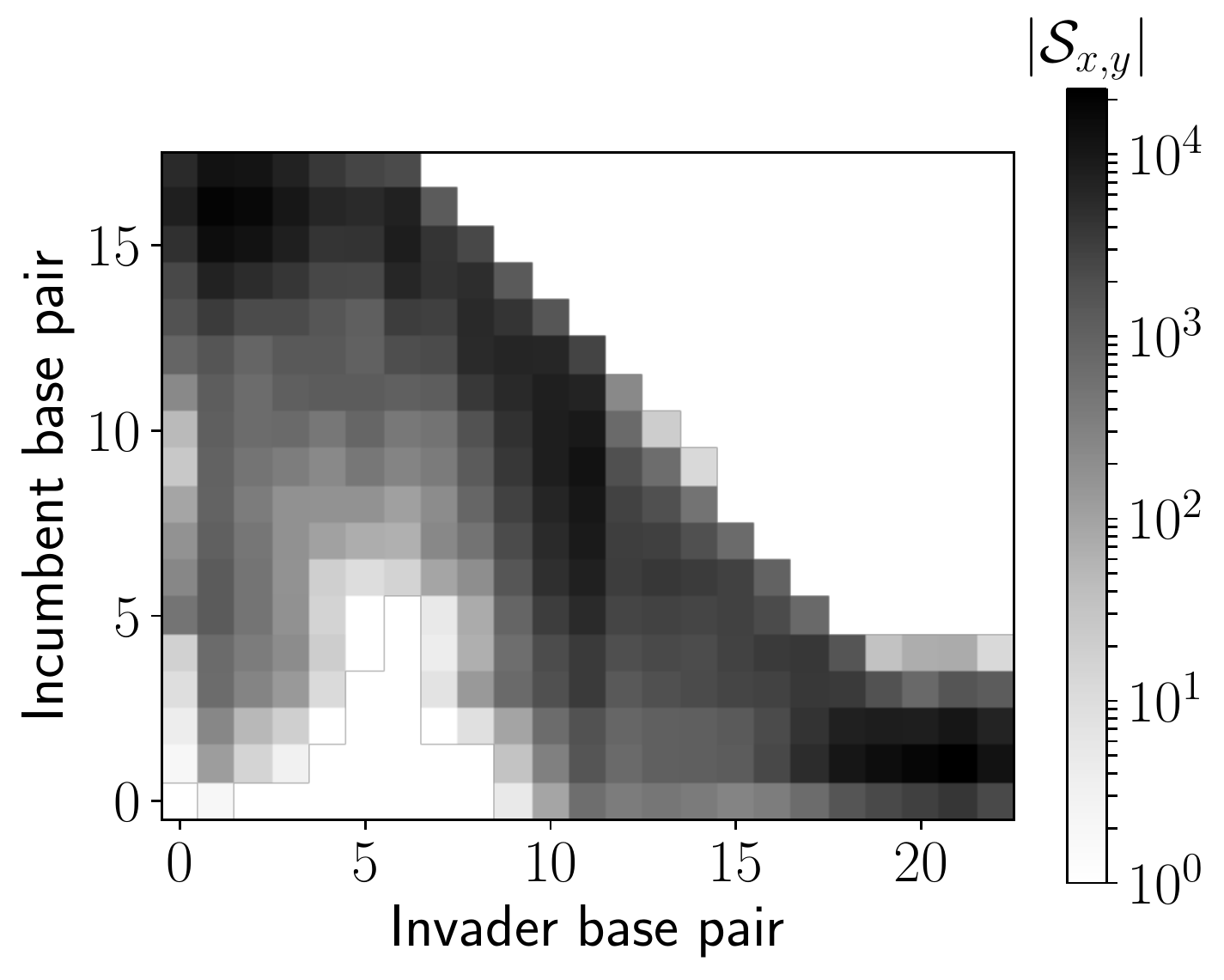}}
  \subfloat[]{\label{fig2h} \includegraphics[width=0.25\textwidth]{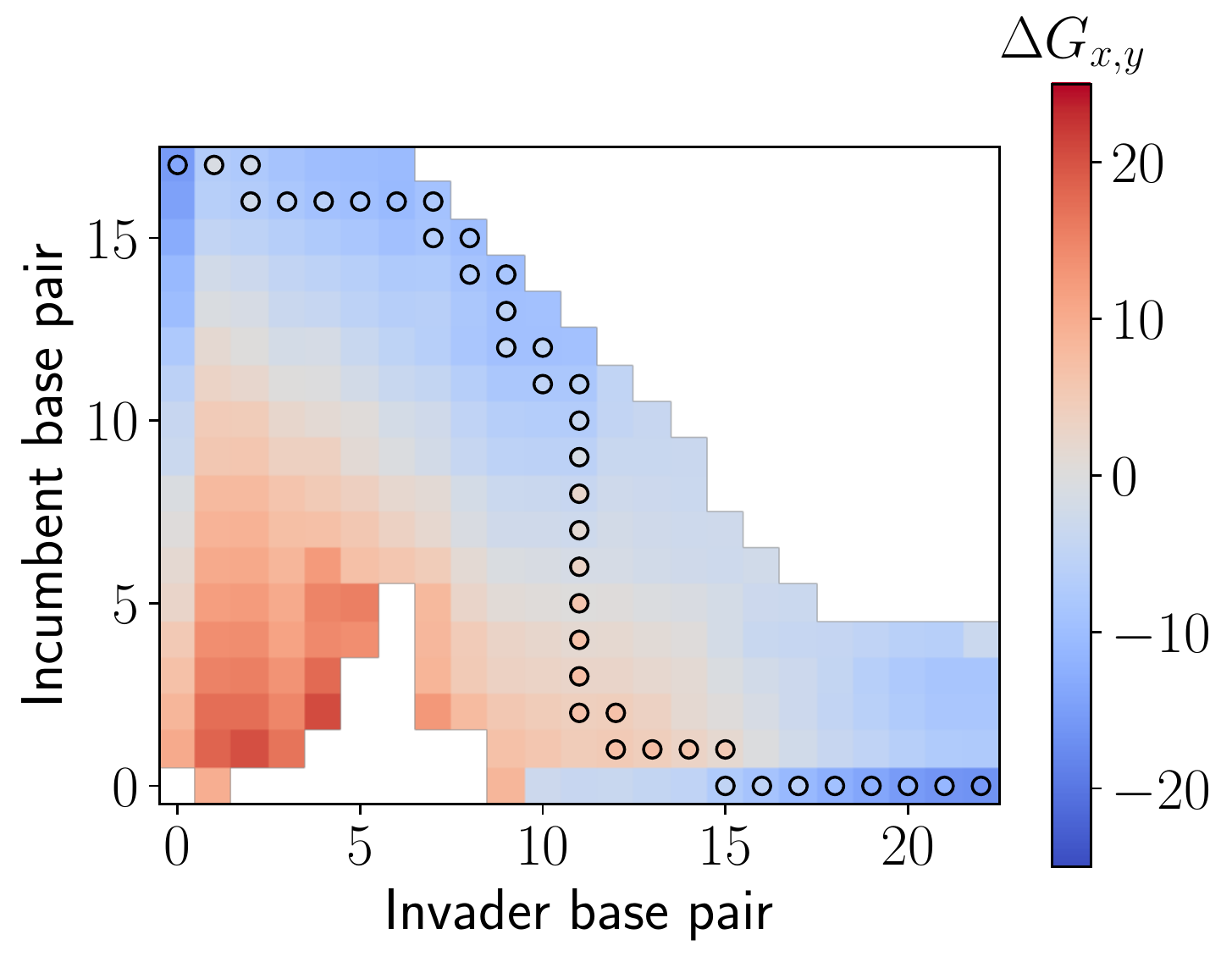}}
\subfloat[]{ \label{fig2i}\includegraphics[width=0.25\textwidth]{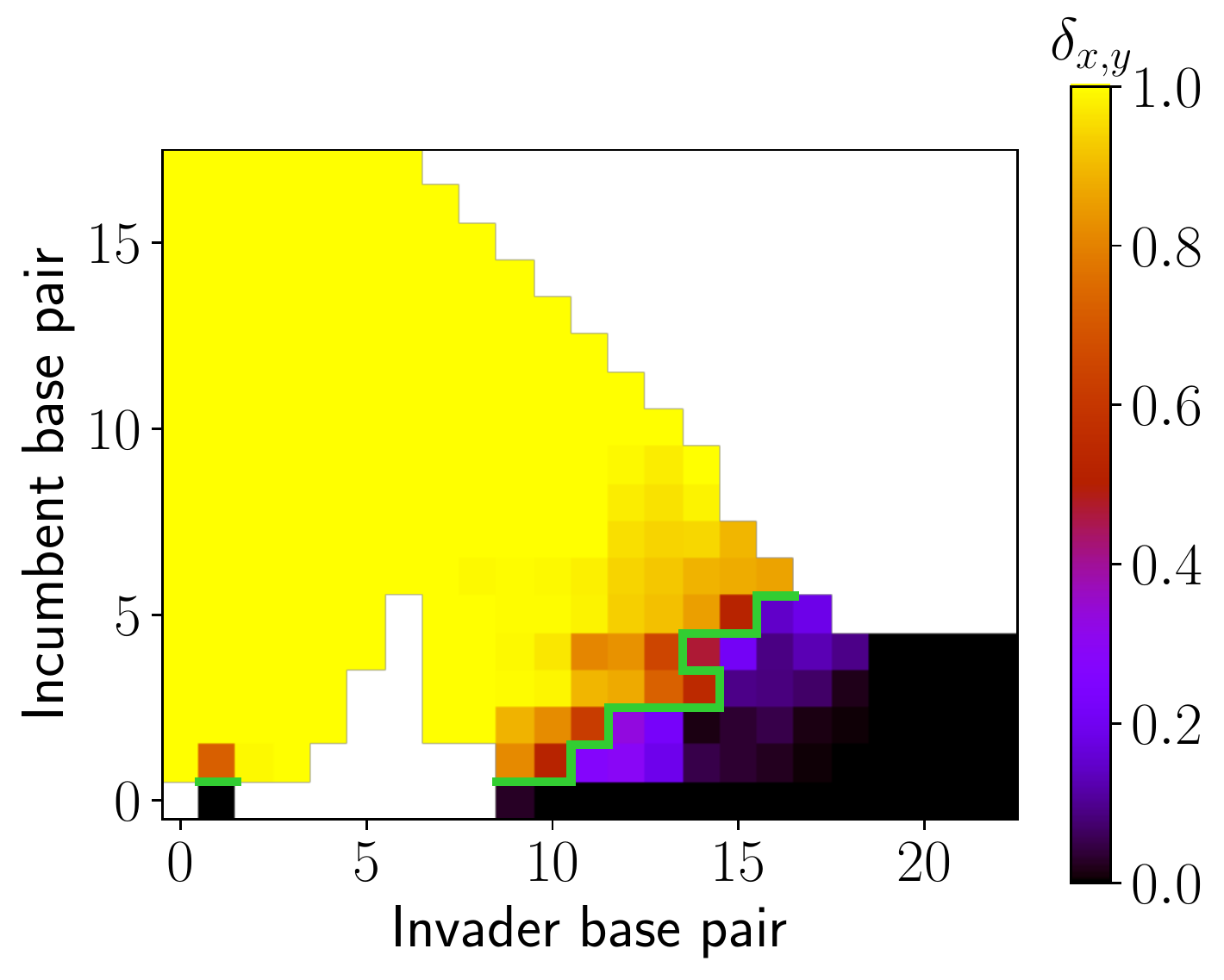}}
\subfloat[]{ \label{fig2j}\includegraphics[width=0.235\textwidth]{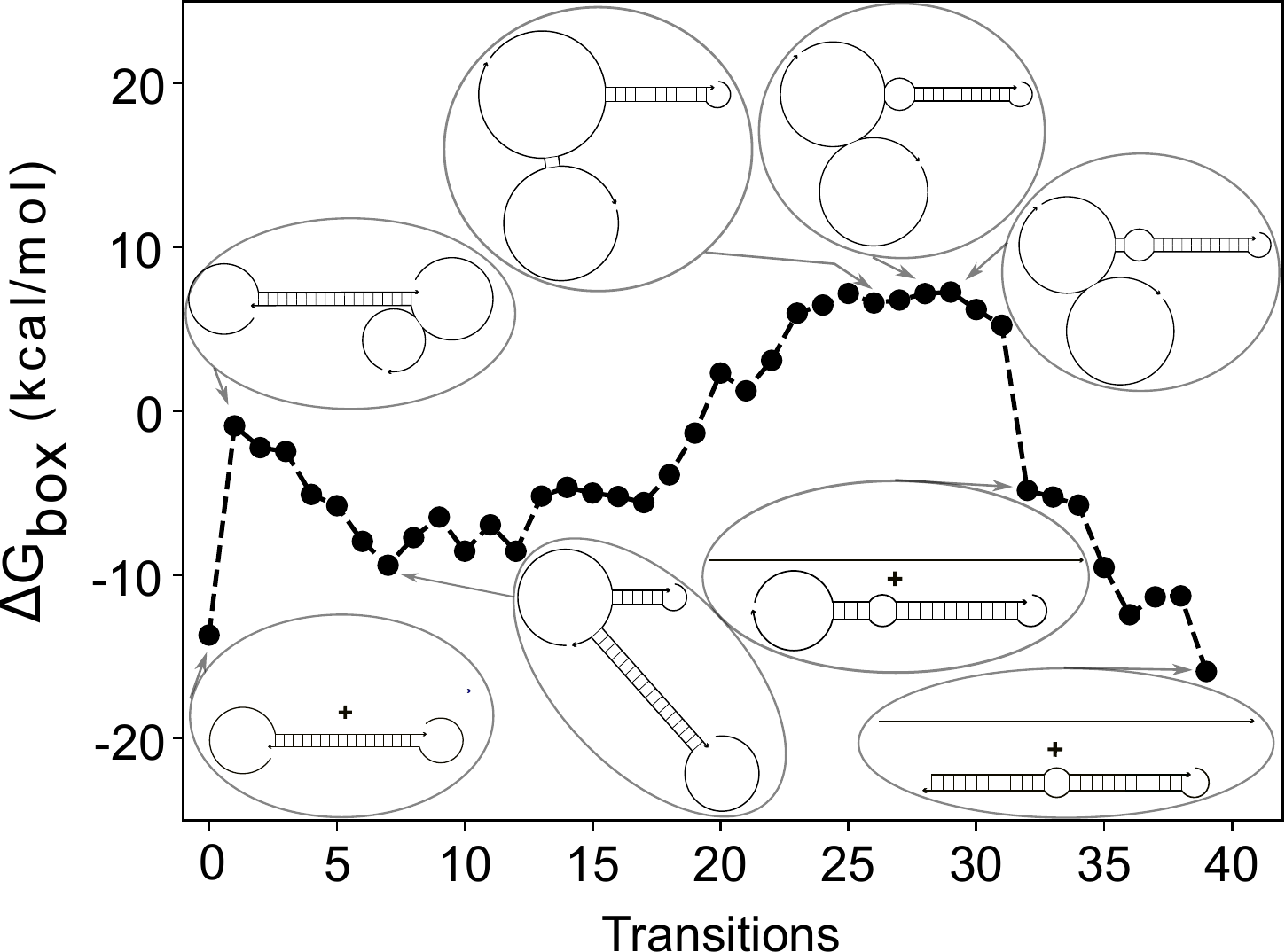}}
\caption{\label{fig2}Results of truncated CTMCs built with pathway elaboration ($N=128$, $\beta=0.6 $, $K=1024$, $\pathwaytime=16$ ns) for two toehold-mediated three-way  strand displacement  reactions from \citet{machinek2014programmable}. (\textbf{a})  A toehold-mediated three-way strand displacement reaction  that has a  6-nt toehold and a 17-nt displacement domain~\cite{machinek2014programmable}.  (\textbf{b})  A toehold-mediated three-way strand displacement reaction that has a 6-nt toehold, a 17-nt displacement domain, and a mismatch exists between the invader and the substrate at position 6 of the displacement domain~\cite{machinek2014programmable}. Figures~\ref{fig2c},~\ref{fig2d},~\ref{fig2e}, and~\ref{fig2f} correspond to Figure~\ref{fig2a}. Figures~\ref{fig2g},\ref{fig2h},\ref{fig2i}, and~\ref{fig2j} correspond to Figure~\ref{fig2b}.  In Figures~\ref{fig2c},~\ref{fig2d},~\ref{fig2e},~\ref{fig2g},~\ref{fig2h}, and~\ref{fig2i},  the x-axis   corresponds to the number of base pairs between the invader and the substrate, and  the y-axis corresponds to the number of base pairs between the incumbent and the substrate.   (\textbf{c, g}) At coordinate $(x,y)$, $|\mathcal{S}_{x,y}|$ is shown, where $\mathcal{S}_{x,y}$ is a system macrostate (a nonempty set of system microstates) equal to the set of states    with coordinate $(x,y)$.   (\textbf{d, h})  At coordinate $(x,y)$,   the free energy  $\Delta G_{x,y}$ is shown, which is defined as $\Delta G_{x,y}=-RT \ln \sum_{s\in \mathcal{S}_{x,y} } e^{\frac{-\Delta G(s)}{RT}}$~\cite{schaeffer2013stochastic}.  The free energy of the  paths in Figures~\ref{fig2f} and~\ref{fig2j} are also shown with the $\circ$ marker in  Figures~\ref{fig2d} and~\ref{fig2h}, respectively.   (\textbf{e, i}) At coordinate $(x,y)$, the value of $\delta_{x,y} =\sum_{s \in \mathcal{S}_{x,y}} \frac{w_s \delta(s)}   {\sum_{s \in \mathcal{S}_{x,y}} w_s}$ is shown, where  $\delta(s) = \tau_s/ \expectt$ and $w_s= e^{\frac{-\Delta G(s)}{RT}}$.  For ease of understanding, the green  ``halfway line'' separates   coordinates where  $\delta_{x,y}$ is greater than 0.5 from coordinates where  $\delta_{x,y}$ is less than 0.5.  (\textbf{f,  j}) The free energy landscape of a random path built with pathway elaboration ($N=1$, $\beta=0$, $K=0$, $\pathwaytime=0$ ns) and the  initial and the final states  and some states near the local extrema are illustrated.	 %and some states near the summits and the valleys of the free energy landscape are illustrated.
%The initial state is the state in which the incumbent and substrate have fully formed based pairs, and  the final state  is the state in which  the invader and substrate have fully formed base pairs.
%Overall,  the truncated CTMCs are impressive in their consistency with wet-lab  measurements~\cite{machinek2014programmable}.
 }
\end{figure}

\subsection{Case Study}\label{exp-casestudy}
Here we illustrate the use of pathway elaboration to gain insight on the kinetics of two contrasting reactions from \citet{machinek2014programmable}, one being a rare event.    

Figures~\ref{fig2a}  and~\ref{fig2b} show the two teohold-mediated three-way strand displacement reactions that we consider~\cite{machinek2014programmable}. In the reaction in Figure~\ref{fig2a}, the invader and substrate are complementary strands in the displacement domain.  In the reaction in Figure~\ref{fig2b}, there is a mismatch between the invader and the substrate in the displacement domain.   The rate of toehold-mediated strand displacement is usually determined by the time to complete the first bimolecular transition, in which the invader forms a base pair with  the substrate for the first time.  However, the rate could be controlled by several orders of magnitude by altering positions across the strand, such as using mismatch bases~\cite{machinek2014programmable}.  The reaction  in Figure~\ref{fig2b} is  approximately 3 orders of magnitude slower than the reaction  in Figure~\ref{fig2a}.
For the reaction  in Figure~\ref{fig2a},  $\logten k=6.43$, $ \logten \kpathway=6.62$, $ \logten \kssa=6.75$,   $|\hat{\statespace}|=4.3 \times 10^5$,  the computation time of pathway elaboration  is  $1.4 \times 10^5 $ s, and the computation time of SSA is  $3.9 \times 10^5$ s.
For the reaction  in Figure~\ref{fig2b},   $\logten k=3.17$, $ \logten \kpathway=3.59$,   $|\hat{\statespace}|=7 \times 10^5$, the computation time of pathway elaboration  is  $2.7 \times 10^5 $ s, and SSA is not feasible within  $1 \times 10^6$ s. 

In Figures~\ref{fig2c}-\ref{fig2e} and ~\ref{fig2g}-\ref{fig2i}, we illustrate different properties of  the   truncated CTMCs for the  reactions   in Figures~\ref{fig2a} and~\ref{fig2b}, respectively.  Comparing Figure~\ref{fig2c} with Figure~\ref{fig2g}, we see that  many states are sampled midway  in Figure~\ref{fig2g} due to the mismatch. In Figures~\ref{fig2d} and~\ref{fig2h}, we compare the energy barrier (increase in  free energy) while moving from the beginning of the x-axis towards the end of the x-axis.  In Figure~\ref{fig2d},  we can see a noticeable energy barrier  in the beginning. However, in Figure~\ref{fig2h},  we can see two   noticeable energy barriers, one  in the beginning and one midway.    Figures~\ref{fig2e} and~\ref{fig2i} show states that are $\delta$-close to the target states.  These figures show that with $\delta$-pruning, states that are further  from the initial states and closer to the target states  will be pruned with smaller values of $\delta$, compared to states that are closer to the initial states and further from the target states. Comparing Figure~\ref{fig2e} with Figure~\ref{fig2i}, the states quickly reach the target states after the first several transitions in Figure~\ref{fig2e} (after the energy  barrier). However, in Figure~\ref{fig2i}, the states do not  quickly reach the target states until after the second energy barrier.  Figure~\ref{fig2f} and~\ref{fig2j} show the free energy landscape and some of the secondary structures for a random path from an initial state to a target state for the reactions in Figures~\ref{fig2a} and~\ref{fig2b}, respectively.   For the reaction in Figure~\ref{fig2a}, the barrier is near the first transition. For the reaction in Figure~\ref{fig2b}, there is a noticeable barrier after several base pairs form between the invader and the substrate, presumably near the mismatch.

 \subsection{Mean First Passage Time  and Reaction Rate Constant Estimation}
\label{exp-eval}
 To evaluate  the estimations of pathway elaboration, we compare its estimations with  estimations obtained from SSA  for  the reactions in $ \mathcal{D}_\text{train}$.     Note that for many of these reactions the size of the state space is exponentially large in the length of the strands. Therefore, exact matrix equations is not possible for them. Instead we use SSA since  it can  generate statistically correct trajectories.   We also compare the wall-clock computation time of pathway elaboration with  SSA.
  \begin{figure}[t]
\subfloat[]{ \includegraphics[width=0.49\textwidth]{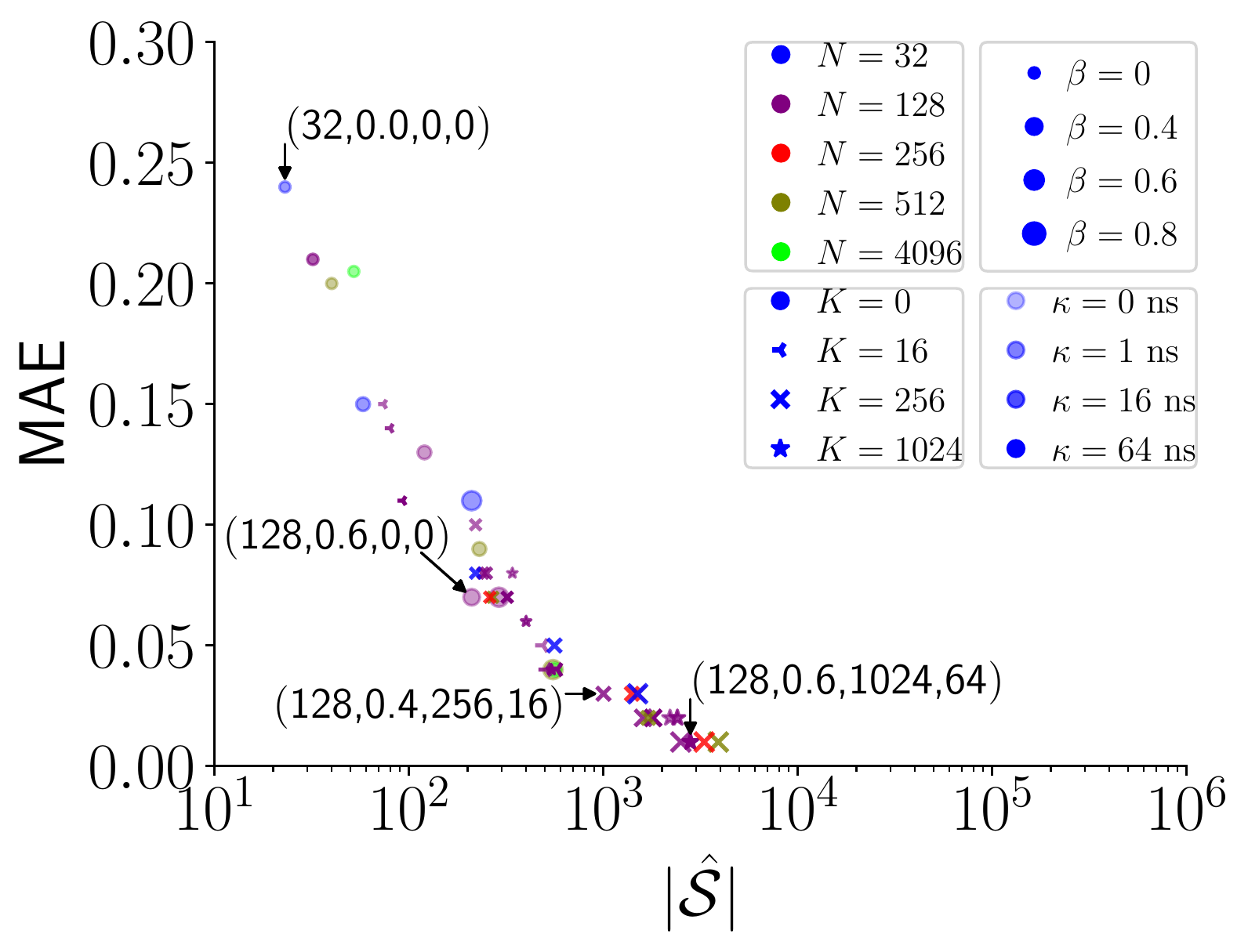}}
 \subfloat[]{ \includegraphics[width=0.49\textwidth]{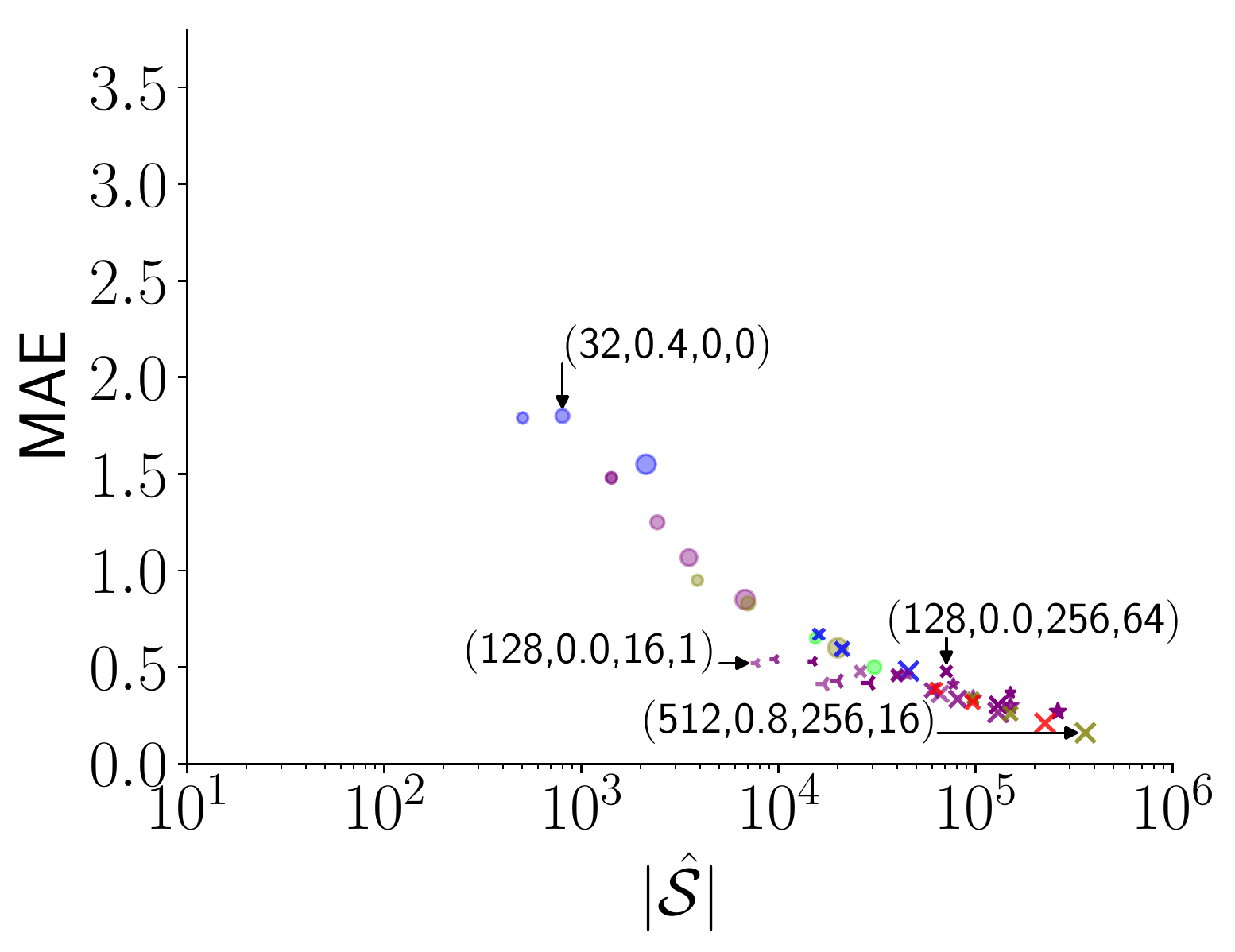}}\\
\subfloat[]{ \includegraphics[width=0.49\textwidth]{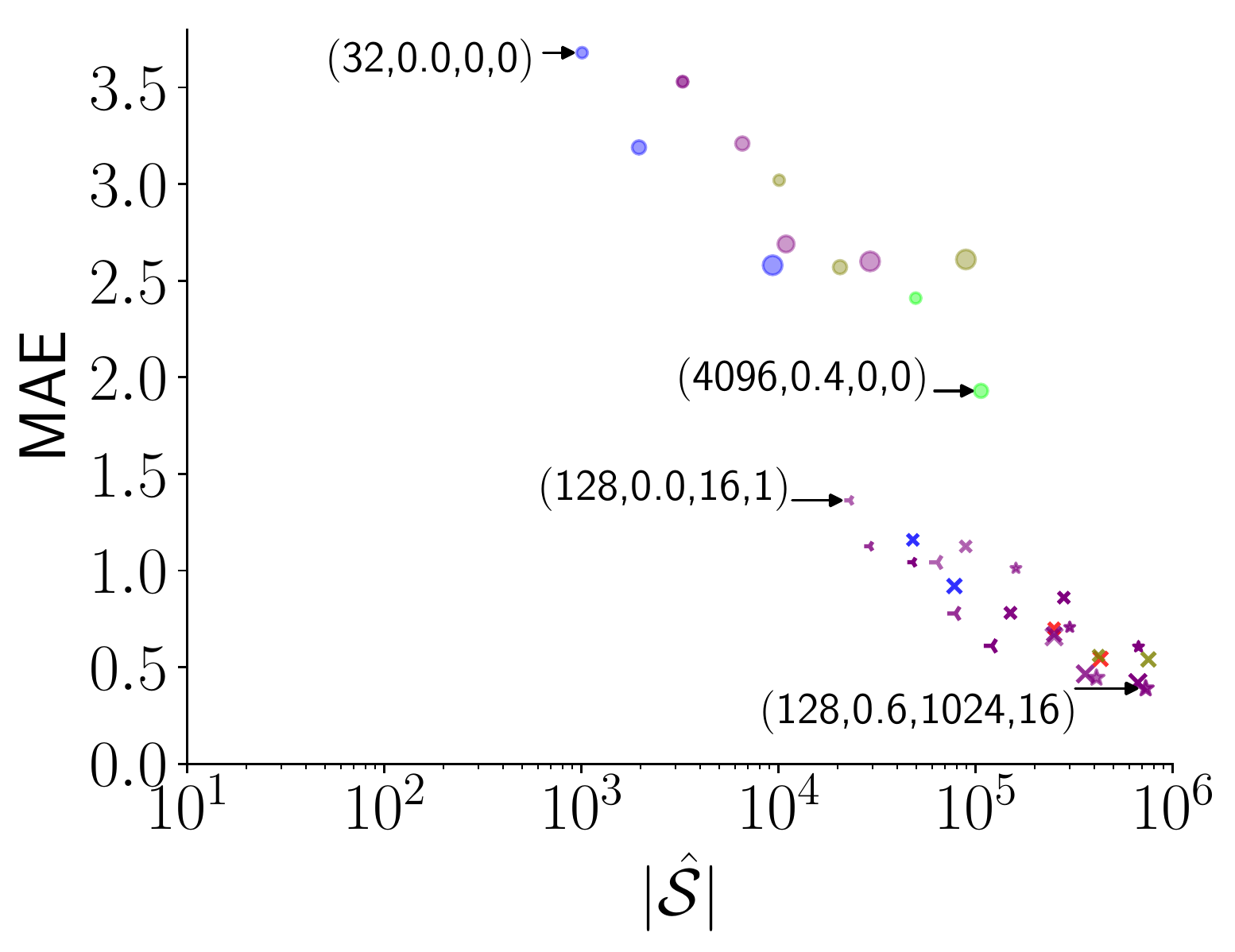}}
  \subfloat[]{ \includegraphics[width=0.49\textwidth]{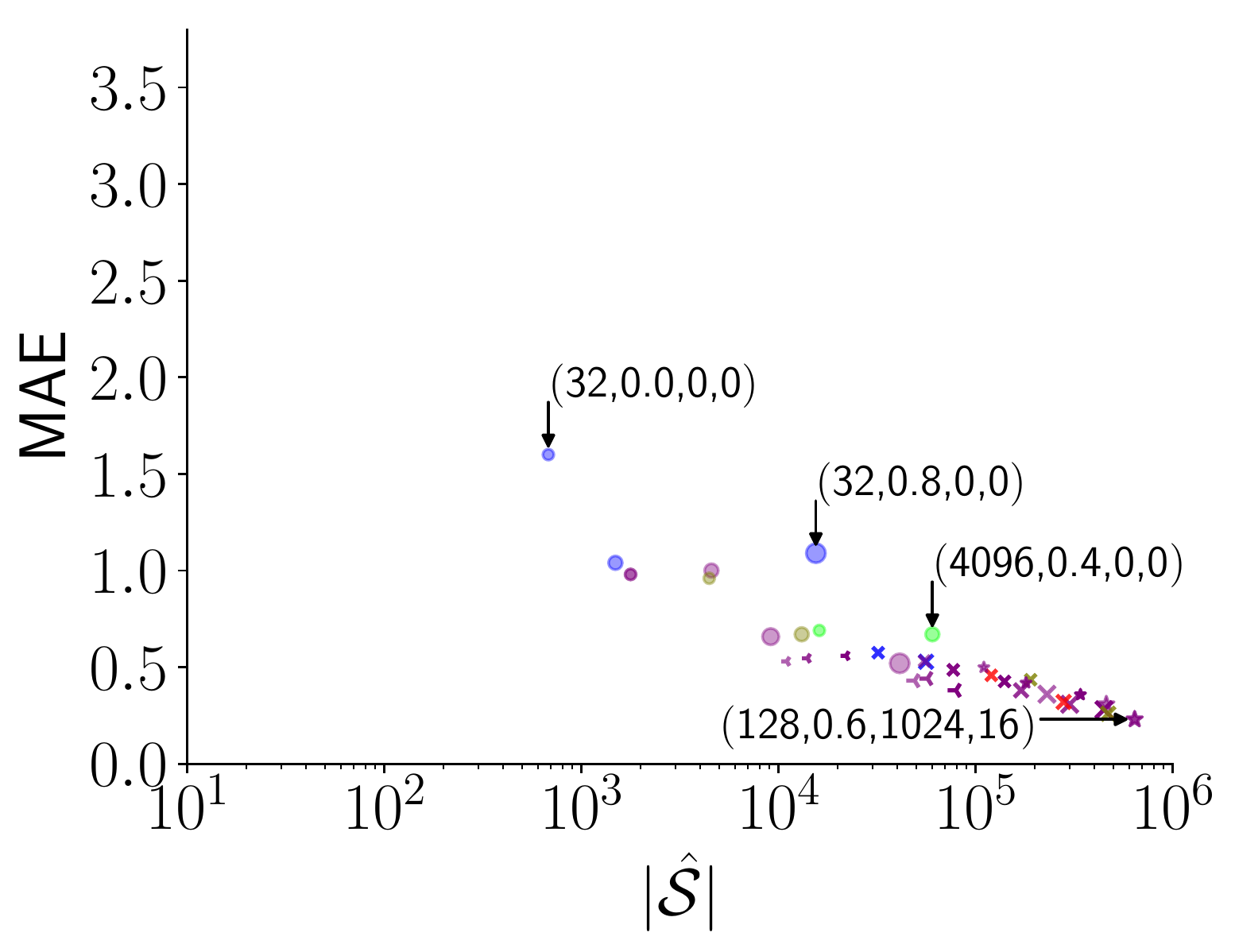}} 
 \caption{The MAE of pathway elaboration with SSA versus $|\hat{\statespace|}$  for  different values of $N$, $\beta$, $K$ and $\pathwaytime$.   (\textbf{a}) datasets No. 1,2, and 3,  (\textbf{b})  dataset No. 4, (\textbf{c})  dataset No. 5,  and (\textbf{d}) dataset No. 6.  The annotated values on the figures correspond to  $N$, $\beta$, $K$, and $\kappa$, respectively.  } 
\label{figure-maevssize}
\end{figure}

 We evaluate the estimations of pathway elaboration based on the mean absolute error (MAE) with SSA, which is defined over a dataset $\mathcal{D}$   as
\begin{equation}
  \text{MAE}=   \frac{1}{|\mathcal{D}|}\sum_{r\in \mathcal{D}}  |\logten  \hat{\tau}_\text{SSA}^r - \logten \hat{\tau}_\text{PE}^r| =   \frac{1}{|\mathcal{D}|}\sum_{r\in \mathcal{D}}  | \logten \hat{k}_\text{SSA}^r - \logten \hat{k}_\text{PE}^r|,
  \label{MAEequation}
 \end{equation}
 where $\hat{\tau}_\text{PE}^r$ and $\hat{\tau}_\text{SSA}^r$ are the estimated MFPTs of SSA and pathway elaboration for reaction $r$, respectively,  and   $\hat{k}_\text{SSA}^r$ and $\hat{k}_\text{PE}^r$ are   the estimated  reaction rate constants of SSA and pathway elaboration for reaction $r$, respectively.  The equality follows from \eqns~ \ref{rate-mfpt-unimolecular} and~\ref{rate-mfpt-bimolecular}.  We use $\logten$ differences since the  reactions rate constants cover many orders of magnitude.  We use  the MAE as our evaluation metric since it is conceptually easy to understand. For example, here,  an MAE of $1$ means on average the predictions are off by a factor of 10.  In the rest of this subsection, we first look at the trade-off between the MAEs and the size of the truncated state space set  $\hat{\statespace}$, with regards to different parameter settings of the pathway elaboration method. Then we look  at the trade-off between the MAE and the computation time.

 \subsubsection{MAE of Pathway Elaboration with SSA  versus $|\hat{\statespace}|$} Figure~\ref{figure-maevssize} shows the MAE  of pathway elaboration with SSA versus  $|\hat{\statespace}|$ of pathway elaboration for different configurations of the   $N$, $\beta$,  $K$, and  $\pathwaytime$ parameters.  Figure  \ref{figure-changingNandBeta} and \ref{figure-changingKandKappa} from the \hyperref[appendix]{Appendix} represent  Figure~\ref{figure-maevssize} by varying only two parameters at a time. The figures show that generally as  $N$ and $\beta$ increase, the MAE decreases.  This is because for a fixed $N$ as $\beta \to 1$ the ensemble of paths will be generated by SSA.     As  $N \to \infty$,  the  truncated state space  becomes larger and is more likely to contain the most probable paths from the initial states to the target states.

Comparing  the MAE of configurations where $K=0$ and $\kappa=0$ with other settings where $K>0$ and $\kappa>0$, shows that the elaboration step helps reduce  the MAE  (in the Appendix, compare Figures~\ref{figure-changingNandBeta1}-\ref{figure-changingNandBeta4} with Figures~\ref{figure-changingNandBeta9}-\ref{figure-changingNandBeta12}).  Particularly, the elaboration step is useful for   dataset No. 4, helix  association from \citet{zhang2018predicting} where intra-strand base pairs can form  before completing  hybridization.  
The plots show that the elaboration step is more useful when $\beta$ is small (in the Appendix, compare Figures~\ref{figure-changingKandKappa1}-\ref{figure-changingKandKappa4} with Figures~\ref{figure-changingKandKappa9}-\ref{figure-changingKandKappa12}). This could be because elaboration helps find rate determining states that were not explored due to the biased sampling. When $\beta \to 1$  the pathway elaboration method will perform as SSA and rate determining states can be found without elaboration.  

    Furthermore, the figures show that as $K$ increases, the MAE decreases. However,   with a large value for $\pathwaytime$  and a small value of  $K$ the performance could be diminished (such as in Figure~\ref{figure-changingKandKappa3} of the Appendix).  In particular, consider that  $K$ and $\kappa$ might involve simulations that go on excursions outside the `main' densely-visited parts of the enumerated state space, and they might even terminate out there.  Such excursions might very well introduce significant local minima into the enumerated state space - even when no significant local minima exist in the original full state space.  For example, consider an excursion that goes off-path down a wide slope, perhaps toward the target state.  If it terminates before reaching a target state, then a hypothetical simulation in the enumerated state space could get stuck, needing to climb back up the slope to the point where the excursion began.  The expected hitting time in the enumerated state space will account for such wasted time, thus leading to an over estimation of the MFPT.   Therefore, $\pathwaytime$ should be tuned  with respect to $K$.  
      
\begin{figure*}

\begin{minipage}{1\textwidth}
	\captionof{table}{\label{kssavskpathwaytable} Pathway elaboration ($N=128$, $\beta=0.6$, $K=256$,  $\pathwaytime=16$ ns) versus SSA. The \textit{mean} statistics are averaged over the `\# of reactions'. Also, the pathway elaboration experiments  are repeated three times and their mean is calculated.  MAE refers to the mean absolute error of pathway elaboration with SSA (\eqn~\ref{MAEequation}). $|\hat{\statespace}|$ is the size of  the truncated state space.  See  Figure~\ref{kssavskpathwayfigure} for an illustration of individual reaction predictions. } 
			\centering 		\scalebox{0.95}{		\begin{tabular}{C{1cm}C{1cm}C{1.cm}C{2.2cm}C{2.2cm}C{2.2cm}C{2.2cm}}\hline Dataset No. & \# of reactions & $\text{MAE} $&  Mean $  |\hat{\statespace}|$ for pathway elaboration  &    Mean matrix computation  time (s) for pathway elaboration & Mean computation  time (s) for pathway elaboration& Mean computation    time (s) for SSA  \\
 \hline $1$ &$63$&$0.04$ & $5.7\times10^{2}$&$4.5\times10^{-3}$&$1.0\times10^{3}$ & $2.7\times10^{1}$\\ \hline 
 $2$&$62$&$0.03$ & $1.8\times10^{3}$&$1.5\times10^{-2}$&$1.0\times10^{3}$ & $1.2\times10^{1}$\\ \hline 
 $3$&$39$&$0.04$ & $5.3\times10^{2}$&$6.8\times10^{-3}$&$1.6\times10^{3}$ & $3.8\times10^{3}$\\ \hline 
 $4$ &$43$&$0.29$ & $8.1\times10^{4}$&$3.0\times10^{1}$&$2.1\times10^{4}$ & $4.9\times10^{5}$\\ \hline 
 $5$ &$20$&$0.51$ & $3.8\times10^{5}$&$2.3\times10^{4}$&$1.6\times10^{5}$ & $3.7\times10^{4}$\\ \hline 
 $6$ &$10$&$0.31$ & $3.0\times10^{5}$&$1.3\times10^{3}$&$1.3\times10^{5}$ & $3.8\times10^{5}$\\ \hline 
 All datasets & $237$&$0.13$&$6.0\times10^{4}$ & $2.0\times10^{3}$&$2.4\times10^{4}$& $1.1\times10^{5}$\\ \hline 
		\end{tabular}} 	
\end{minipage}

\begin{minipage}{1\textwidth}
\subfloat[]{\label{ssavspathwayfigunimol}\includegraphics[width=0.25\textwidth]{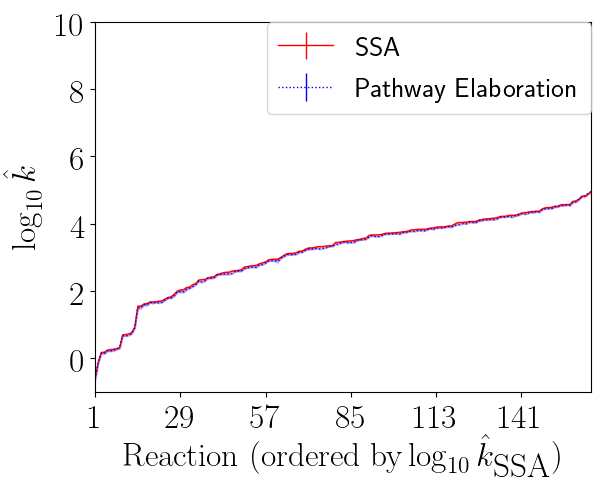}}
\subfloat[]{\label{ssavspathwaybiunimol}\includegraphics[width=0.25\textwidth]{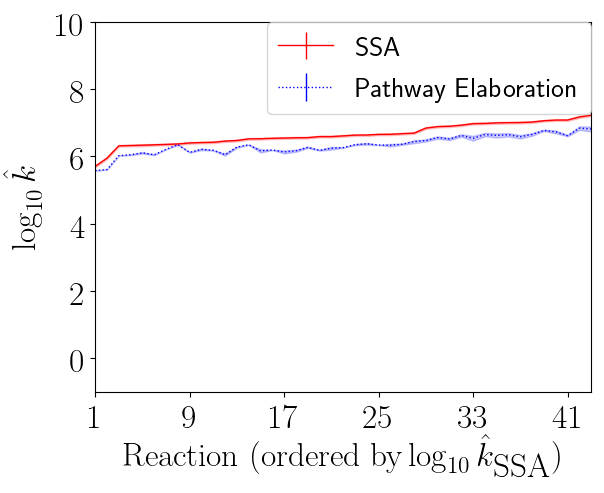}}
\subfloat[]{ \includegraphics[width=0.25\textwidth]{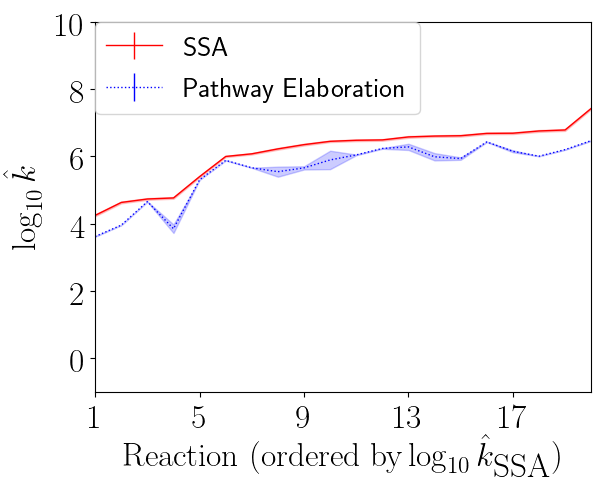}}  
\subfloat[]{ \includegraphics[width=0.25\textwidth]{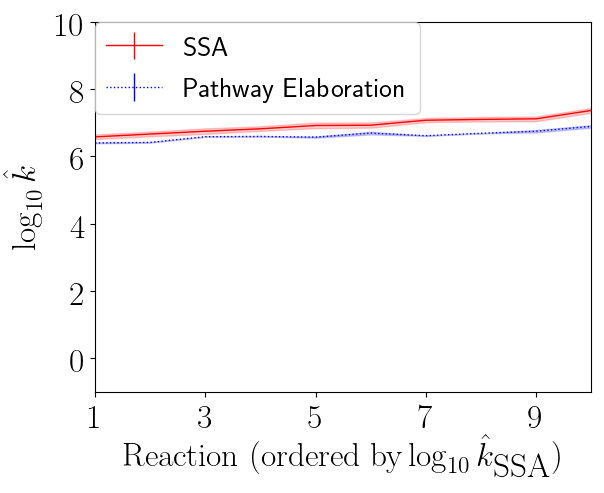}}  
\caption[]{  \label{kssavskpathwayfigure} The  $\logten \kssa$ and $\logten \kpathway$ ($N=128$, $\beta = 0.6$, $K=256$,  $\pathwaytime=16 \text{ ns}$) for  (\textbf{a}) datasets No. 1,2, and 3, and (\textbf{b}) dataset No. 4, (\textbf{c}) dataset No. 5, and  (\textbf{d}) dataset No. 6.  The reactions are ordered along the x-axis by  their predicted $\logten \kssa$.  The pathway elaboration experiments  are repeated three times. For each reaction,   $\logten \kpathway$  is  calculated by the average of the three experiments. The shaded area for pathway elaboration indicates the range (minimum to maximum) of the three experiments.  The shaded area for SSA indicates the 95\% percentile bootstrap of the $\logten \kssa$.  
 }
\end{minipage}\hfill
\end{figure*}

\subsubsection{MAE  of Pathway Elaboration with SSA  versus Computation Time}     Table~\ref{kssavskpathwaytable} illustrates the MAE and the computation time of pathway elaboration  for when $N=128$, $\beta=0.6$,  $K=256$, and $\pathwaytime=16 \text{ 
ns}$  compared with SSA.   We  illustrate this parameter setting  because it provides a good trade-off between accuracy and computational time for the larger reactions.   For the smaller reactions, we could achieve the same MAE with less computational time (by using smaller values for the parameter setting).  Figure~\ref{kssavskpathwayfigure} further shows  the prediction of pathway elaboration  for this parameter setting compared to the prediction of SSA for individual reactions.   In Table~\ref{kssavskpathwaytable}, the MAE for  unimolecular reactions  is smaller than $0.05$, whereas for  bimolecular reactions it is larger than $0.29$. This is because the CTMCs for the  bimolecular reactions in our dataset  are naturally bigger than the CTMCs for the unimolecular reactions in our dataset, and require larger truncated CTMCs. The MAE can be further reduced by changing the parameters (as shown in Figure~\ref{figure-maevssize}).       With our  implementation of pathway elaboration, the computation time of pathway elaboration  for datasets No. 3,  No. 4, and  No. 6   are  2 times, 20 times, 3 times smaller than SSA, respectively.   The computation time of SSA for  datasets No. 1, No. 2, and No. 5  is smaller than the computation time of pathway elaboration. This is because pathway elaboration has some overhead,  and in cases where SSA is already fast it can be slow.  However, as we show in Section~\ref{section-parameterestimation}, even for these reactions, pathway elaboration could still be useful for the rapid evaluation of perturbed parameters.  Also, the  computation time for pathway elaboration  could be significantly improved with more efficient implementations of the method.

\subsubsection{Pathway Elaboration versus other Truncation-Based Approaches}\label{exp-othermethods}
Here we compare pathway elaboration with two other truncation-based approaches. (We do not compare with the probabilistic roadmap method, because we could not find a working implementation of this method, and because of the difficulty of determining appropriate transition rates between non-adjacent states as noted in our related work section.)
The first truncated CTMC model that we include in our comparison uses SSA to sample paths from initial to target states, and builds a CTMC from these states. We call this method SSA-T, where the "T" stands for truncated. Since the sampled paths from SSA are statistically correct, we want to see whether the estimate obtained by pathway elaboration compares well with the unbiased SSA-T estimates. We compare SSA-T with pathway elaboration only on our first two datasets, since SSA-T, being unsuitable for rare events, is too slow to run on our other datasets with the implementation that we used.
 Our second truncated CTMC model uses transition path sampling, and so we call it TPS-T. Our implementation of TPS-T first generates a single path that connects the initial and target states, using SSA. Then a new path is generated by choosing a random state in the most recently generated path, and finding a path from this randomly-chosen state. If the simulated path reaches the initial state before reaching a target state, we continue the simulation until a target state is reached. Moreover, we could define the simulations for TPS-T to be time-limited and include states from these simulations in the truncated CTMC. However, in our experiments stopping simulations early generally result in higher MAE  compared to  continuing simulations until  target states are reached.  In  our experiments, we generate 128 paths in total for both SSA-T and TSP-T. As in Table~\ref{kssavskpathwaytable}, for pathway elaboration, we use $N=128$, $\beta=0.6$,  $K=256$, and $\pathwaytime=16 \text{ 
ns}$. 
 
 Table~\ref{revisiontable} compares the MAE and computation time of CTMCs that are built  with  pathway elaboration versus CTMCs that are built with SSA and TPS. Figure~\ref{revisionfigure} further shows the prediction of these methods compared  for individual reactions. Datasets No. 1 and 2 are used in this table which are hairpin opening and closing, respectively.  The MAE of pathway elaboration with SSA ($0.04$ and $0.03$) compares well with the  MAE of SSA-T with SSA ($0.03$ and $0.03$).  However, the MAE and the variance of TPS-T is high because the paths are correlated and depend on the initial path~\cite{singhal2004using}. Increasing the number of simulations would reduce the variance of the predictions. 

For a comparison of these methods with pathway elaboration regarding computation time,
we have adapted our code for pathway elaboration to implement these methods.   In our experiments, the computation time of pathway elaboration is smaller than both SSA-T and TPS-T and the computation time of TPS-T is smaller than the computation time of SSA-T. %Note that in pathway elaboration, the simulations from the initial to the target states are guruanteed to have expected time that is linear in the distance  in the distance from initial to target states for $\beta < 1/2$ and we can achieve the same MAE as  Table~\ref{revisiontable} even with smaller values of $\beta < 1/2$. 

\begin{figure*}

\begin{minipage}{1\textwidth}
	\captionof{table}{\label{revisiontable}   Building truncated CTMCs with pathway elaboration  versus building truncated CTMCs with SSA (which we call SSA-T)  and TPS (which we call TPS-T).  For  pathway elaboration, $N=128$, $\beta = 0.6$, $K=256$, and $\pathwaytime=16 \text{ ns}$. For SSA-T and TPS-T, 128 successful simulations are used. The \textit{mean} statistics are averaged over the `\# of reactions'.  Also, the experiments for each truncation-based approach is  repeated three times and their mean is calculated.  MAE refers to the mean absolute error of a method with SSA. $|\hat{\statespace}|$ is the size of  the truncated state space.  The  mean matrix computation time for all methods is less than 0.1 (s).  See  Figure~\ref{revisionfigure} for an illustration of individual reaction predictions. }
			\centering 		\scalebox{0.95}{	
\begin{tabular}[b]{C{1cm}C{1cm}C{3cm}C{0.8cm}C{1.3cm}C{1.7cm}C{1.6cm}}\hline   Dataset No. & \# of reactions & Method& $\text{MAE} $&  Mean $  |\hat{\statespace}|$    & Mean computation  time (s)  \\ \hline 
\multirow{3}{*}{1} & \multirow{3}{*}{63} & Pathway elaboration & $0.04$ & $5.7\times10^{2}$&$1.0\times10^{3}$ \\  
                   &                     & SSA-T               & $0.03$ & $4.0\times10^{2}$&  $1.4\times10^{5}$    \\  
                   &                     & TPS-T               & $0.18$ & $1.6\times10^{2}$ &$2.0\times10^{4}$   \\ \hline 
\multirow{3}{*}{2} & \multirow{3}{*}{62} & Pathway elaboration & $0.03$ &  $1.8\times10^{3}$&$1.0\times10^{3}$  \\  
                   &                     & SSA-T               & $0.03$ &  $1.7\times10^{3}$&$1.3\times10^{4}$   \\ 
                   &                     & TPS-T               & $0.34$ &  $3.2\times10^{2}$&$1.7\times10^{3}$ \\\hline 
		\end{tabular}} 	
\end{minipage} 

\begin{minipage}{1\textwidth}
\centering
 \includegraphics[width=0.5\textwidth]{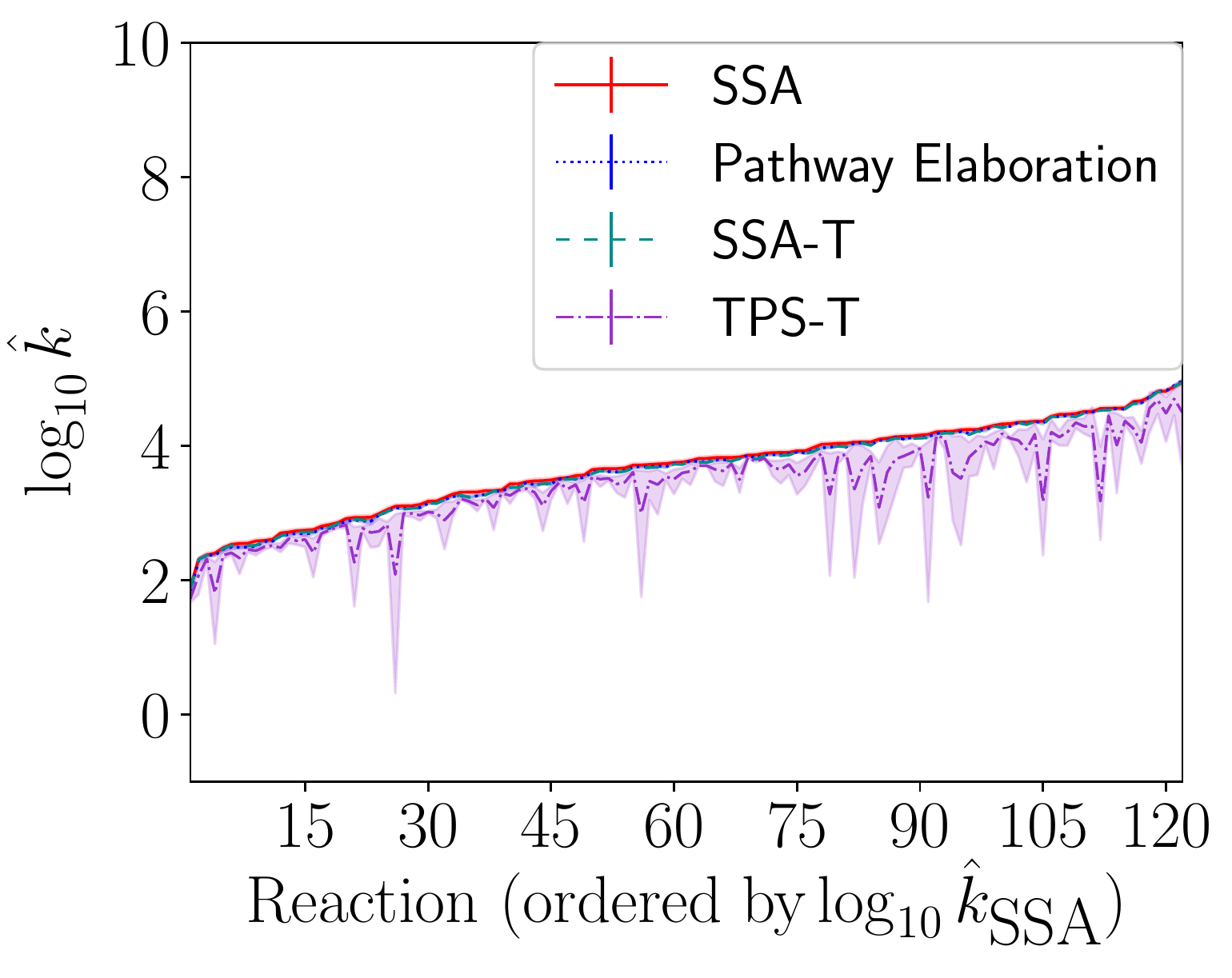}
\caption[]{  \label{revisionfigure}   The  $\logten \hat{k}$ of  SSA, pathway elaboration, SSA-T, and TPS-T  for datasets No. 1 and 2.  The reactions are ordered along the x-axis by  their predicted $\logten \kssa$.   The experiments for each truncation-based approach is repeated three times, where for each reaction,   $\logten \hat{k}$  is  calculated by the average of the three experiments. The shaded area for each truncation-based approach indicates the range (minimum to maximum) of its three experiments.  The shaded area  for SSA indicates the 95\% percentile bootstrap of the $\logten \kssa$. See Table~\ref{revisiontable} for parameter settings and mean statistics. }
\end{minipage}\hfill
\end{figure*}

\subsubsection{$\delta$-Pruning} Figure~\ref{deltapruning} shows how $\delta$-pruning affects the quality of the $\logten$ reaction rate constant estimates, the size of the state spaces, and the  computation time of solving the matrix equations, for  dataset No. 6.  The MFPT estimates satisfy the bound given by \eqn~\ref{deltapruningbound} whilst  $\delta$-pruning   reduces the computation time for solving the matrix equations  by an order of magnitude for $\delta=0.6$. Using larger values of $\delta$ we can further decrease the computation time. If we reuse the CTMCs many times, such as in parameter estimation, $\delta$-pruning could help reduce computation time significantly.

\begin{figure}
\center
 \subfloat[]{ \includegraphics[width=0.32\textwidth]{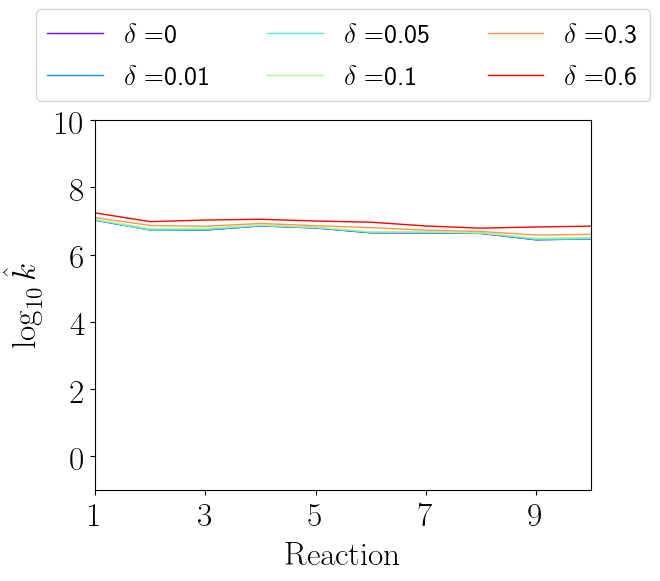}}
  \subfloat[]{ \includegraphics[width=0.32\textwidth]{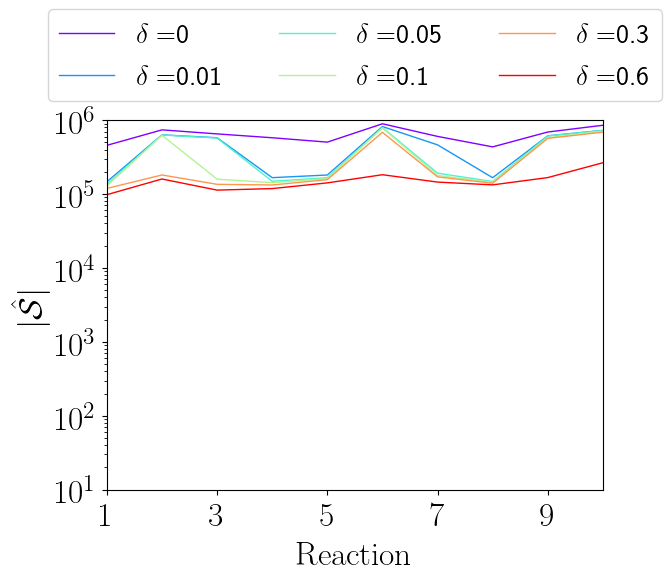}}
%  \begin{minipage}{0.23\textwidth}
 \subfloat[]{ \includegraphics[width=0.32\textwidth]{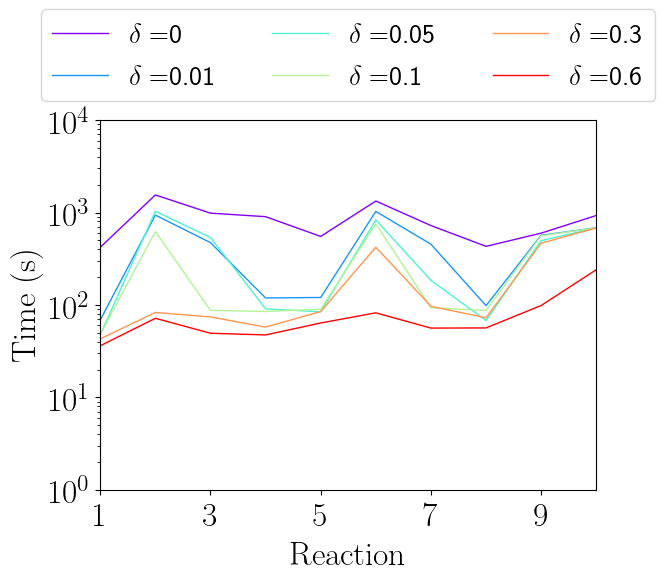}} %\end{minipage} 
 % \begin{minipage}{.23\textwidth}
\caption[]{\label{deltapruning}The effect of $\delta$-pruning with different values of $\delta$ on truncated  CTMCs that are built with  pathway elaboration  ($N=128$, $\beta=0.6 $, $K=1024$, $\pathwaytime=16 \text{ ns}$) for  dataset No. 6. $\delta=0$ indicates $\delta$-pruning is not used.   (\textbf{a}) The  $\logten \hat{k}$. (\textbf{b)} The size of the truncated state space $|\hat{\statespace}|$.  (\textbf{c}) The computation time for solving \eqn~\ref{mfptequation}.} %\end{minipage}

\end{figure}

\subsection{Parameter Estimation}\label{section-parameterestimation}
%using Pathway Elaboration and $\delta$-pruning

\begin{figure}\center
\subfloat[]{ \includegraphics[width=0.35\textwidth]{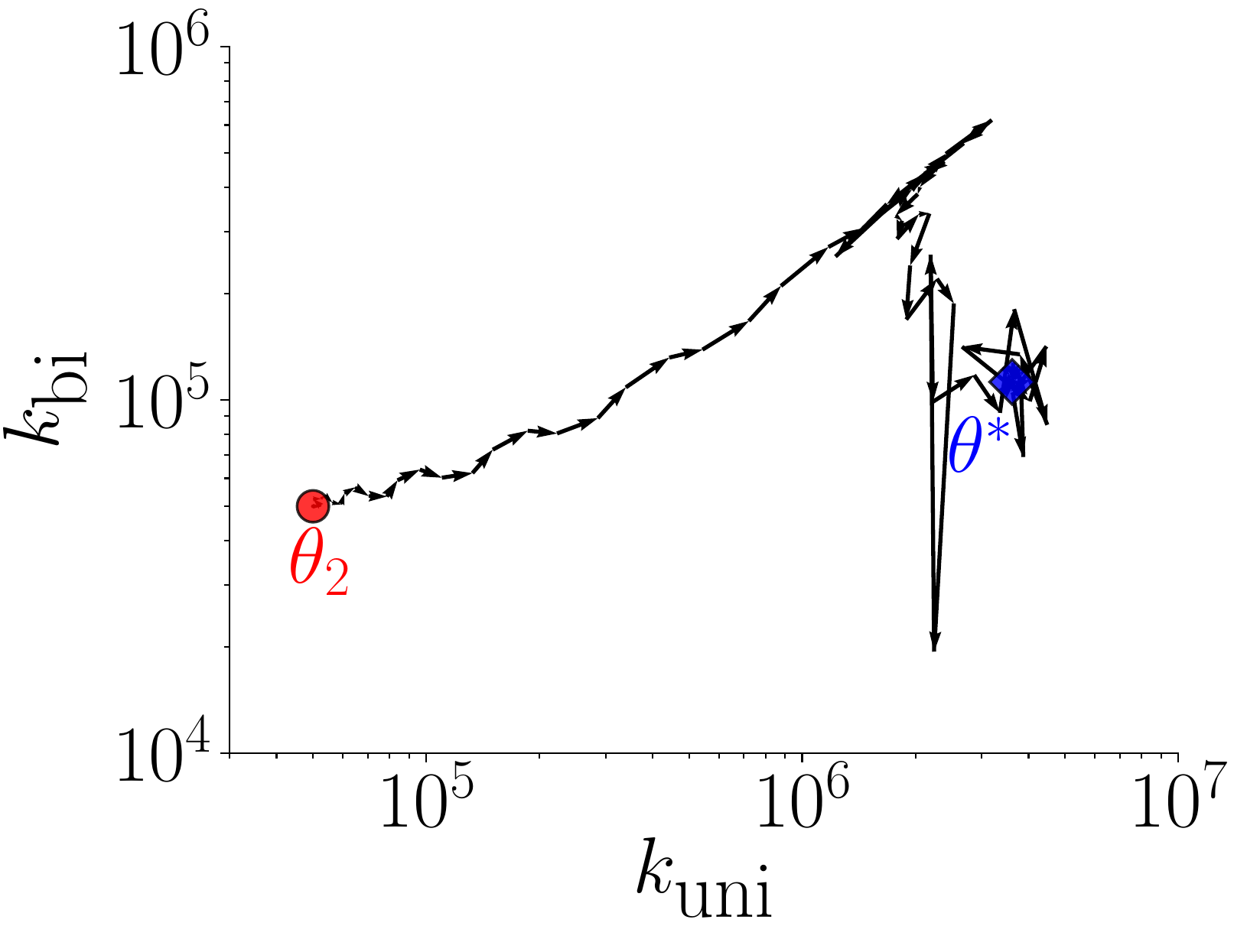}}
%\subfloat[]{ \includegraphics[width=0.24\textwidth]{}}
\subfloat[]{ \includegraphics[width=0.35\textwidth]{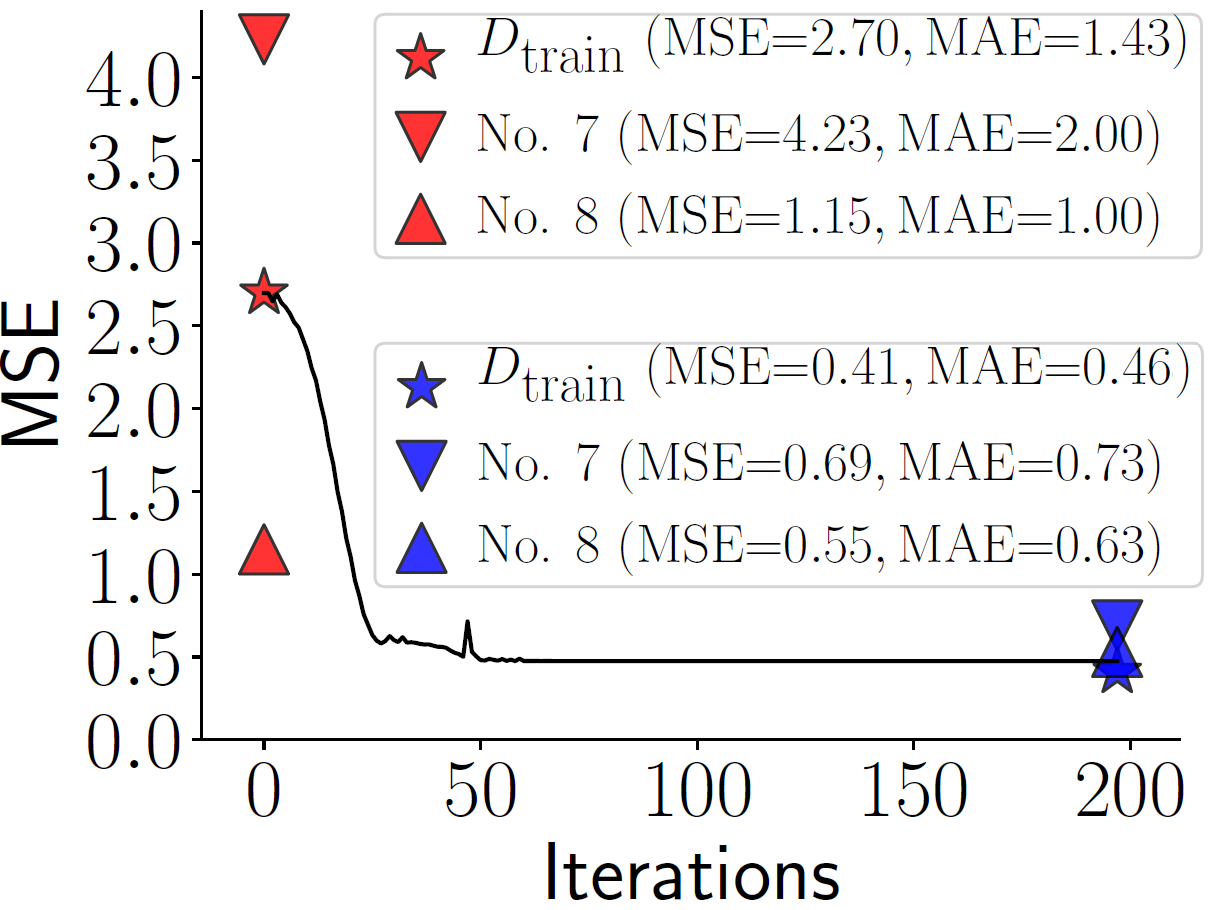}}
 \caption[]{Results of parameter estimation  using  pathway elaboration ($N=128$, $\beta = 0.4$, $K=256$,  $\pathwaytime=16 \text{ ns}$).   (\textbf{a}) The parameters are optimized from an initial simplex of $\theta_2$ and its perturbations to $\theta^*= \{ \kUni \approx 3.61 \times 10^6 \text{ s}^{-1}, \kBi \approx 1.12 \times 10^5 \text{ M}^{-1}\text{ s}^{-1} \} $.   (\textbf{b}) The parameters are optimized using $\mathcal{D}_\text{train}$, shown with a line graph, and evaluated on  dataset No. 7 and No. 8.  The   markers  are annotated with the  MSE and MAE of the datasets when the truncated CTMCs are built from scratch using $\theta_2$ and $\theta^*$.%The   $\scriptstyle\openbigstar[.8]$, $\scriptstyle\opentriangledown[.8]$,  and $\scriptstyle\opentriangle[.8]$  markers  show the MSE  of $\mathcal{D}_\text{train}$,  dataset No. 7, and dataset No. 8, respectively, when the truncated CTMCs are built from scratch and parameterized with $\theta_2$ and $\theta^*$. 
 } 
\label{figure-parameterestimation}
\end{figure}

In the previous subsections  the underlying parameters of the CTMCs were fixed. Here we assume the parameters of the kinetic model of the CTMCs are  not calibrated and  we use pathway elaboration  to build truncated CTMCs  to  rapidly evaluate  perturbed parameter sets during parameter estimation.   We use the 237 reactions  indicated as  $\mathcal{D}_\text{train}$  in Table~\ref{sourceTable}  as our training set.  We use the 30 rare event reactions  indicated  as  $\mathcal{D}_\text{test}$  in Table~\ref{sourceTable}   to show that given a well-calibrated parameter set for the CTMC model, the pathway elaboration method can estimate  MFPTs and reaction rate constants  of reactions close to their experimental  measurement. 

%Again, we use $\logten$ values since the reactions rate constants cover many orders of magnitude. 
We seek the parameter set that minimizes the mean squared error (MSE)  as
\begin{equation}
  \begin{split}&
   \theta^* =  \underset{\theta}{\argmin}   \frac{1}{|\mathcal{D}_\text{train}|}\sum_{r \in \mathcal{D}_\text{train}}{ ( \logten \tau^r - \logten  \hat{\tau}_\text{PE}^r(\theta) )^2 }=  \\&  \underset{\theta}{\argmin}   \frac{1}{|\mathcal{D}_\text{train}|}\sum_{r \in \mathcal{D}_\text{train}}{ ( \logten k^r - \logten  \hat{k}_\text{PE}^r(\theta))^2 }, 
  \end{split}
      \label{gmmeq}
 \end{equation} 
which is a common  cost function for regression problems.  The equality follows from \eqns~ \ref{rate-mfpt-unimolecular} and~\ref{rate-mfpt-bimolecular}.   We use the Nelder-Mead optimization algorithm~\cite{nelder1965simplex,virtanen2020scipy} to minimize the MSE.    We  initialize the simplex in the algorithm  with $\theta_2 = \{ \kUni = 5 \times 10^4 \text{ s}^{-1}, \kBi = 5 \times 10^4 \text{ M}^{-1}\text{s}^{-1} \}$ in which we choose arbitrarily and two  perturbed parameter  sets.  Each perturbed parameter set is obtained from  $\theta_2$ by multiplying one of the parameters by $1.05$, which is the default implementation of  the optimization software~\cite{virtanen2020scipy}. For every reaction, we also  initialize the Multistrand kinetic model  with
$\theta_2 $. We  build truncated CTMCs with pathway elaboration ($N=128$, $\beta = 0.4$, $K=256$,  $\pathwaytime=16 \text{ ns}$). Whenever the matrix equation solving time is large (here we consider a time of $120$ s large), we use $\delta$-pruning  (here we use  $\delta$ values of  $0.01-0.6$) to reduce the time.   During the optimization, for a new parameter set  we update the parameters in the kinetic model of the truncated CTMCs and we reuse the truncated CTMC to evaluate the parameter set. Similar to our previous work~\cite{zolaktaf2019efficient}, to reduce the bias and to ensure that
the truncated CTMCs are fair with respect to the optimized parameters, we  can occasionally rebuild  truncated CTMCs  from scratch. 

%We evaluate the obtained parameter set with the MSE and the MAE of pathway elaboration with experimental measurements. 
Although we use the MSE of pathway elaboration with experimental measurements as our cost function in the optimization procedure, the MAE of pathway elaboration with experimental measurements  also  decreases. Figure~\ref{figure-parameterestimation} shows  how  the parameters, the MSE, and the MAE change during optimization. The markers are annotated with the MSE  and the MAE of  $\mathcal{D}_\text{train}$ and datasets No. 7-8   when truncated CTMCs are built from scratch.   The MAE of  $\mathcal{D}_\text{train}$ with the initial parameter set $\theta_2$ is $1.43$. The  optimization finds  $\theta^*= \{ \kUni \approx 3.61 \times 10^6 \text{ s}^{-1}, \kBi \approx 1.12 \times 10^5 \text{ M}^{-1}\text{ s}^{-1} \} $  and reduces the MAE of $\mathcal{D}_\text{train}$ to $0.46$.   The  MAE of dataset No. 7 and dataset No. 8,   which are not used in the optimization, reduce  from  $2.00$ to  $0.73$ and from $1.00$ to $0.63$, respectively.

%For a fixed  setting of the pathway elaboration method  parameters and the parameter set of the CTMC model, given an estimate of the MAE of pathway elaboration with SSA ($\text{MAE}_{\text{PE,SSA}}^{\mathcal{D}}$) and an estimate of the MAE of pathway elaboration with experimental measurements ($\text{MAE}_{\text{PE,Experiments}}^{\mathcal{D}}$),  we can obtain an upper bound on the MAE of SSA with experimental measurements ($\text{MAE}_{\text{SSA,Experiments}}^{\mathcal{D}}$) as $ \text{MAE}_{\text{SSA,Experiments}}^{\mathcal{D}} \leq \text{MAE}_{\text{PE,SSA}}^{\mathcal{D}} +  \text{MAE}_{\text{PE,Experiments}}^{\mathcal{D}}$, which follows from   the triangle inequality of the $L1$ norm.  Given an upper bound on $\text{MAE}_{\text{PE,SSA}}^{\mathcal{D}}$ that does not depend on the parameter set of the CTMC model, we can tighten this inequality by decreasing  $\text{MAE}_{\text{PE,Experiments}}^{\mathcal{D}}$. Although we do not currently have an analytical upper bound on  $\text{MAE}_{\text{PE,SSA}}^{\mathcal{D}}$, in Table~\ref{kssavskpathwaytable} and Figures~\ref{figure-maevssize} and~\ref{kssavskpathwayfigure},  we numerically showed that we can achieve  a reasonable  value for $\text{MAE}_{\text{PE,SSA}}^{\mathcal{D}}$ with  pathway elaboration.

 Overall, the experiment in this subsection shows  that pathway elaboration enables  MFPT estimation of rare events. It predicts their MFPTs close to their experimental measurements given  an accurately calibrated model for their CTMCs.  Moreover,  it shows that pathway elaboration   enables the rapid evaluation of perturbed parameters and makes feasible tasks such as parameter estimation which benefit from such methods.  On average for  the 30 reactions in  the testing set,  pathway elaboration takes less than two days, whereas SSA  is  not feasible within two weeks.   The entire experiment in Figure~\ref{figure-parameterestimation} takes less than five days parallelized on 40 processors. Note that clearly our optimization procedure could  be improved, for example by using a larger dataset or  a more flexible kinetic model.  However, this experiment is a preliminary study; we leave a  rigorous  study on calibrating  nucleic acid kinetic models with pathway elaboration and possible improvements to future studies.

\section{Discussion}
 Motivated by the problem of predicting nucleic acid kinetics, we  address the problem of estimating MFPTs of rare events in large CTMCs and also  the rapid evaluation of perturbed parameters.      We propose the  pathway elaboration method, which is a time-efficient  probabilistic truncation-based approach for   MFPT estimation in CTMCs.    We conduct computational  experiments on  a wide range of  experimental  measurements to show pathway elaboration  is suitable for predicting nucleic acid kinetics. In summary, our results are promising, but there is still room for improvement.

Using pathway elaboration, in the best possible case, the sampled  region of states and transitions is obtained faster than SSA, but without significant bias in the collected states and transitions. The sampled region may however qualitatively differ from what would be obtained from SSA, which may compromise the  MFPT estimates. Moreover, reusing truncated CTMCs for significantly perturbed parameters could lead to inaccurate estimation of the MFPT in the original CTMC. 
In Section~\ref{section-method},   for exponential decay processes, we introduced a method that could help us quantify the error of the MFPT estimate.  However, it might  be slow in practice.  So how can we efficiently tune these parameters?  Similar to SSA, for a fixed $\beta$ and when $K=0$ and $\kappa=0$,  we could increase $N$ until  the estimated MFPT  stops changing significantly (based on the law of large numbers it will converge).  Note that for $K=0$ and $\kappa=0$   we could compute the MFPT by  computing the average of the  biased paths without solving matrix equations.  As shown in Proposition~\ref{proposition1}, if we set $\beta$  to less than $1/2$, then   biased paths will reach  target states in   expected time that is linear in the distance from initial to target states. For setting $K$, one possibility is to consider the number of neighbors of each state. A reaction where states have a lot of neighbors requires a larger $K$ compared to  a reaction where states have a smaller number of neighbors.   $\kappa$ should be set with respect to $K$. As stated  in Section~\ref{results},   a large value of $\kappa$   along with a small value of $K$ could result in excursions that do not reach  any target state and lead to overestimates of the MFPT.  One could set $\kappa$ to a small value and then increase  $K$ until the MFPT estimate stops changing, and could repeat this process while feasible.

In  the pathway elaboration method, we estimate MFPTs by solving matrix equations.  Thus, its performance  depends on the accuracy and speed of matrix equation  solvers.  For example, applying matrix equation solvers may not be  suitable if the initial states lie very far from the target states,  since the size of the truncated  CTMCs depends on the shortest-path distance between these states. Although solving matrix equations  through  direct  and iterative methods  has progressed,  both theoretically and  practically~\cite{fletcher1976conjugate,virtanen2020scipy,cohen2018solving}, solving  stiff (multiple time scales) or very large equations  could still be problematic in practice.    More stable and faster solvers would allow us to estimate MFPTs for  stiffer and larger truncated CTMCs. 
 Moreover, it might be possible to use fast updates for solving the matrix equations~\cite{brand2006fast,parks2006recycling}. Therefore, if we require to compute MFPT estimates with matrix equations as  we monotonically grow the size of the state  space or for a perturbed parameter set,  the total cost for solving all  the linear systems would be the same cost as solving the final linear system from scratch.

We  might be able to improve the pathway elaboration method to relieve the limitations discussed above. For example,   it might be possible  to use an ensemble of truncated CTMCs   to obtain an unbiased estimate of the MFPT~\cite{georgoulas2017unbiased}.  To avoid  excursions that lead to overestimation of the MFPT in the state elaboration step, we could run the pathway construction step from the last states visited in the state elaboration step. This would also relax the constraint of  having  reversible or detailed balance transitions.  Presumably,  an alternating approach of the two steps would make the approach more flexible.  Moreover, currently we run the state elaboration step from every state of the pathway with the same setting. Efficiently running the state  elaboration step as necessary, could reduce the time to construct the truncated CTMC in addition to the matrix computation time.

%Finally, we evaluated  the pathway elaboration method for MFPT estimation.   The method might be useful for other problems in Markov models, such as estimating transient probabilities.
Finally, we evaluated the pathway elaboration method for  predicting the MFPT of  nucleic acid kinetics. However, the method is generally applicable to \reversible CTMC models, such as chemical reaction networks~\cite{anderson2011continuous} and  protein folding~\cite{mcgibbon2015efficient}.

\clearpage
%% if your bibliography is in bibtex format, uncomment commands:
\bibliographystyle{imsart-nameyear} % Style BST file
\bibliography{ref}       % Bibliography file (usually '*.bib')

\begin{appendix}
\section{Appendix: The Mean Absolute Error  of the Pathway Elaboration Method for Nucleic Acid Kinetics}
\label{appendix}
\renewcommand{\thefigure}{A\arabic{figure}}

\setcounter{figure}{0}  

 Figures~\ref{figure-changingNandBeta} and~\ref{figure-changingKandKappa} represent  Figure~\ref{figure-maevssize}   by varying only two parameters at a time.  
 %See the main paper for the explanation of these figures. 
\begin{figure}
    \includegraphics[width=0.2\textwidth]{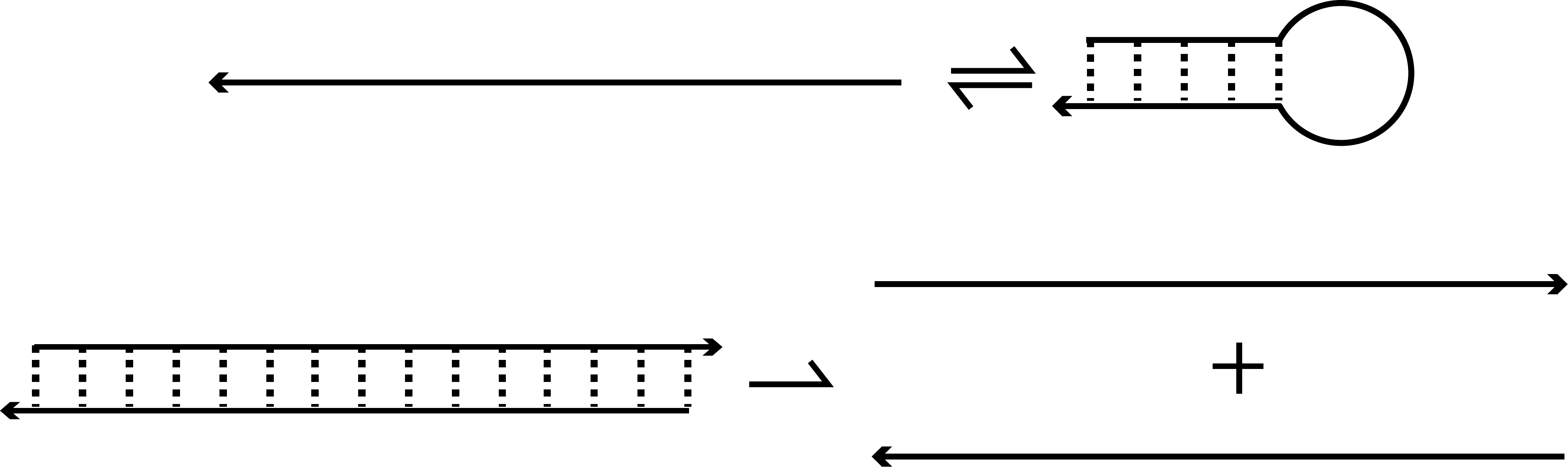}\quad  \quad
 \includegraphics[width=0.2\textwidth]{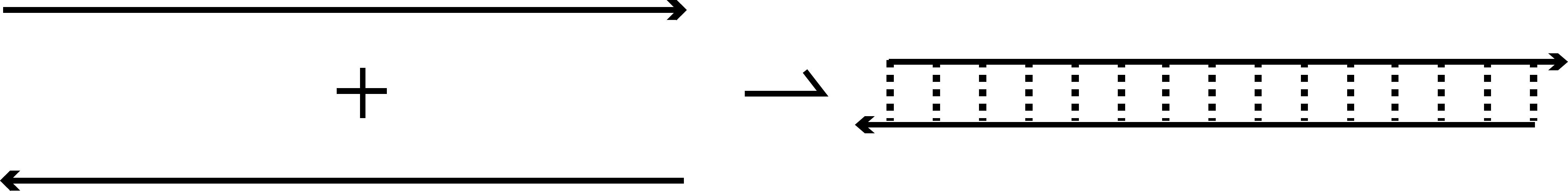} \quad \quad
 \includegraphics[width=0.18\textwidth]{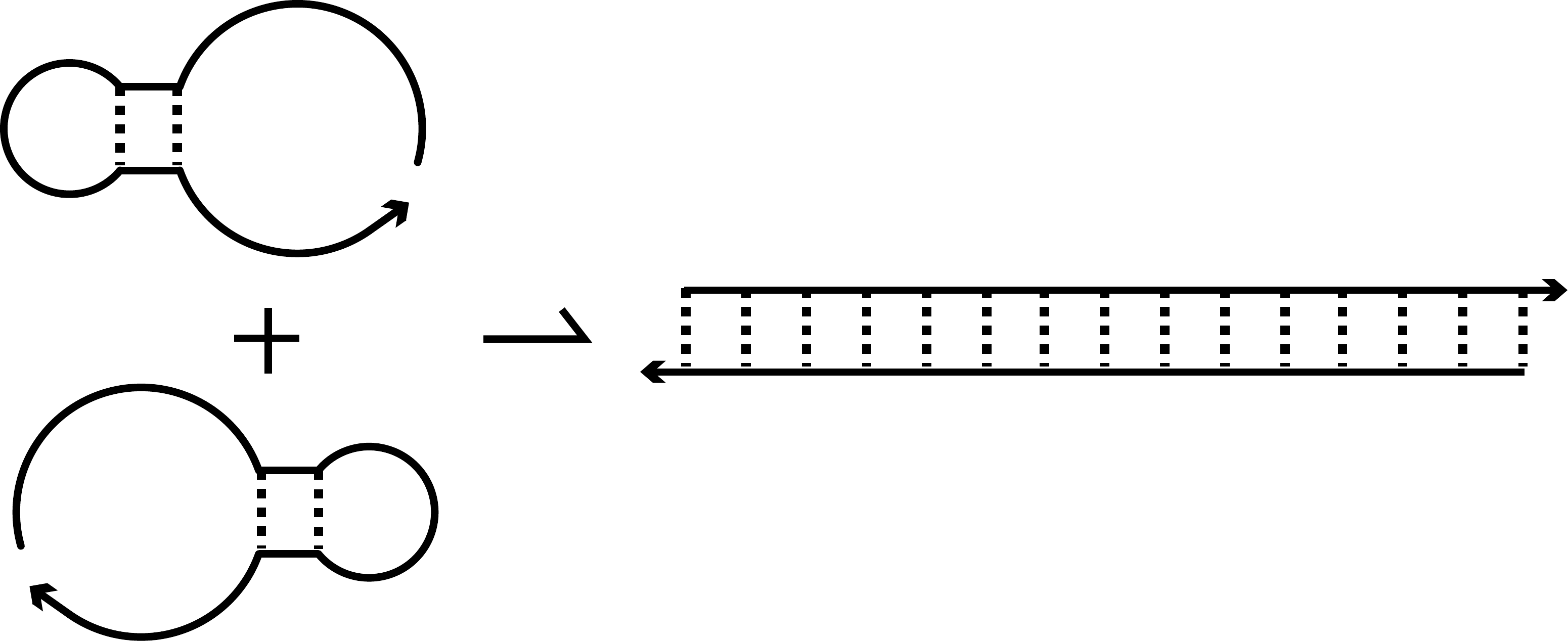} \quad\quad\quad
 \includegraphics[width=0.2\textwidth]{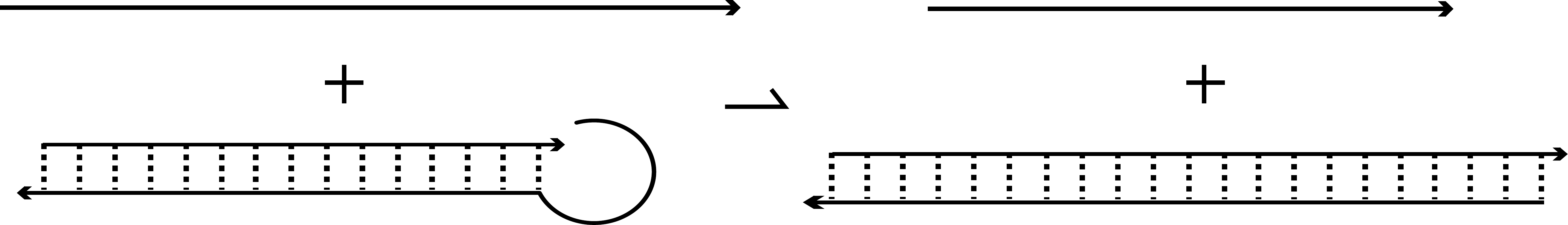}
\\
\vspace{0.5cm}
\hrule
      \text{ In  (\textbf{a}-\textbf{h}), $K=0$ and $\pathwaytime=0$ ns are fixed (the state elaboration  step is not used).}  \\ 
   
\subfloat[]{ \label{figure-changingNandBeta1}\includegraphics[width=0.245\textwidth]{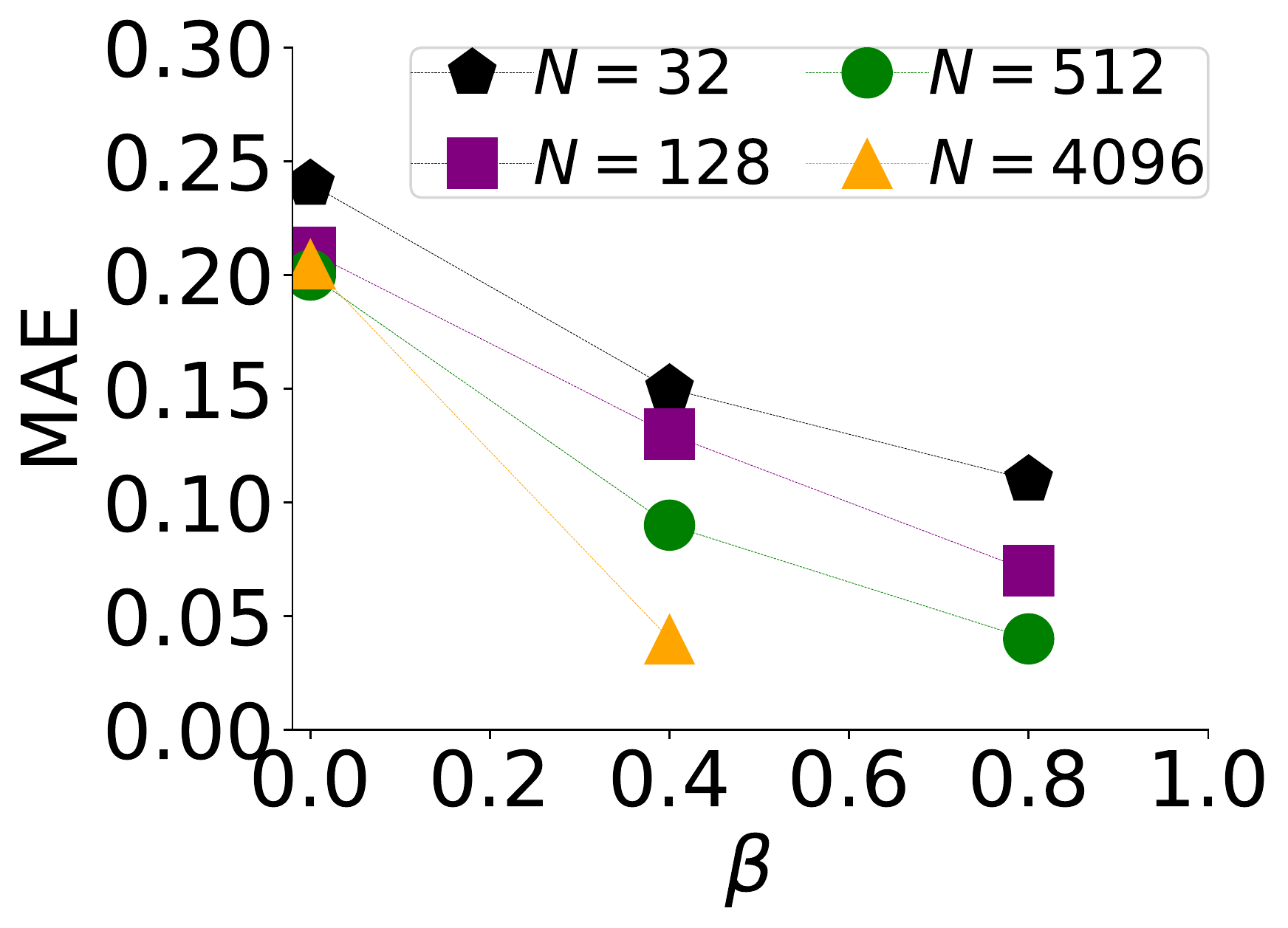}}
 \subfloat[]{\label{figure-changingNandBeta2} \includegraphics[width=0.245\textwidth]{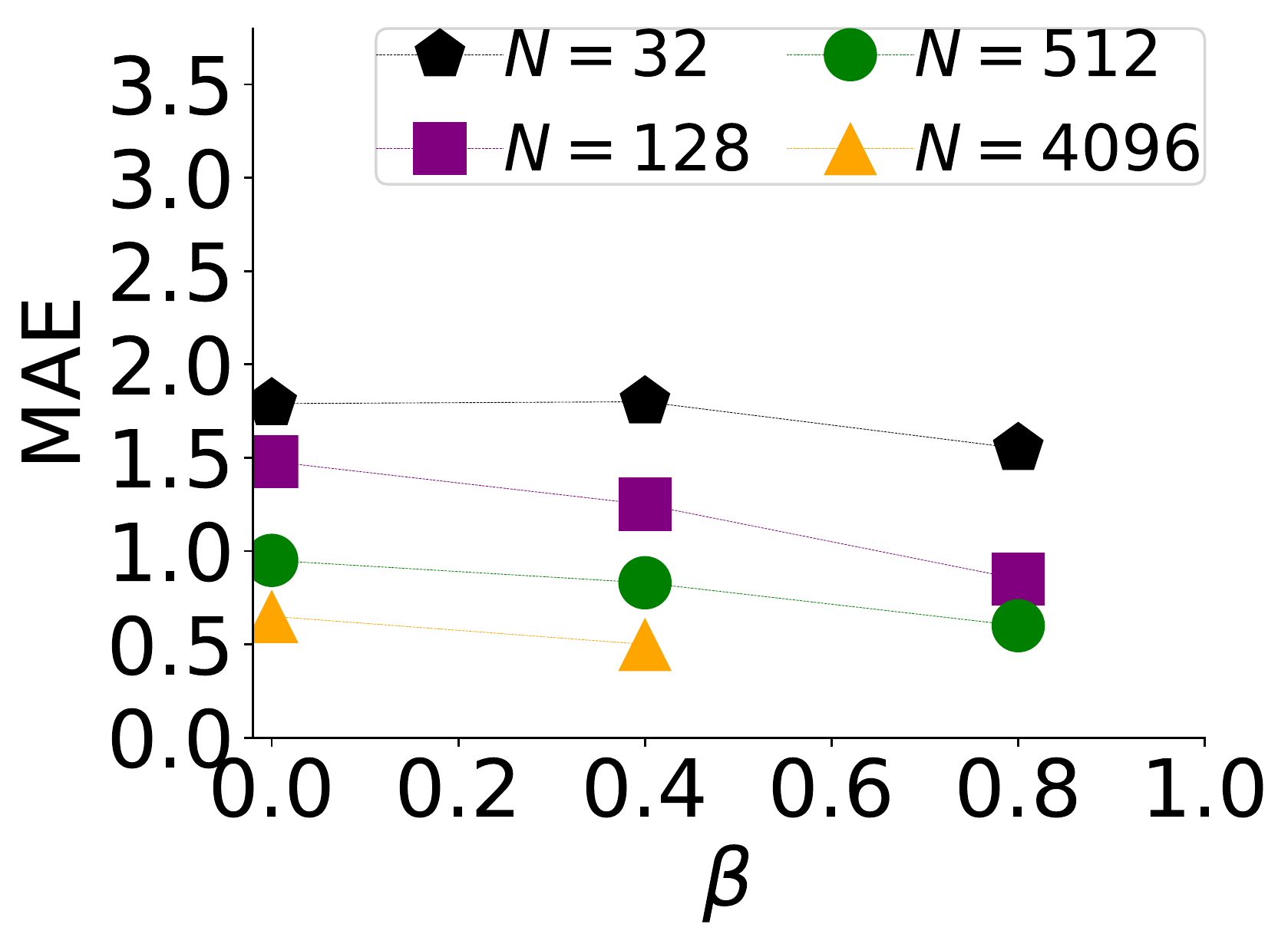}}
\subfloat[]{ \label{figure-changingNandBeta3}\includegraphics[width=0.245\textwidth]{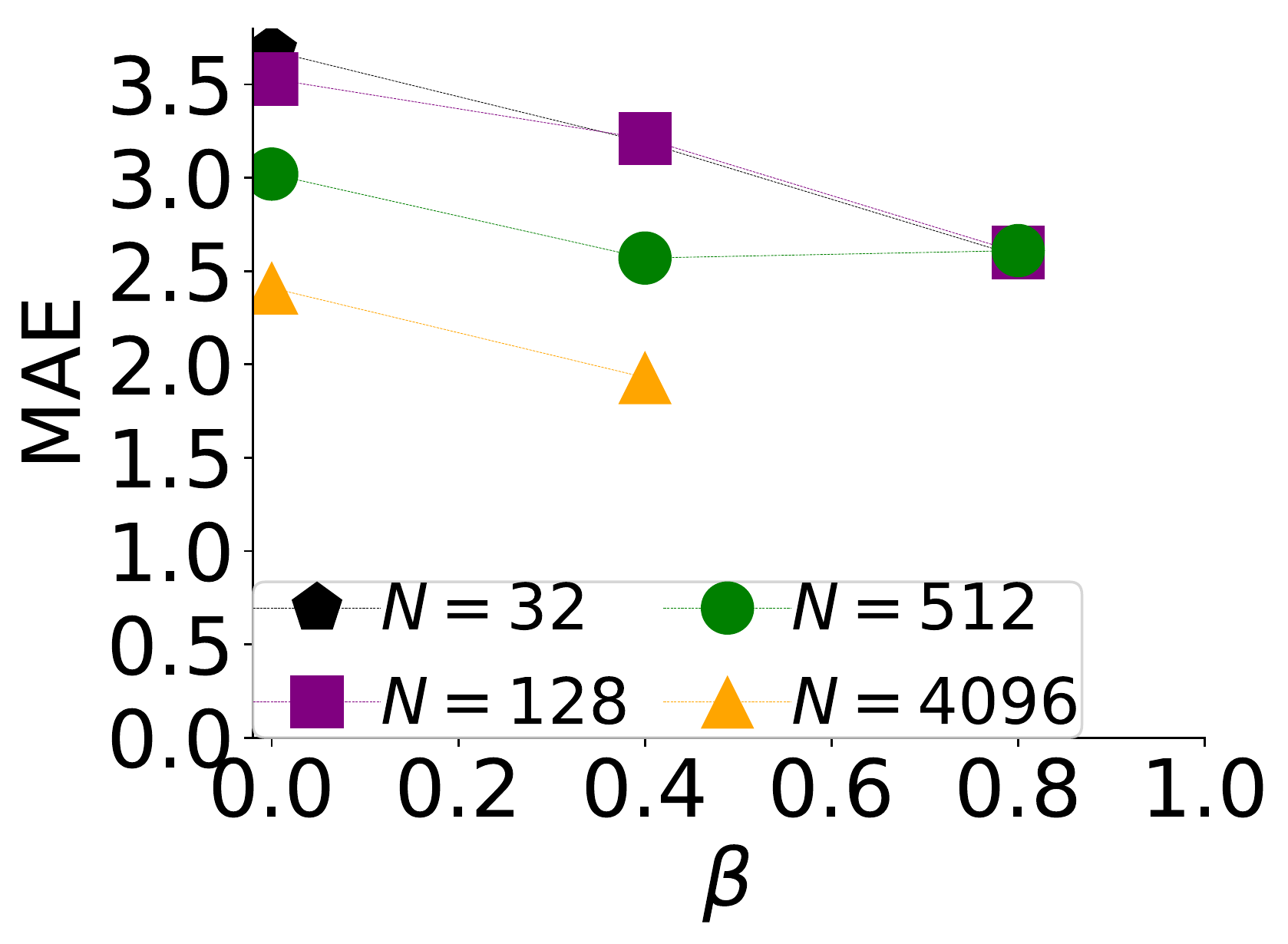}}
  \subfloat[]{\label{figure-changingNandBeta4} \includegraphics[width=0.245\textwidth]{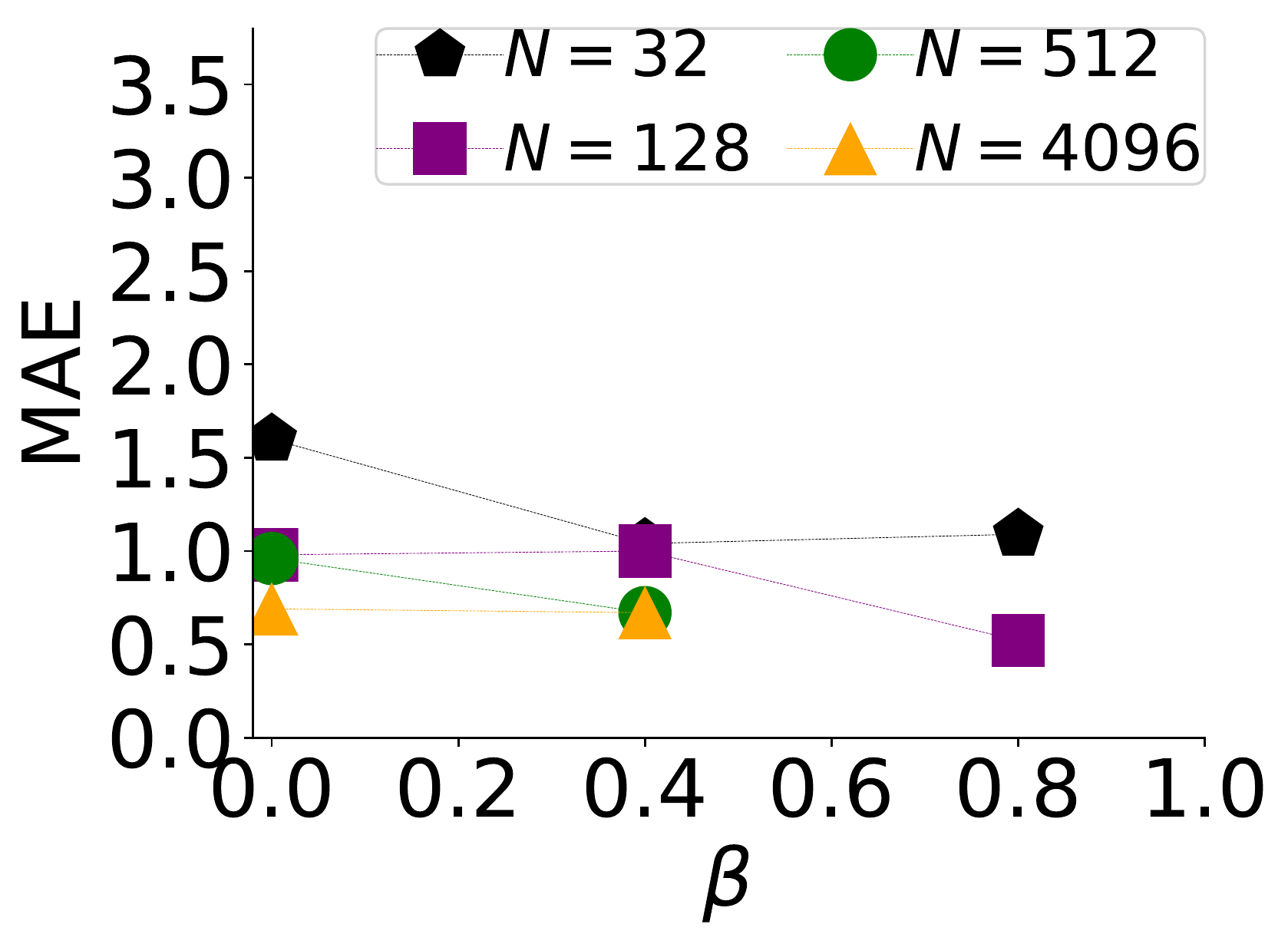}}
     \\ 
    
 \subfloat[]{\label{figure-changingNandBeta5}
 \includegraphics[width=0.245\textwidth]{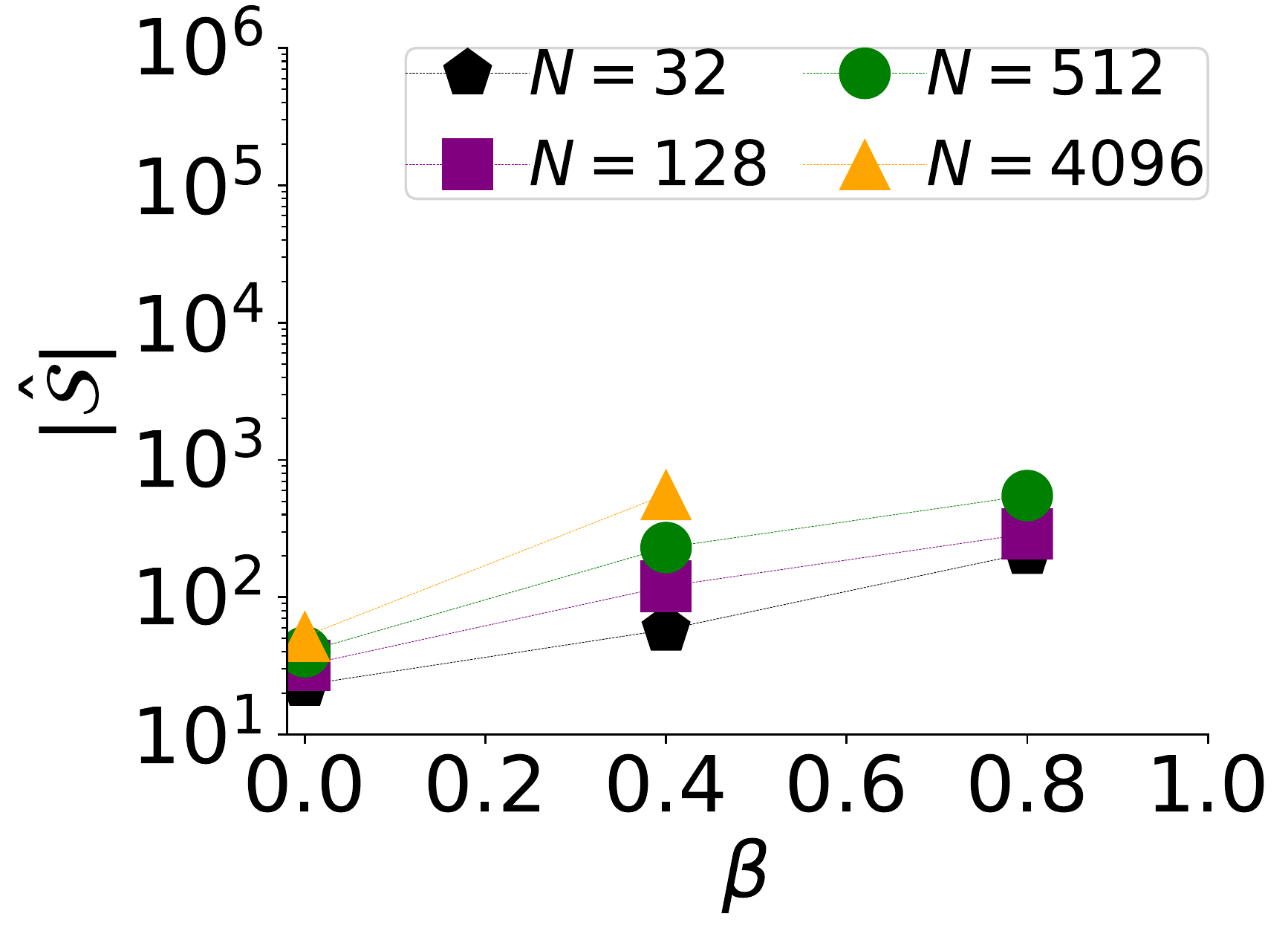}}
 \subfloat[]{ \label{figure-changingNandBeta6}\includegraphics[width=0.245\textwidth]{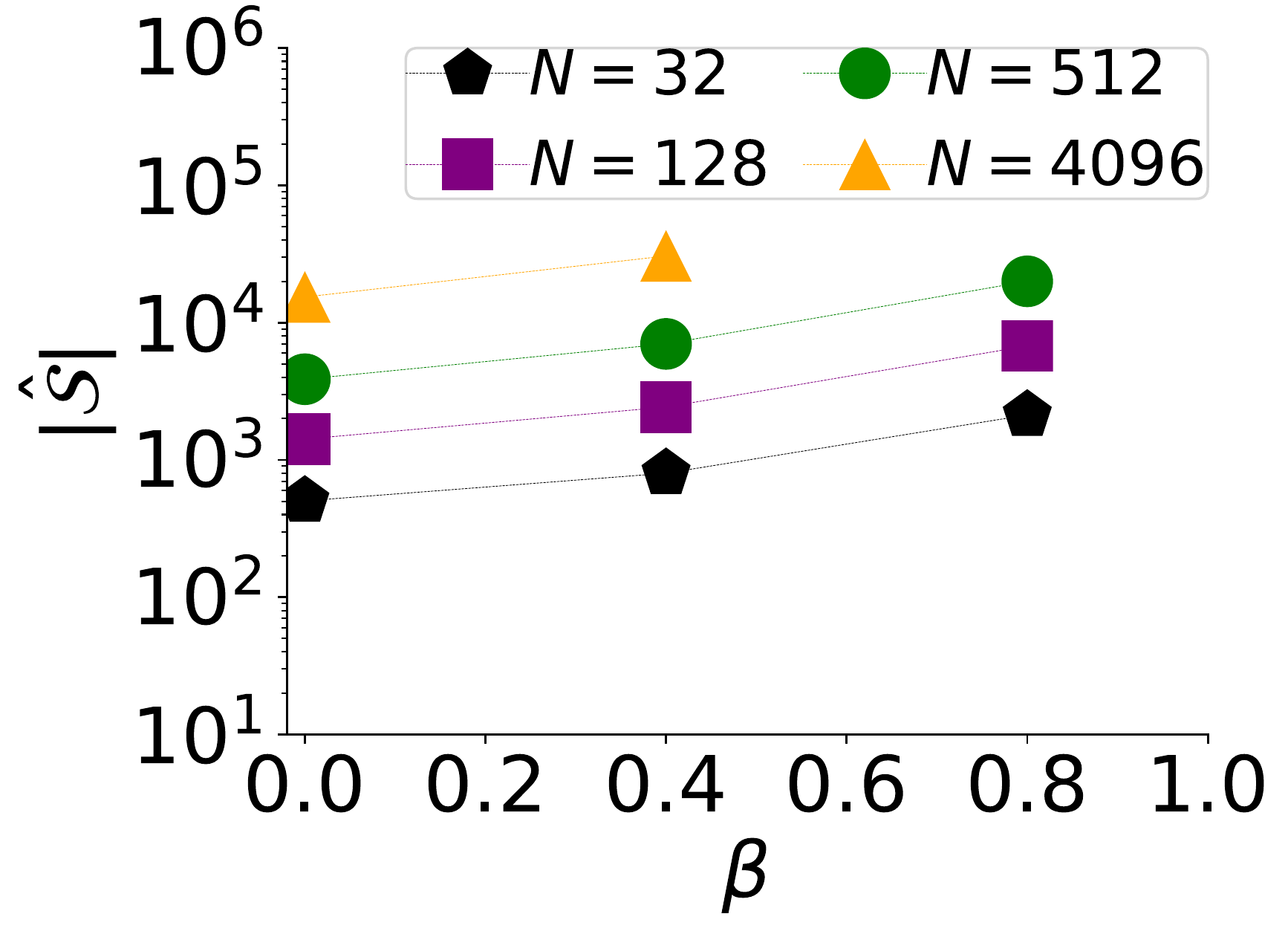}}
\subfloat[]{\label{figure-changingNandBeta7} \includegraphics[width=0.245\textwidth]{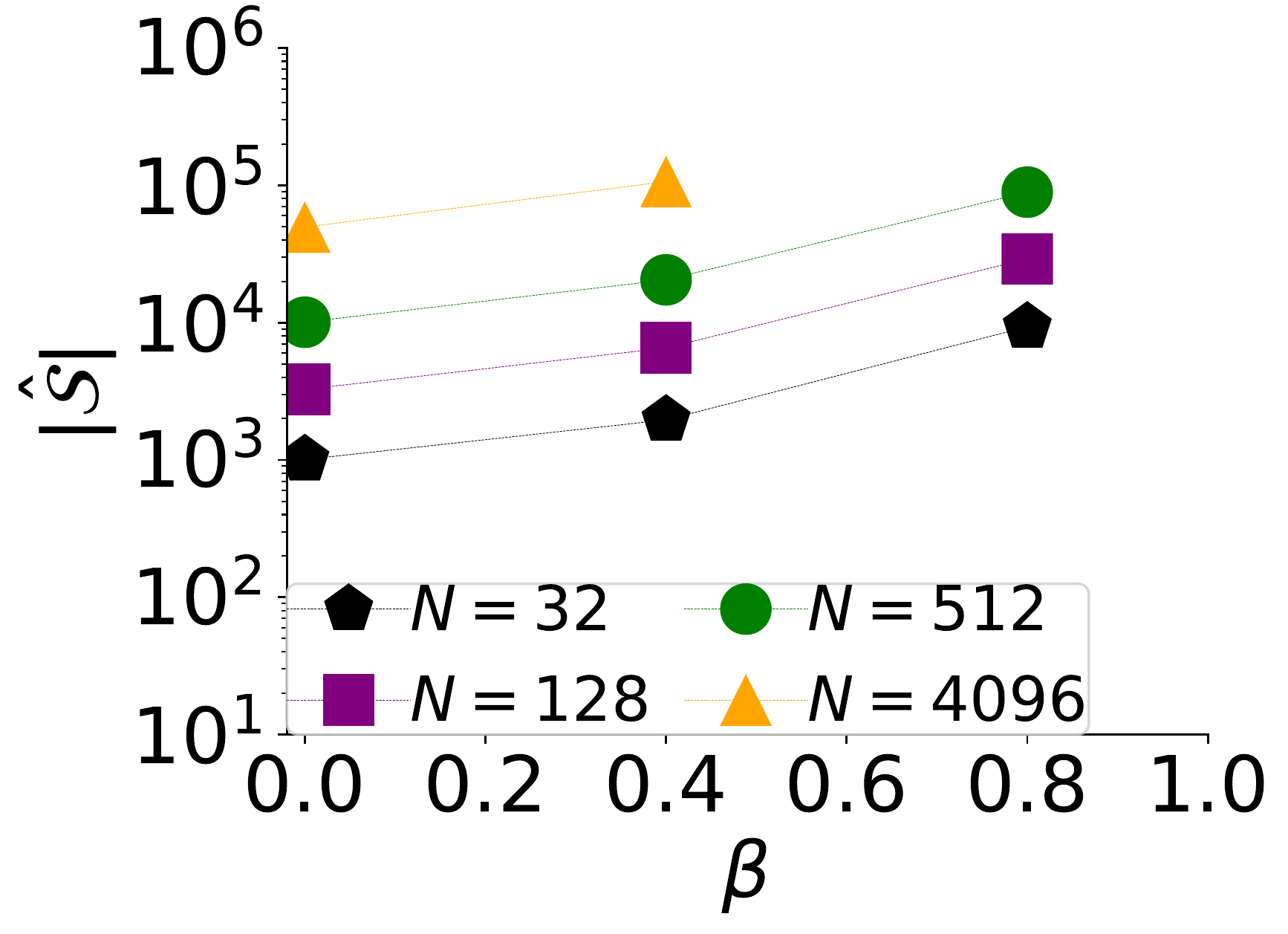}}
  \subfloat[]{ \label{figure-changingNandBeta8}\includegraphics[width=0.245\textwidth]{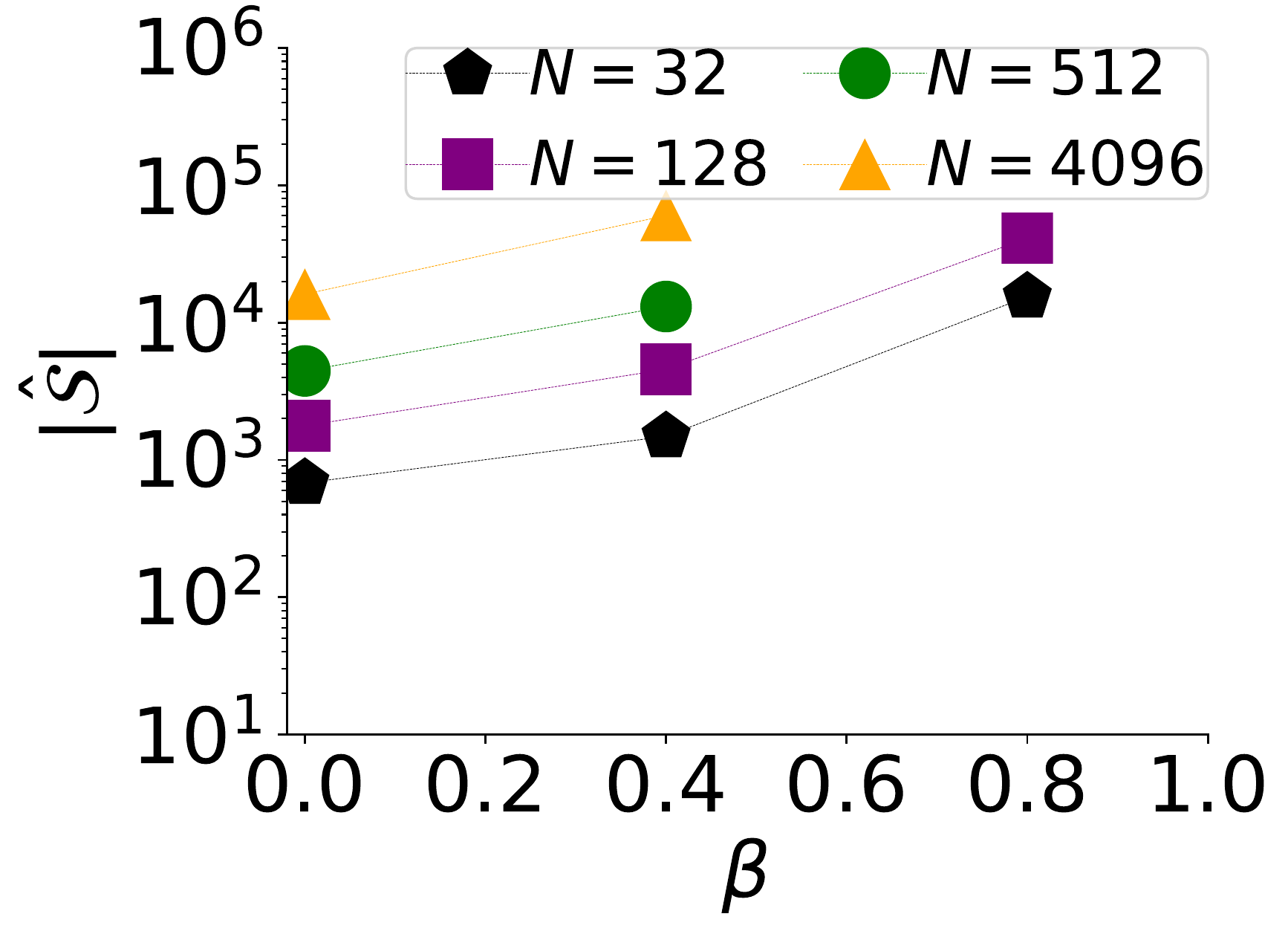}}  
 %  \noindent\hrule  In  (\textbf{i}-\textbf{p}), $K=128$ and $\beta = 0.6$ are fixed. \\
  \hrule   \vspace{0.2cm}
  \hrule
     \text{In  (\textbf{a}-\textbf{h}), $K=256$ and $\pathwaytime=16 \text{ ns}$ are fixed.}   \\ 
   
\subfloat[]{\label{figure-changingNandBeta9} \includegraphics[width=0.245\textwidth]{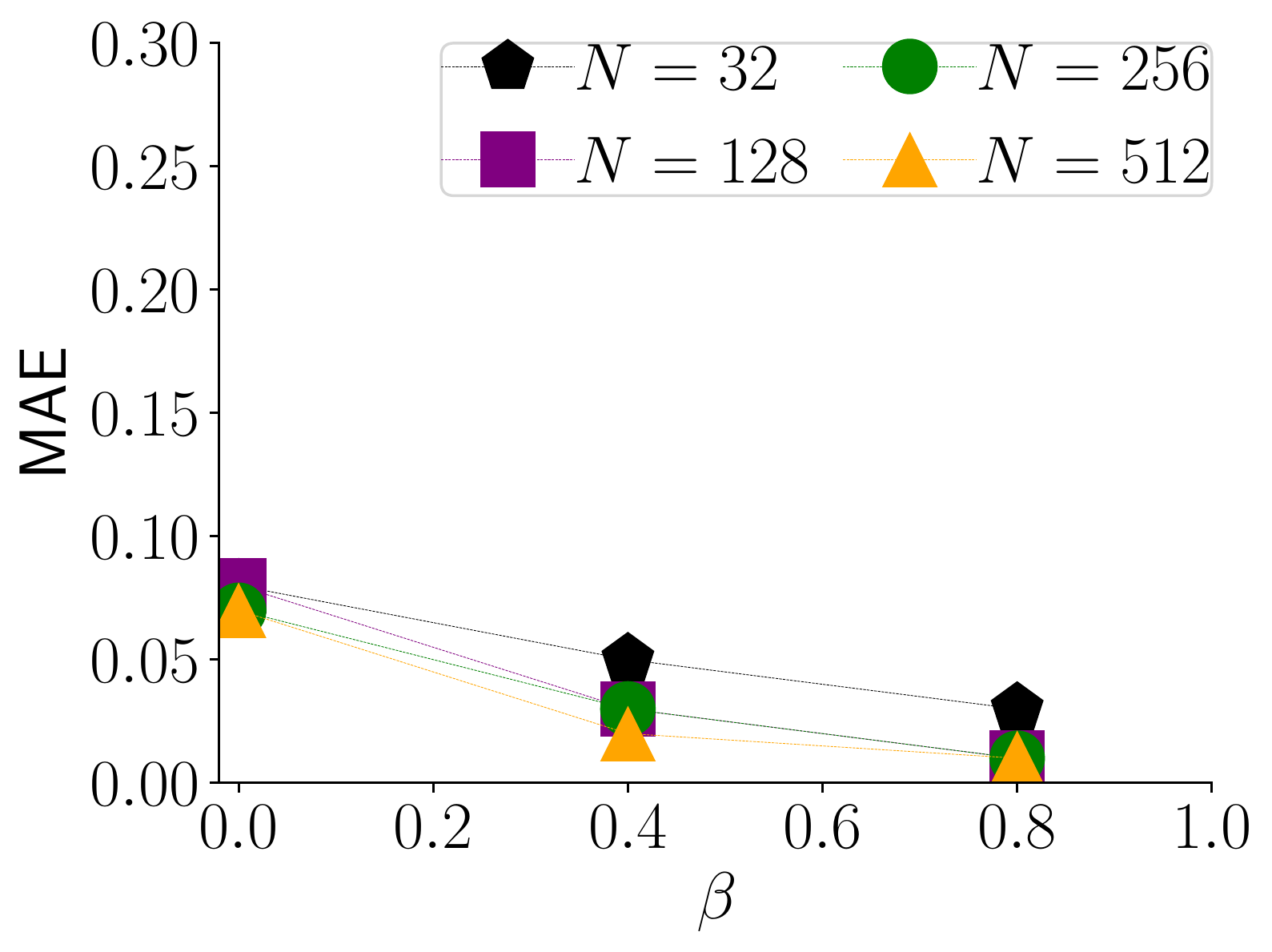}}
 \subfloat[]{\label{figure-changingNandBeta10} \includegraphics[width=0.245\textwidth]{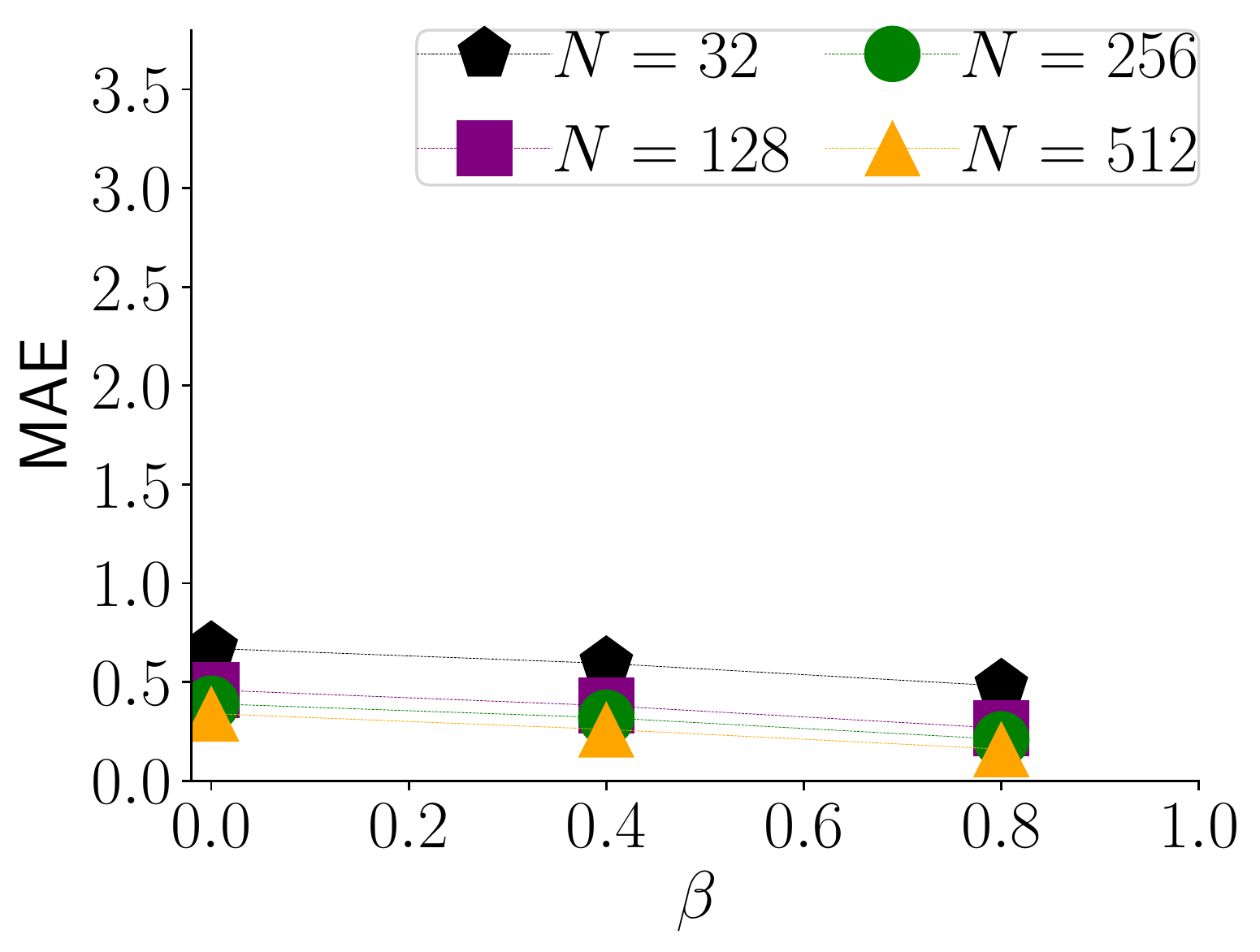}}
\subfloat[]{\label{figure-changingNandBeta11} \includegraphics[width=0.245\textwidth]{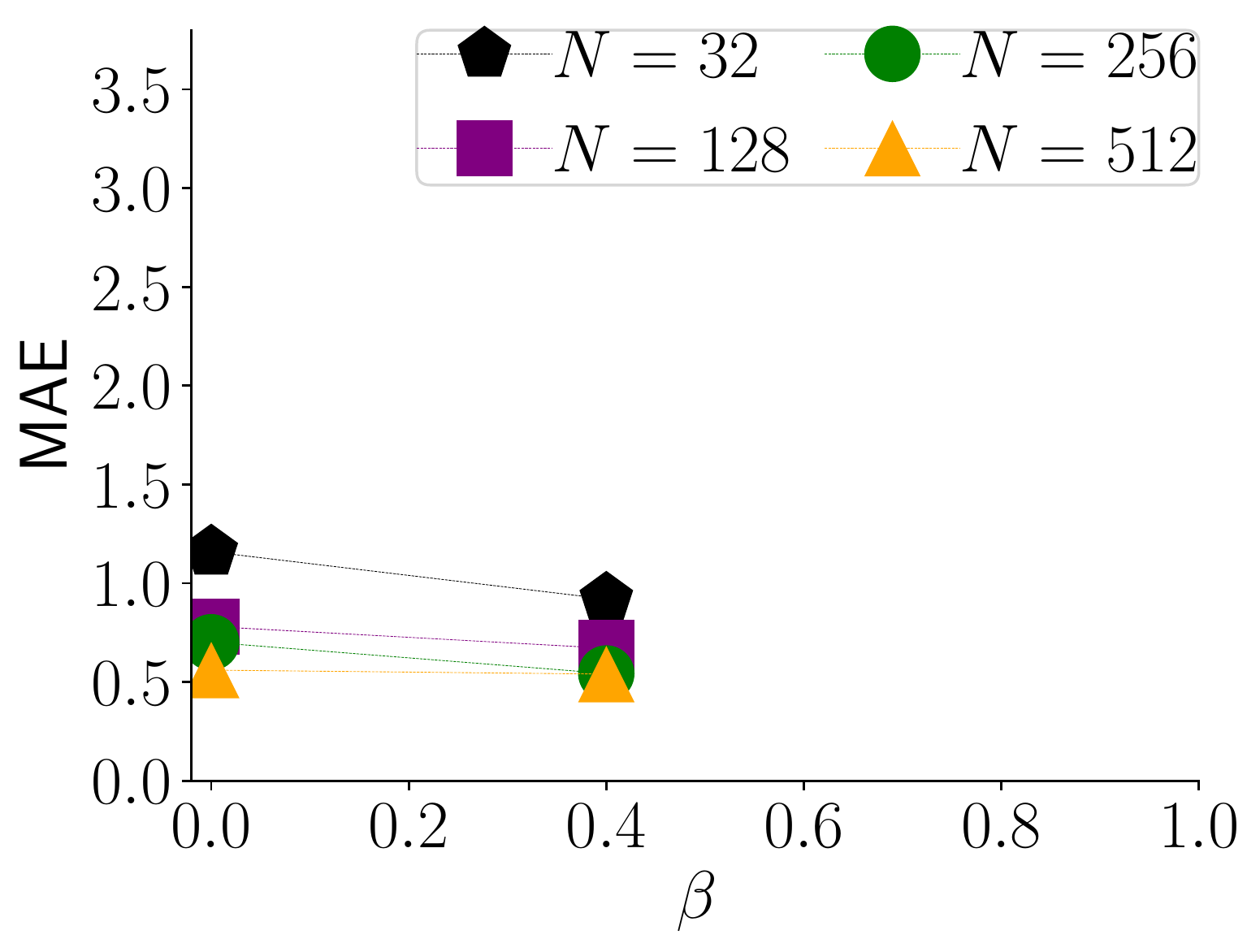}}
  \subfloat[]{\label{figure-changingNandBeta12} \includegraphics[width=0.245\textwidth]{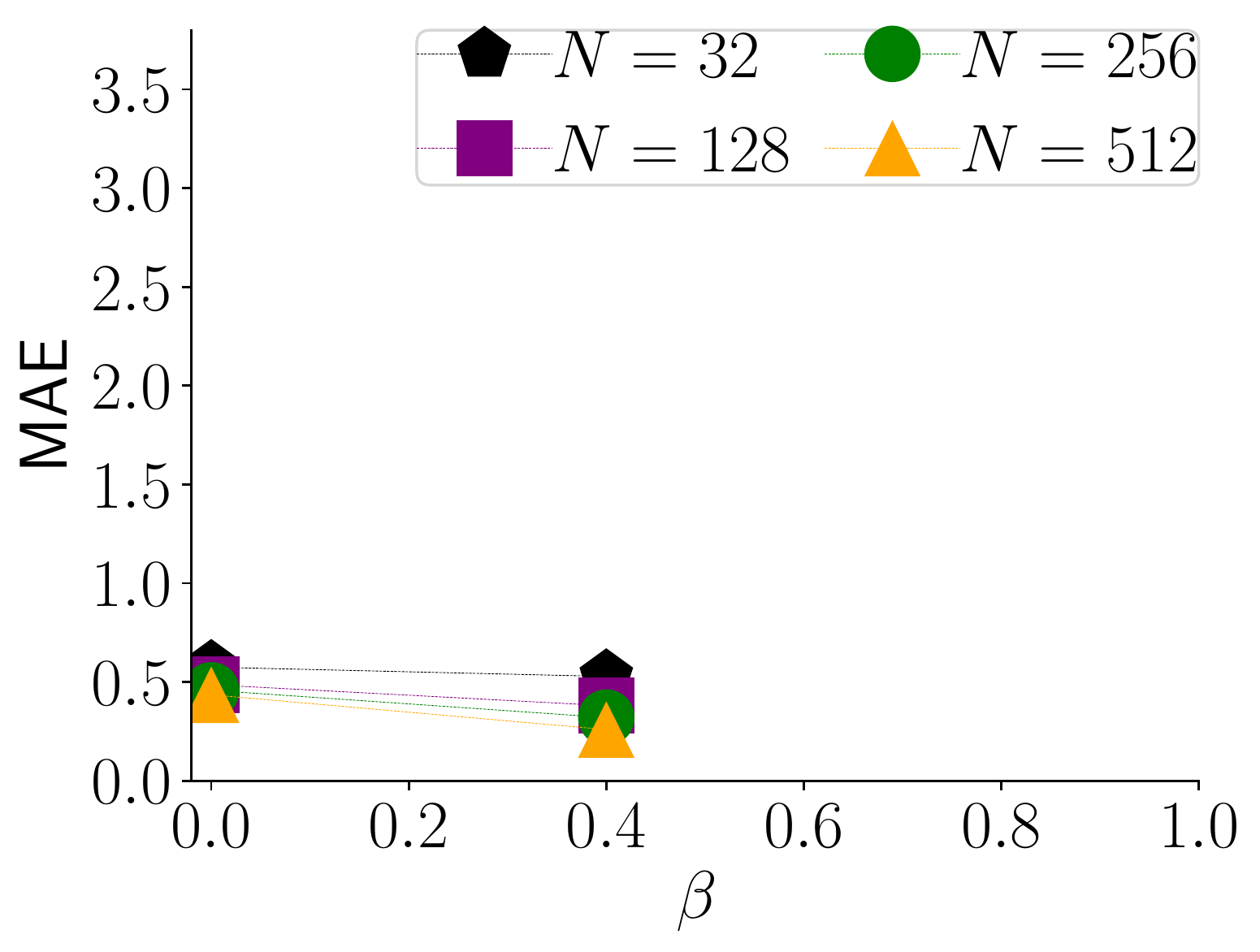}}
     \\ 
    
 \subfloat[]{\label{figure-changingNandBeta13}
 \includegraphics[width=0.245\textwidth]{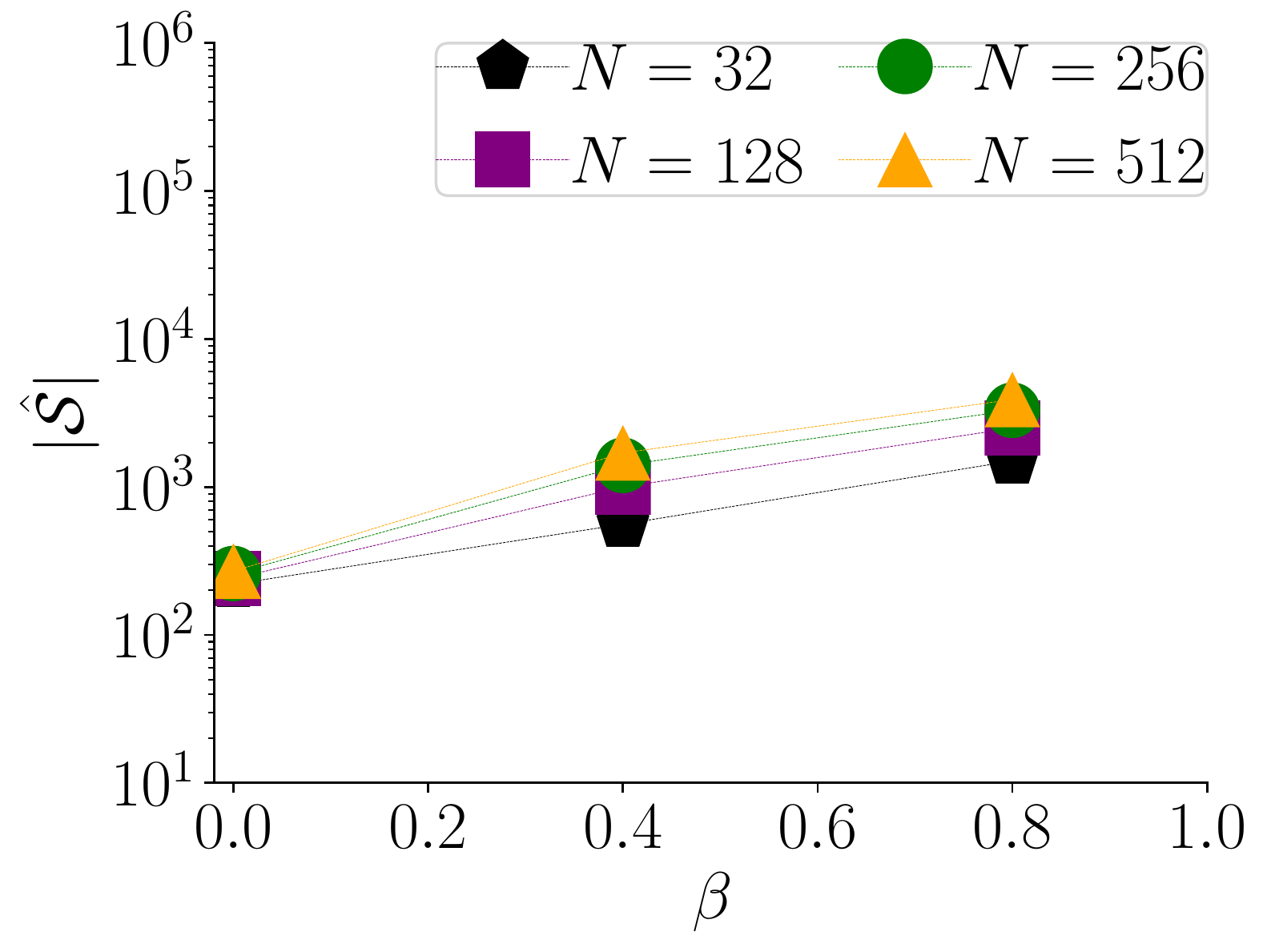}}
 \subfloat[]{\label{figure-changingNandBeta14} \includegraphics[width=0.245\textwidth]{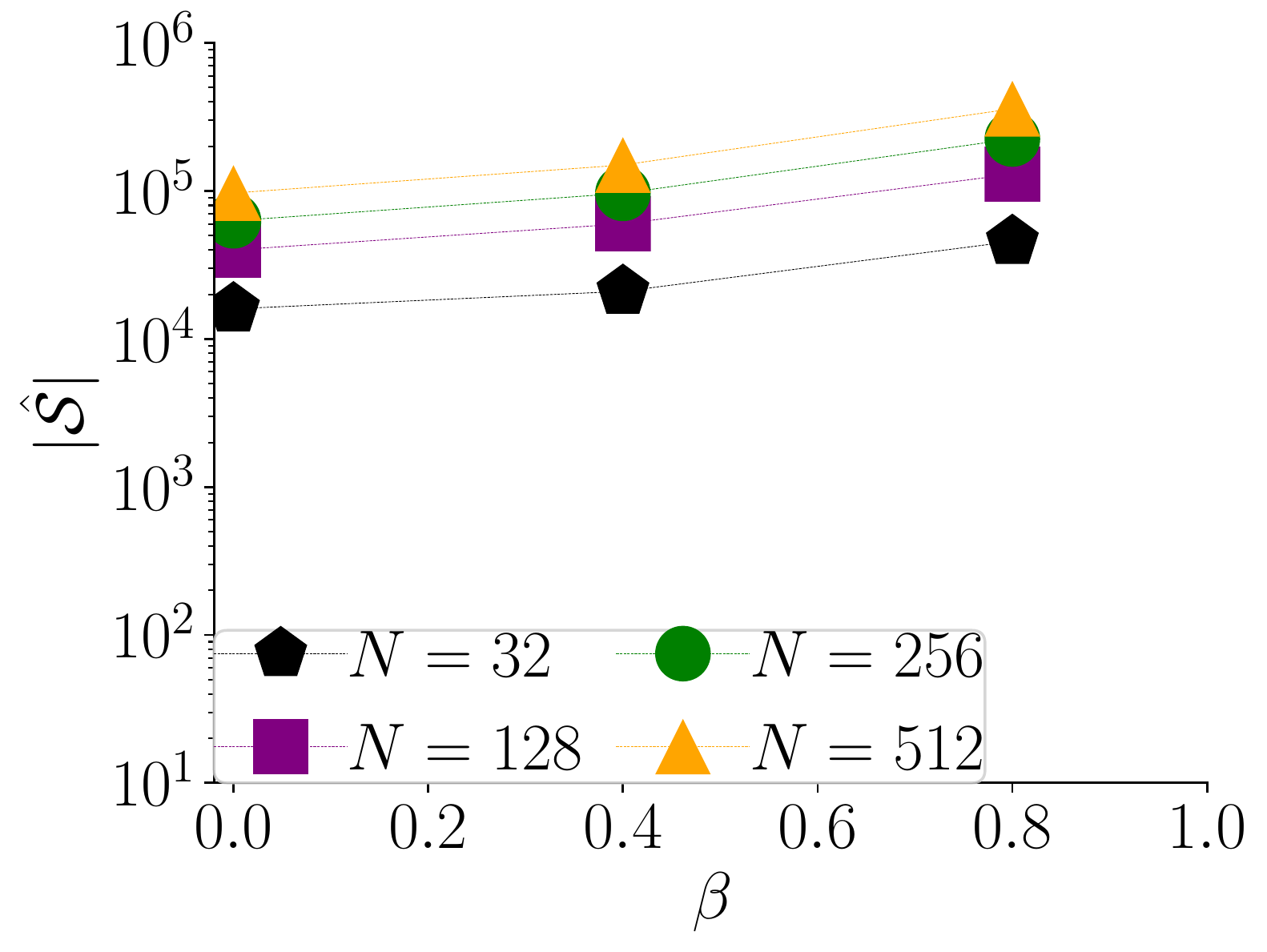}}
\subfloat[]{\label{figure-changingNandBeta15} \includegraphics[width=0.245\textwidth]{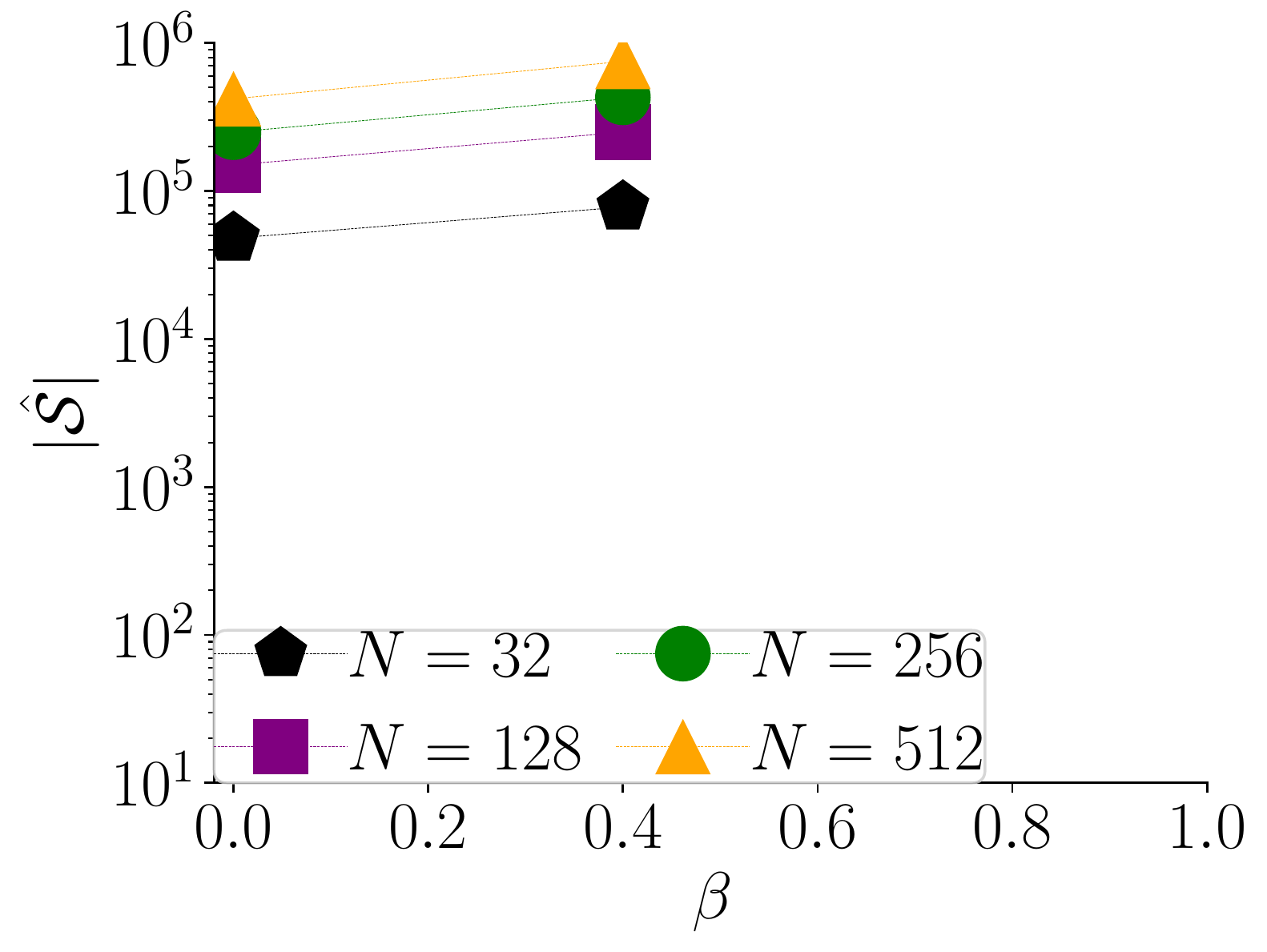}}
  \subfloat[]{ \label{figure-changingNandBeta16}\includegraphics[width=0.245\textwidth]{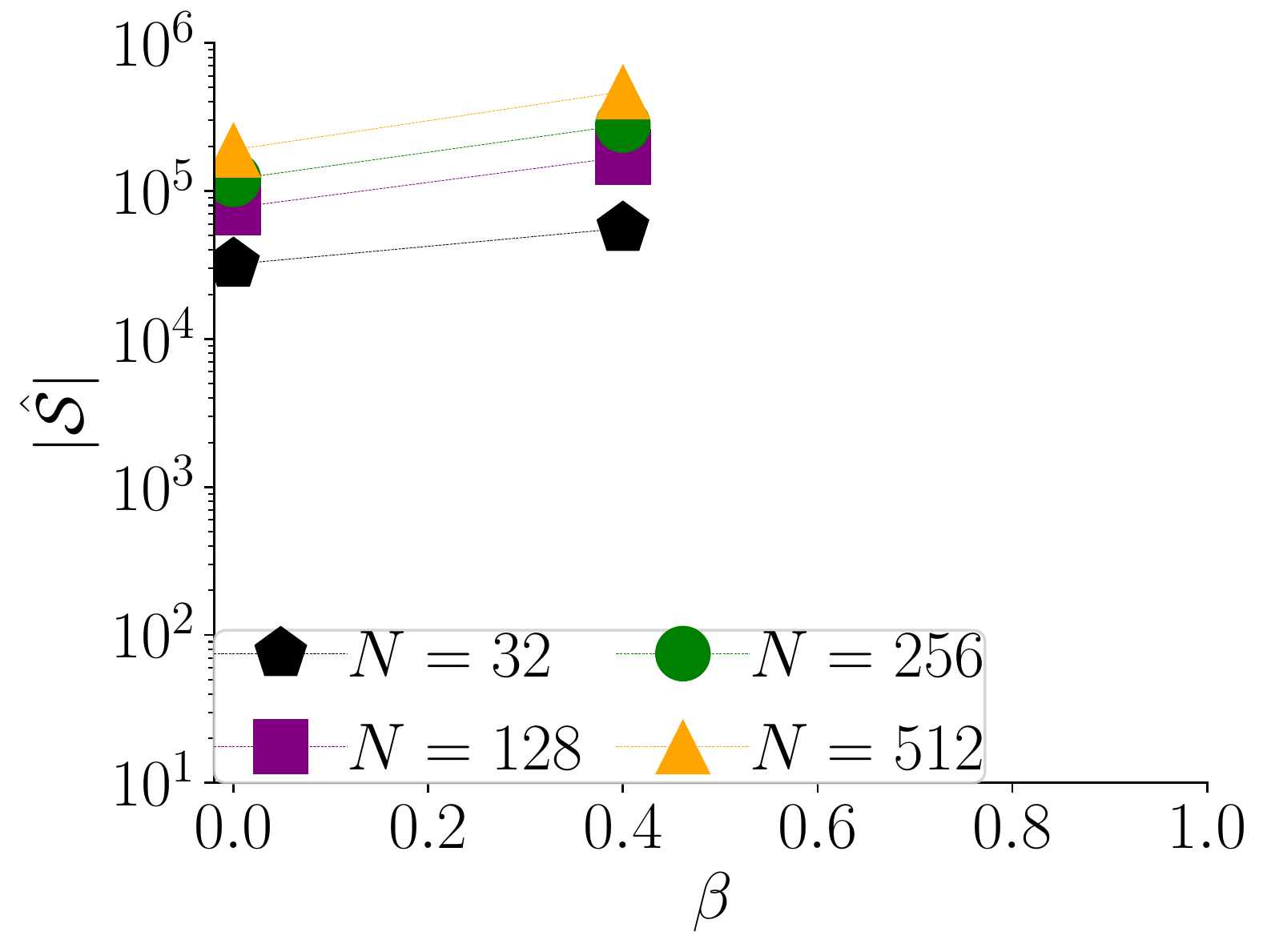}}  
 %  \noindent\hrule  In  (\textbf{i}-\textbf{p}), $K=128$ and $\beta = 0.6$  are fixed. \\
  \hrule \vspace{0.5cm}
\caption[]{ The effect of pathway construction  with different values of $N$ and $\beta$ and fixed values of $K$ and $\pathwaytime$ on the MAE of pathway elaboration with SSA and the  $|\hat{\statespace|}$ of pathway elaboration. In (\textbf{a}-\textbf{h}),  $K=0$ and $\pathwaytime=0 \text{ ns}$ are fixed.  $K=0$ indicates that the states of the pathway are  not elaborated. In  (\textbf{i}-\textbf{p}),  $K=256$ and $\pathwaytime=16 \text{ ns}$ are fixed.   (\textbf{a}), (\textbf{e}), (\textbf{i}), and (\textbf{m}) correspond to  datasets No. 1,2, and 3. (\textbf{b}), (\textbf{f}), (\textbf{j}), and  (\textbf{n}) correspond to   dataset No. 4. (\textbf{c}) (\textbf{g}), (\textbf{k}), and (\textbf{o}) correspond to   dataset No. 5.  (\textbf{d}), (\textbf{h}), (\textbf{l}), and  (\textbf{p}) correspond to dataset No. 6.  For the  missing settings, pathway elaboration  did not finish within two weeks computation time. } 
\label{figure-changingNandBeta}
\end{figure}

\begin{figure}
    \includegraphics[width=0.2\textwidth]{experiments/fig/abstractpics4/unimolecular.pdf}\quad  \quad
 \includegraphics[width=0.2\textwidth]{experiments/fig/abstractpics4/helixassociation.pdf} \quad \quad
 \includegraphics[width=0.18\textwidth]{experiments/fig/abstractpics4/helixassociation-zhang.pdf} \quad\quad\quad
 \includegraphics[width=0.2\textwidth]{experiments/fig/abstractpics4/threewaystranddisplacement.pdf}
\\
\vspace{0.5cm}
% \noindent\hrule \\ 
  \hrule   % \ In  (\textbf{a}-\textbf{h}), $K=128$ and $\beta = 0.0$ are fixed.
  \ \hspace{3.6cm}\text{In  (\textbf{a}-\textbf{h}), $N=128$ and $\beta = 0.0$ are fixed.} \hspace{4.97cm}  
  \ \\ \ \subfloat[]{\label{figure-changingKandKappa1} \includegraphics[width=0.245\textwidth]{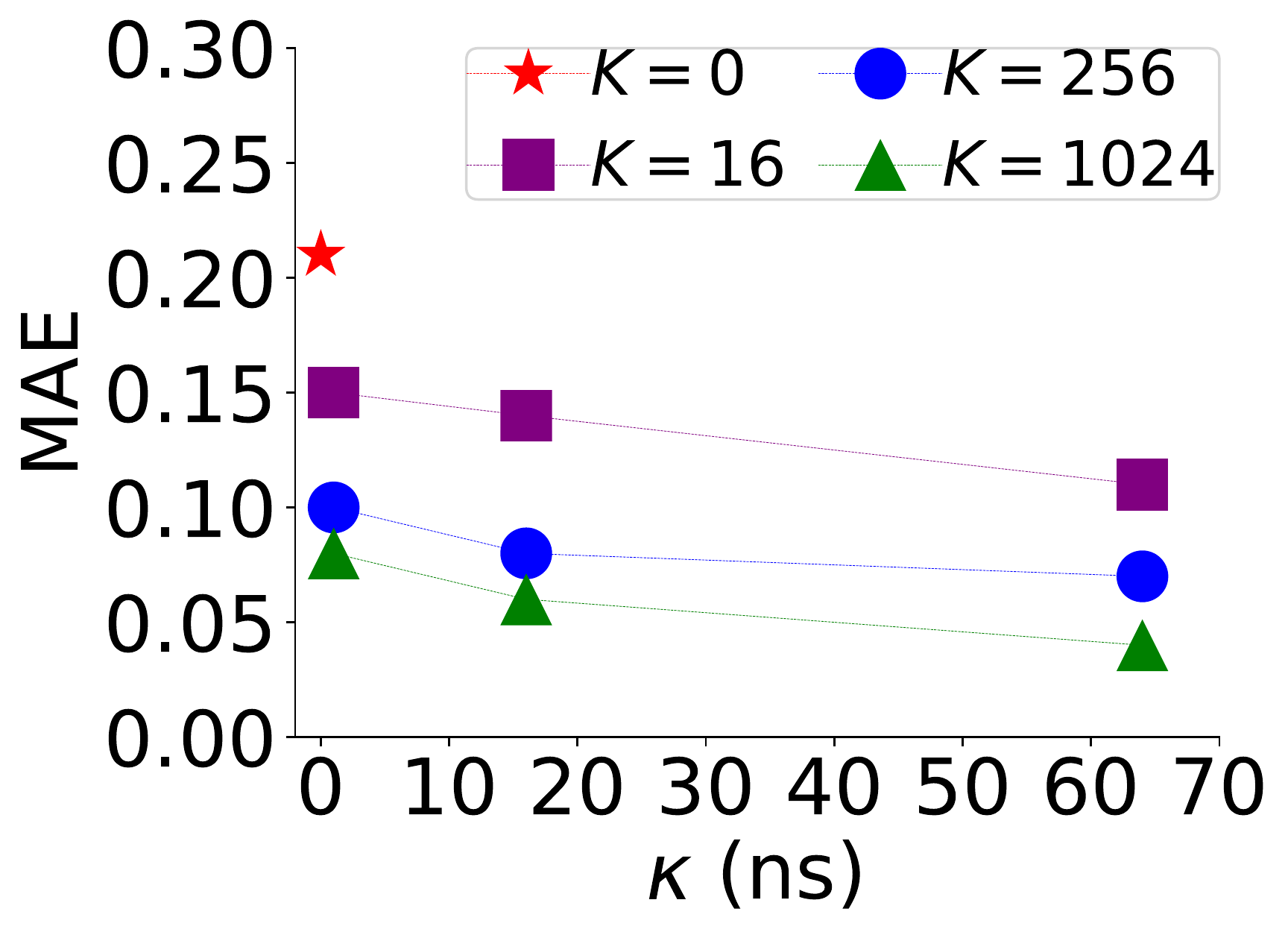}}
   \subfloat[]{ \includegraphics[width=0.245\textwidth]{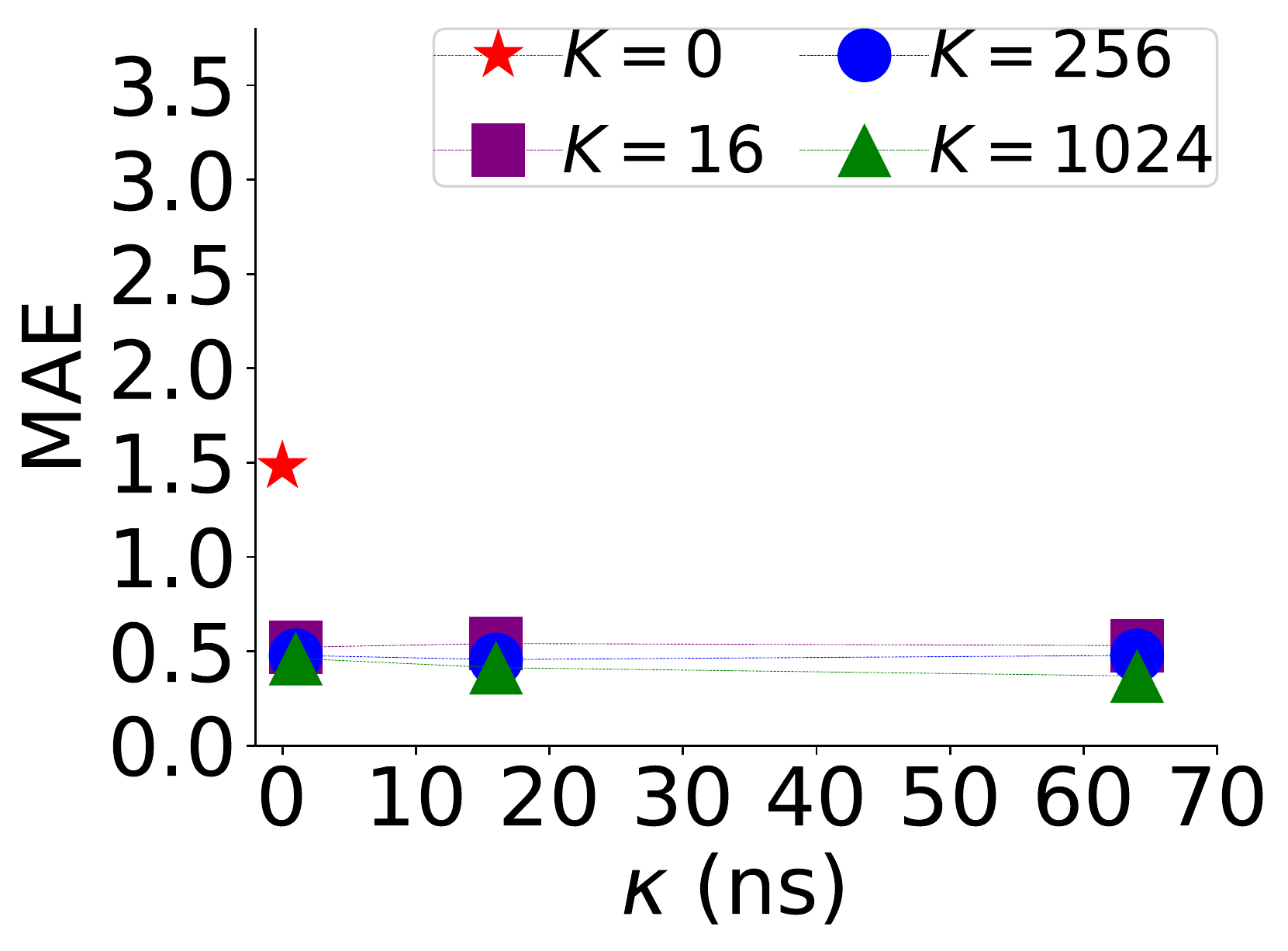}}
   \subfloat[]{ \label{figure-changingKandKappa3}\includegraphics[width=0.245\textwidth]{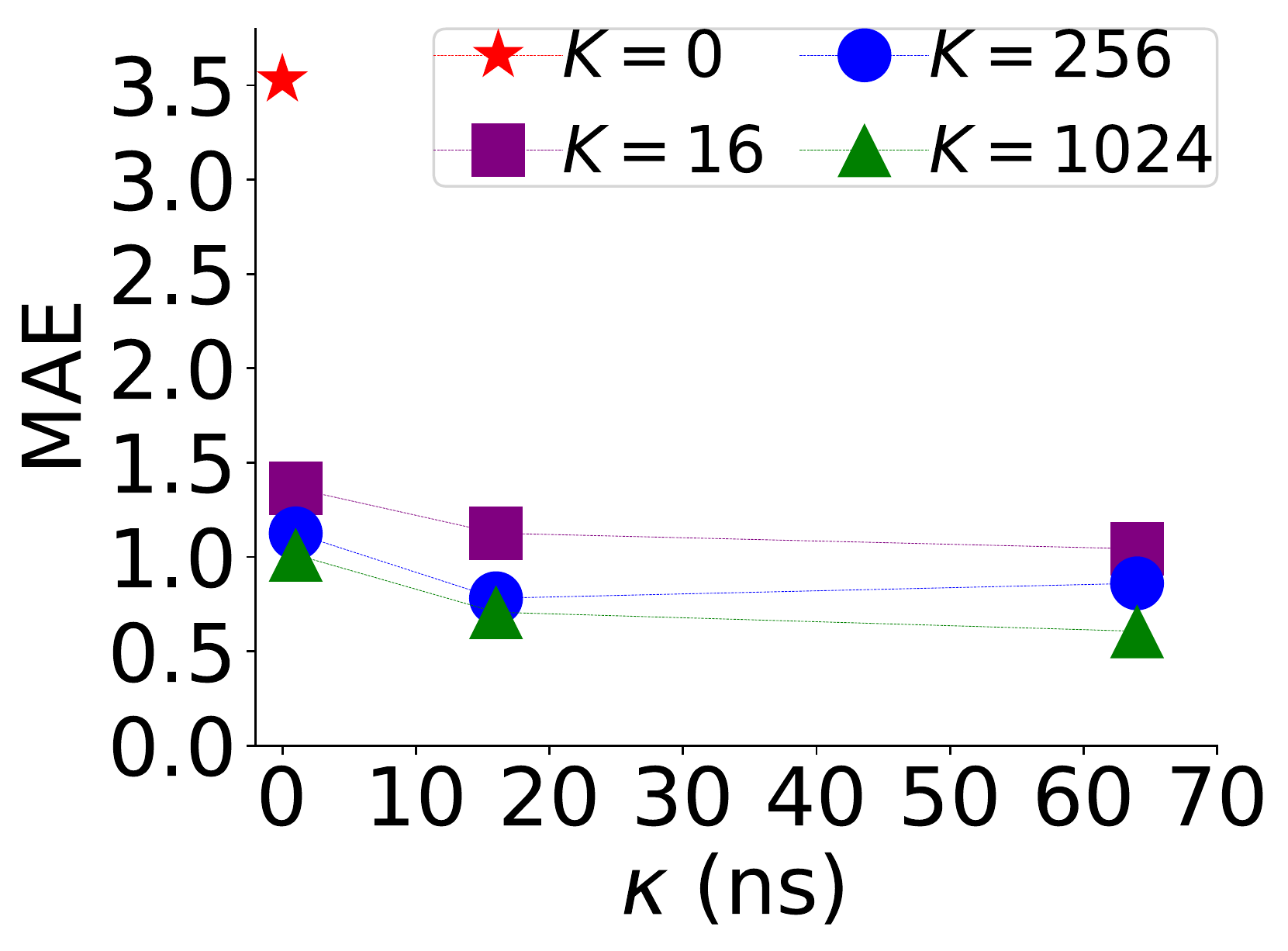}}
   \subfloat[]{ \label{figure-changingKandKappa4} \includegraphics[width=0.245\textwidth]{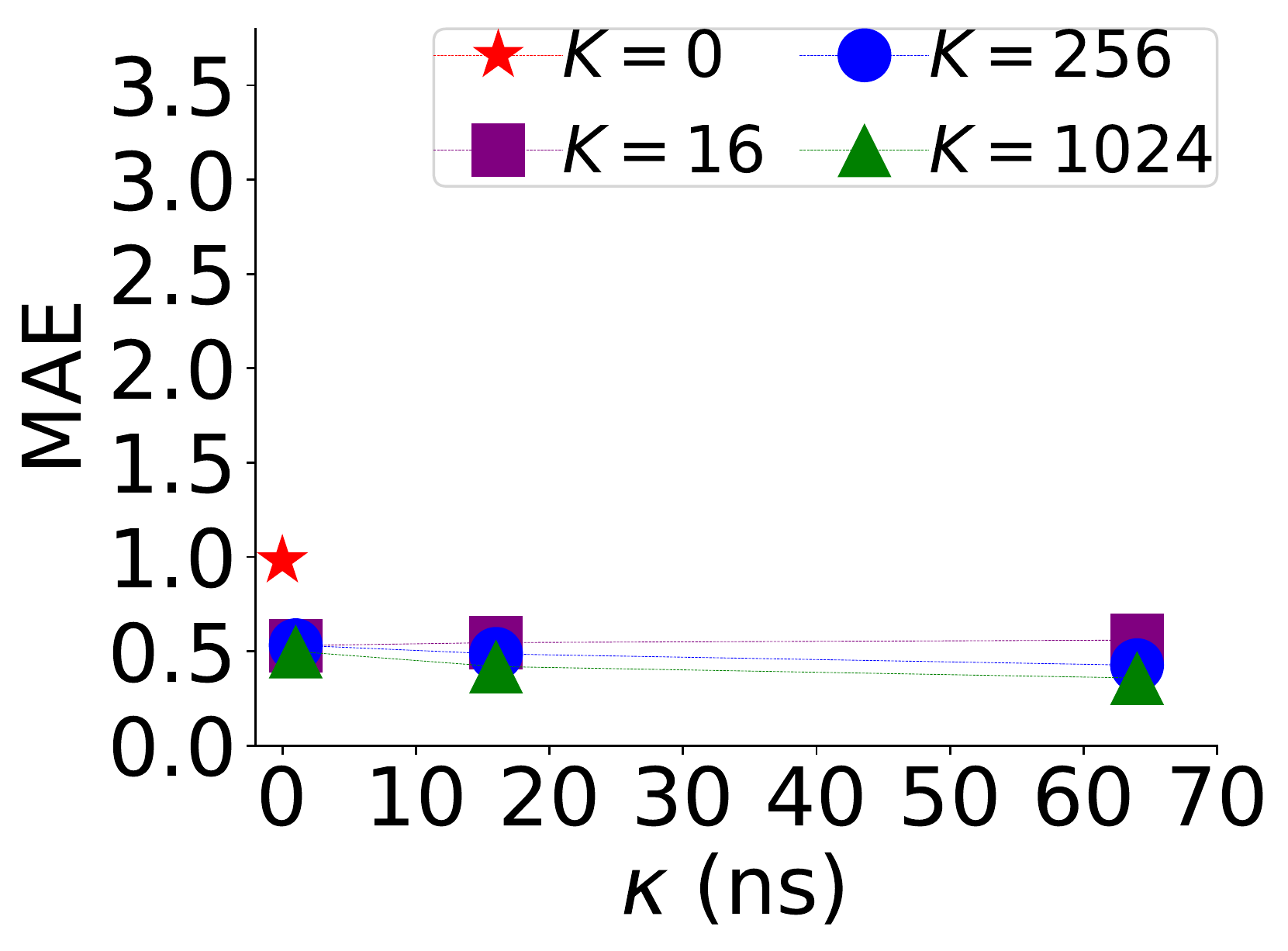}}\  \\ 
   \ 
   \subfloat[]{ \includegraphics[width=0.245\textwidth]{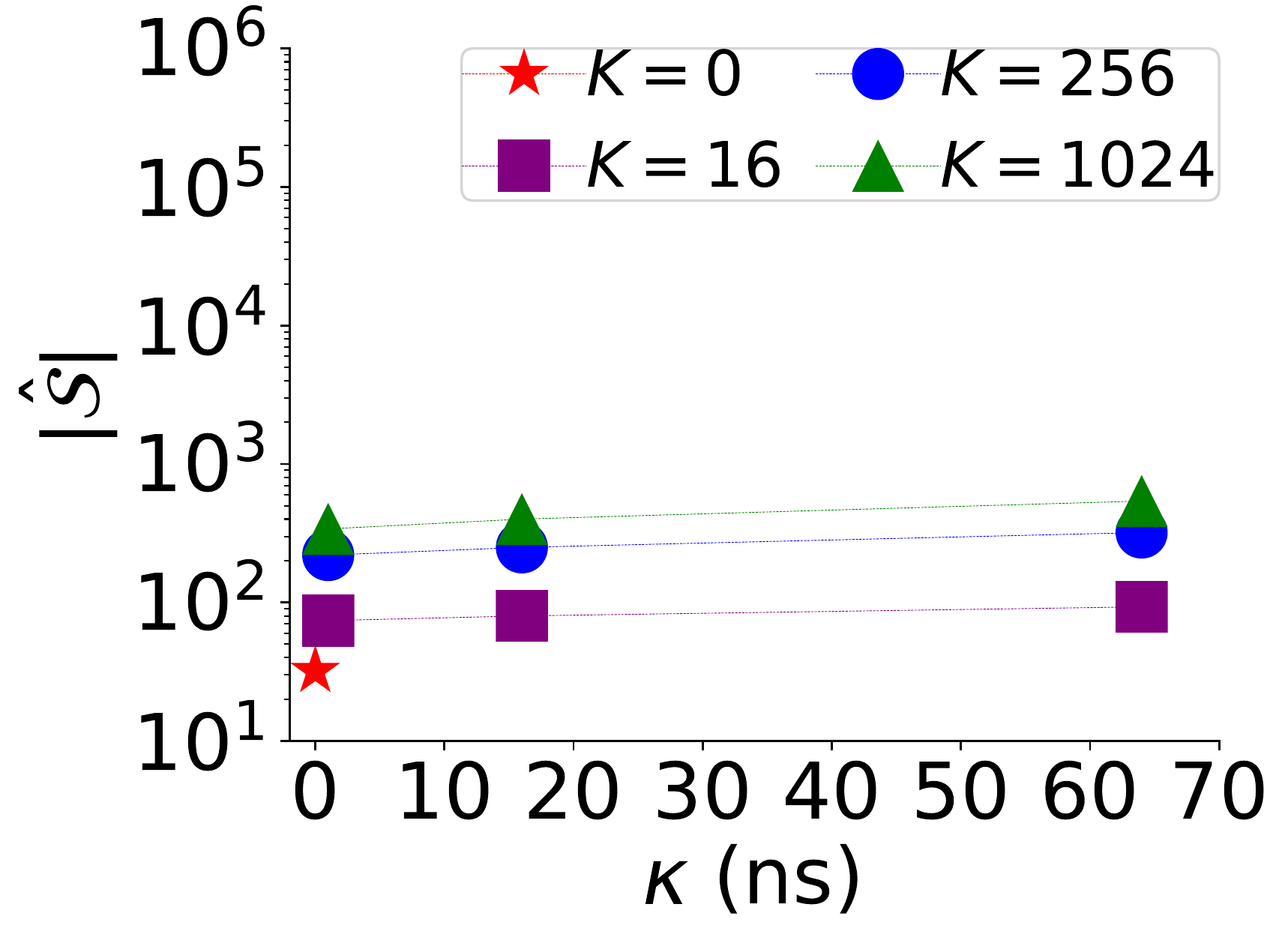}}
   \subfloat[]{ \includegraphics[width=0.245\textwidth]{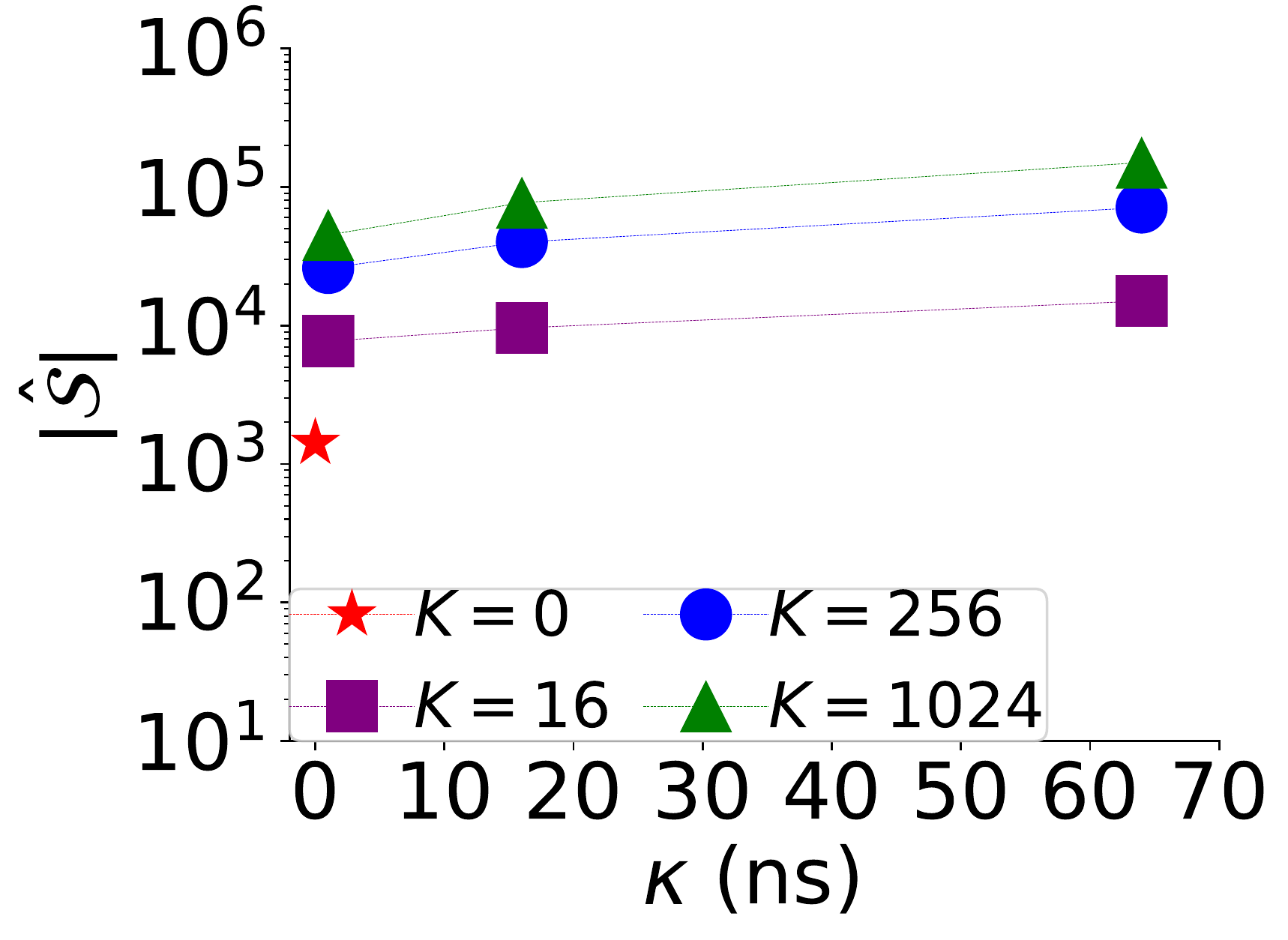}}
   \subfloat[]{ \includegraphics[width=0.245\textwidth]{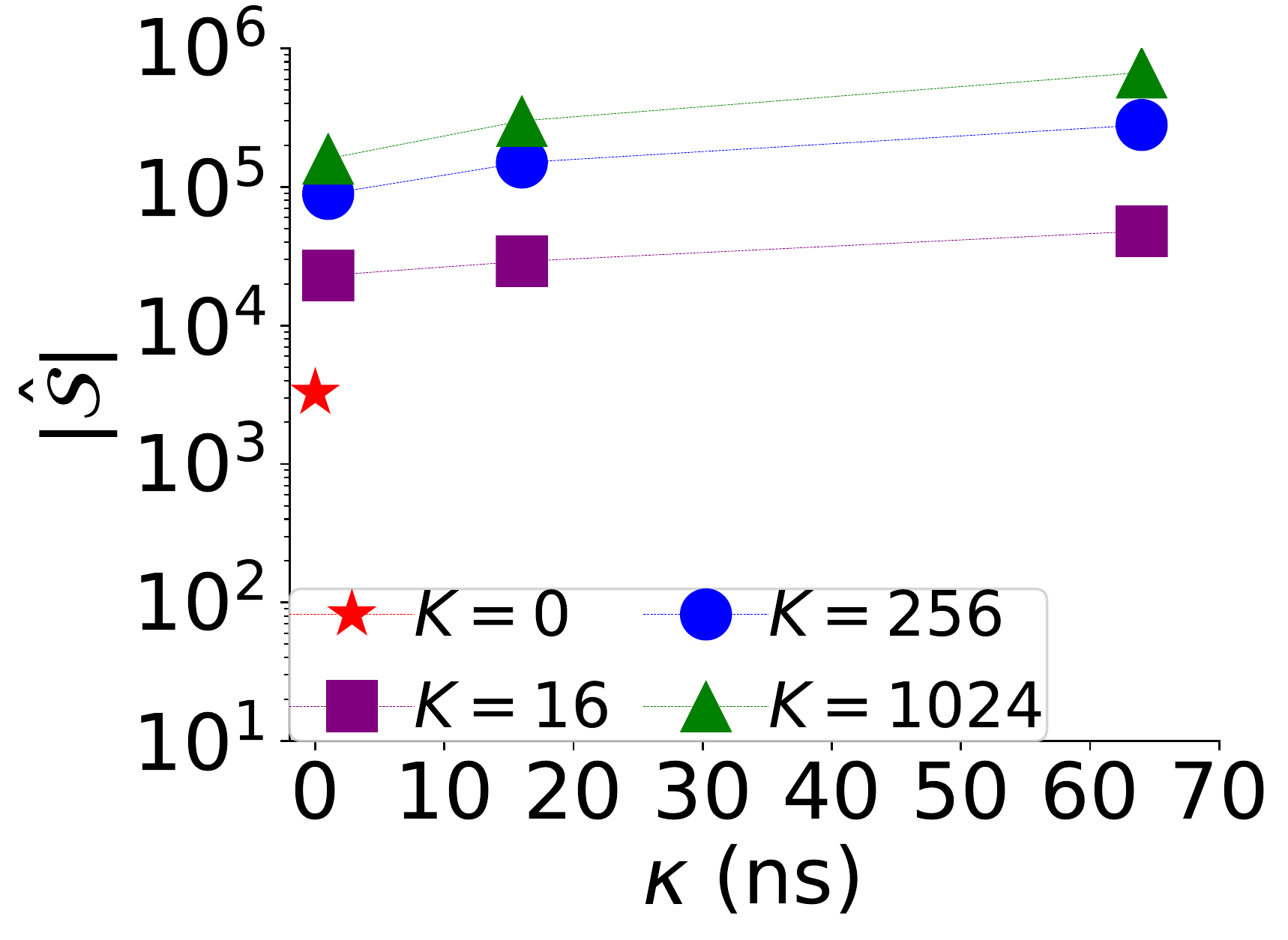}}
   \subfloat[]{\label{figure-changingKandKappa8} \includegraphics[width=0.245\textwidth]{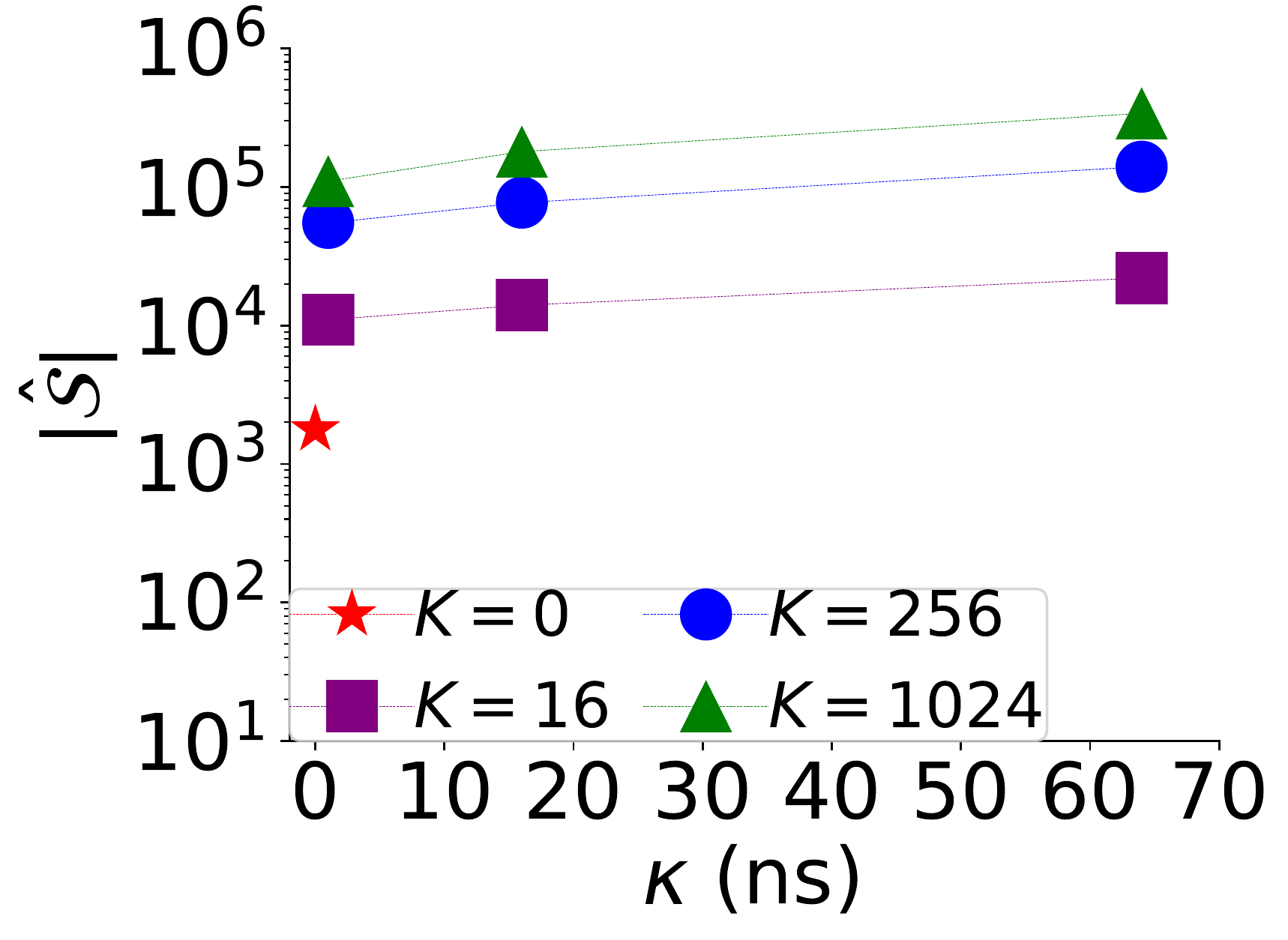}}\  
 %  \noindent\hrule  In  (\textbf{i}-\textbf{p}), $K=128$ and $\beta = 0.6$ are fixed. \\
  \hrule
   \vspace{0.2cm}
  \hrule
   %\ In  (\textbf{i}-\textbf{p}), $K=128$ and $\beta = 0.6$ are fixed. \hspace{12.47cm} 
  \  \text{In  (\textbf{i}-\textbf{p}), $N=128$ and $\beta = 0.6$ are fixed.} 
   \ \\ \ \subfloat[]{\label{figure-changingKandKappa9} \includegraphics[width=0.245\textwidth]{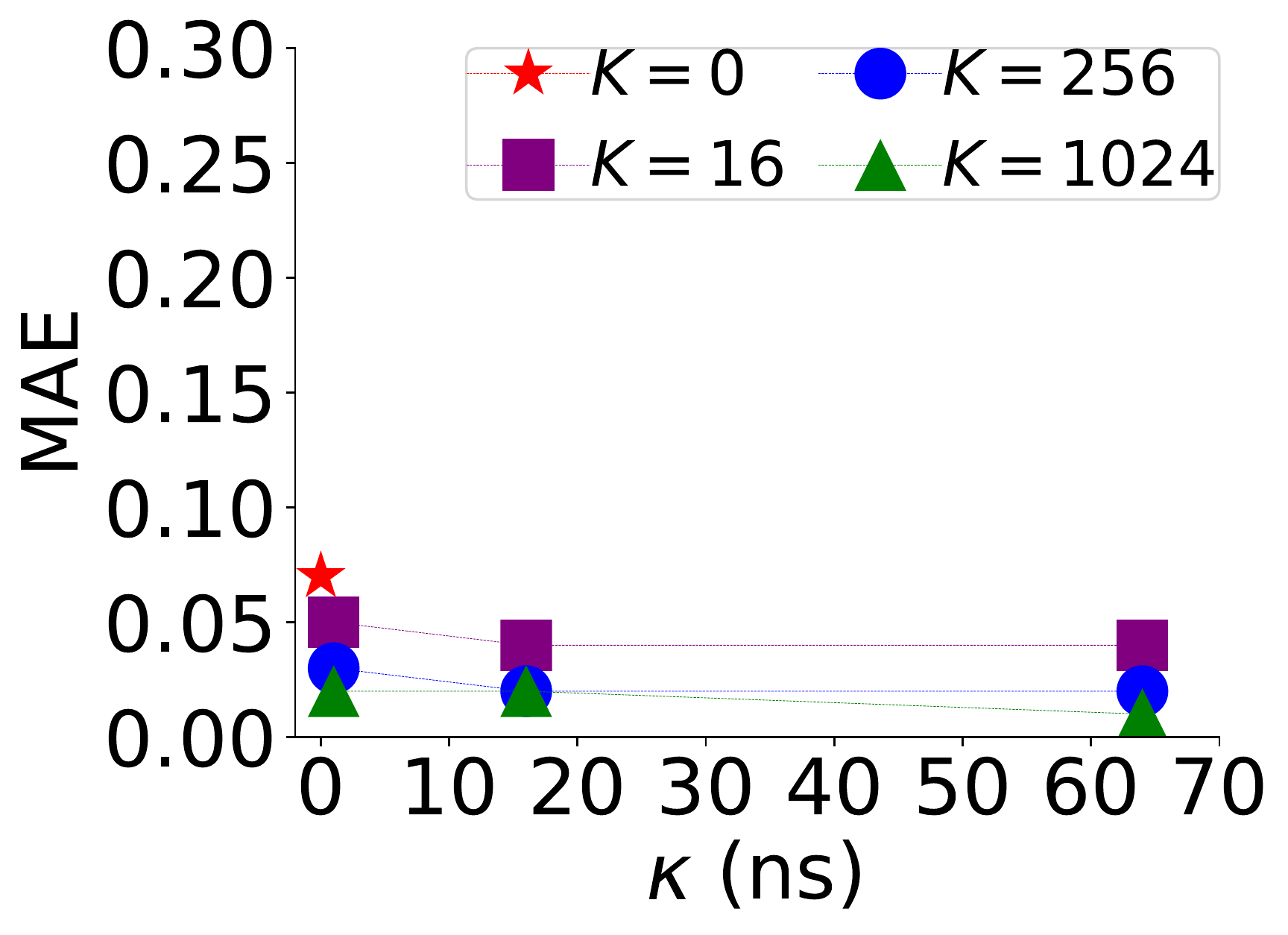}}
   \subfloat[]{ \label{figure-changingKandKappa10} \includegraphics[width=0.245\textwidth]{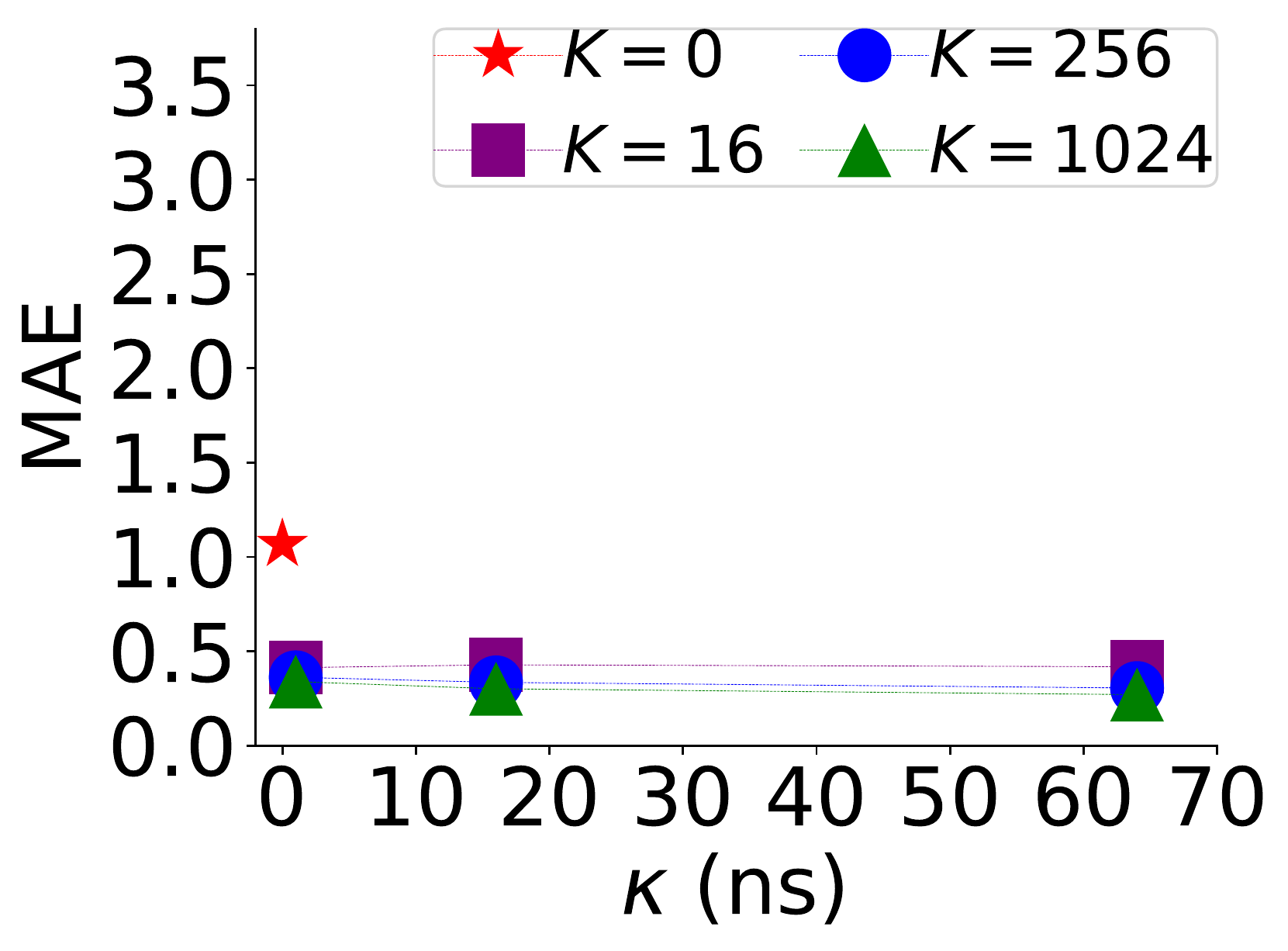}}
   \subfloat[]{ \label{figure-changingKandKappa11}\includegraphics[width=0.245\textwidth]{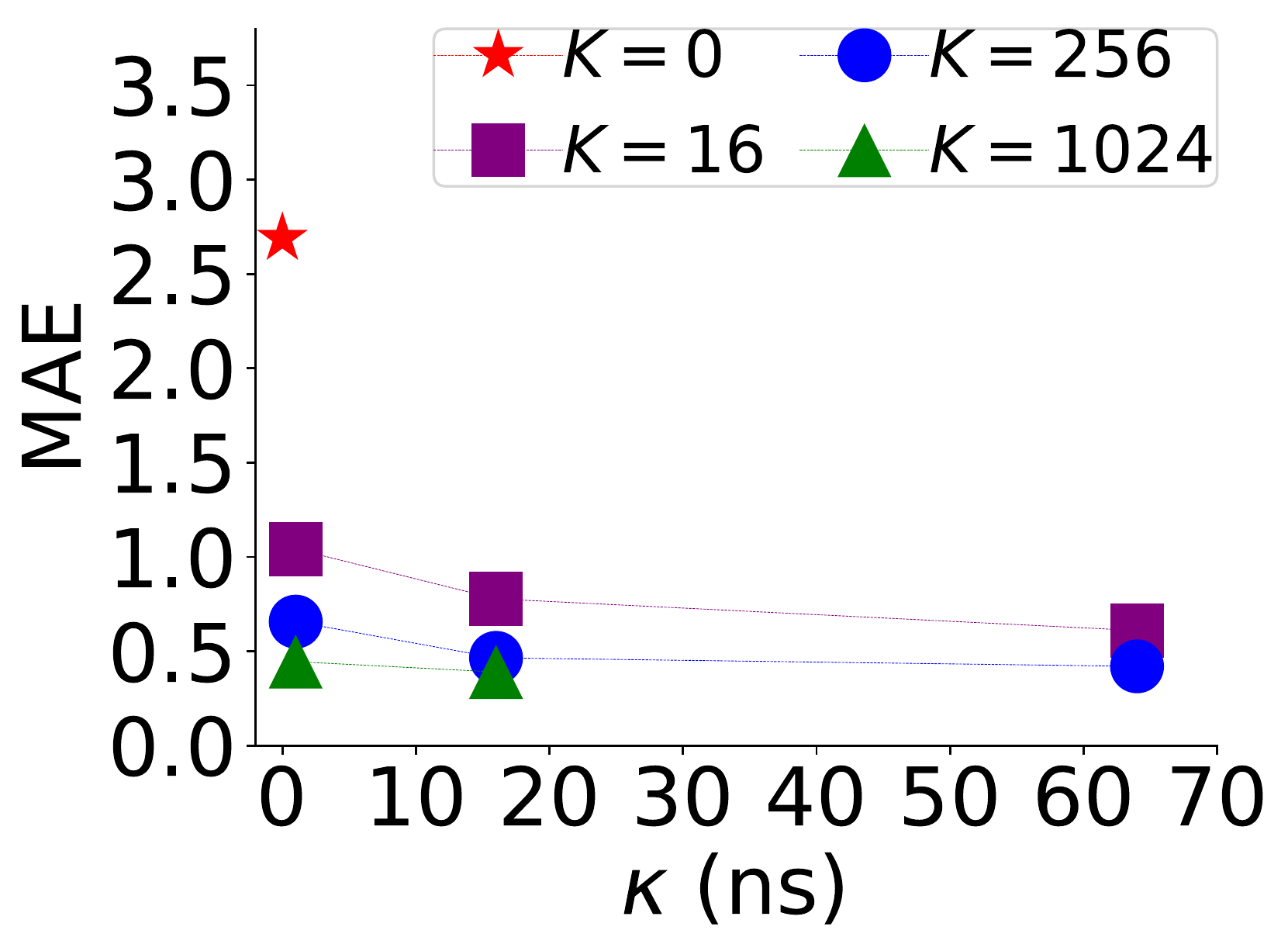}}
   \subfloat[]{\label{figure-changingKandKappa12} \includegraphics[width=0.245\textwidth]{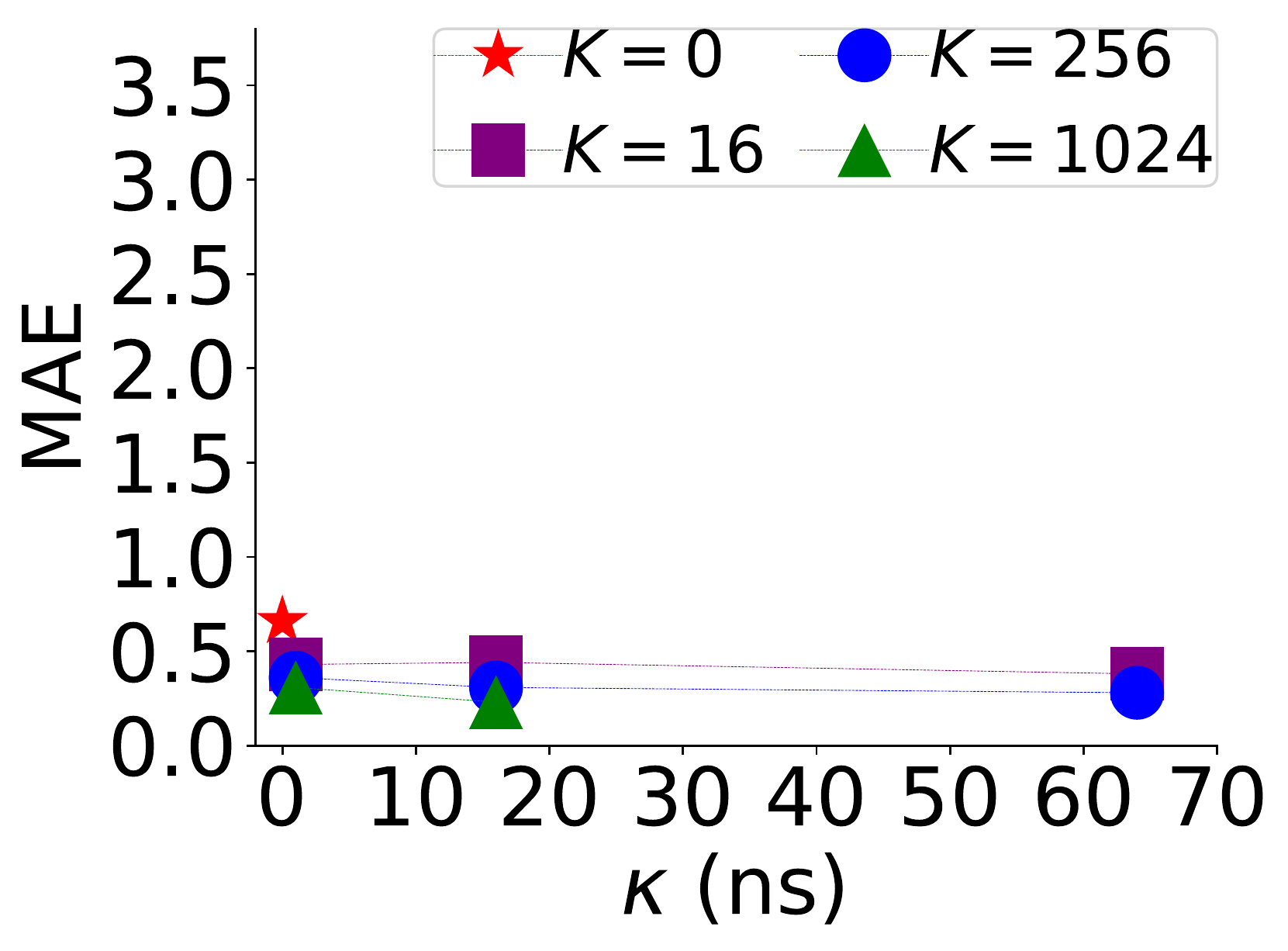}}\  \\ 
   \ 
   \subfloat[]{ \label{figure-changingKandKappa13}\includegraphics[width=0.245\textwidth]{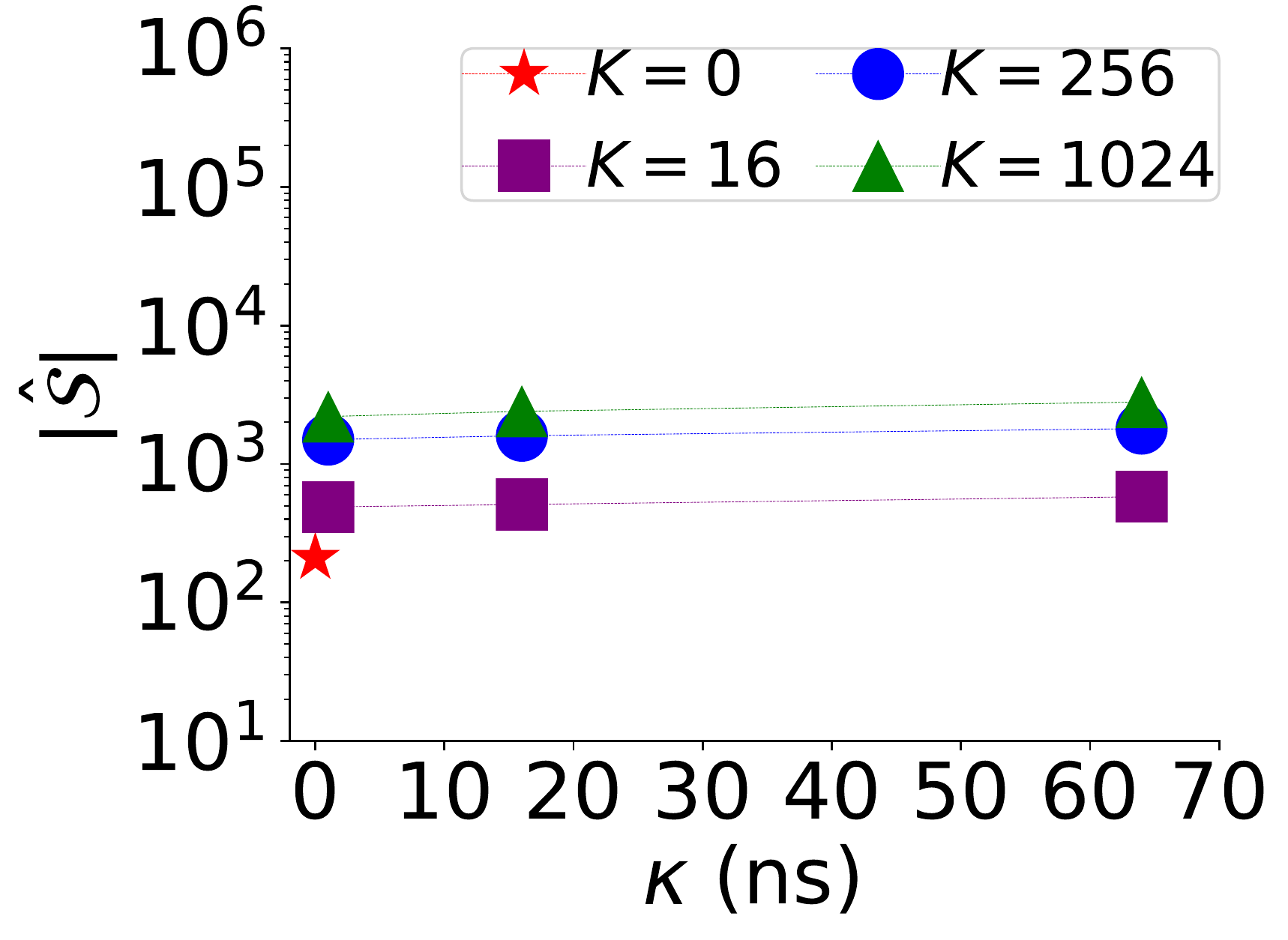}}
   \subfloat[]{ \label{figure-changingKandKappa14}\includegraphics[width=0.245\textwidth]{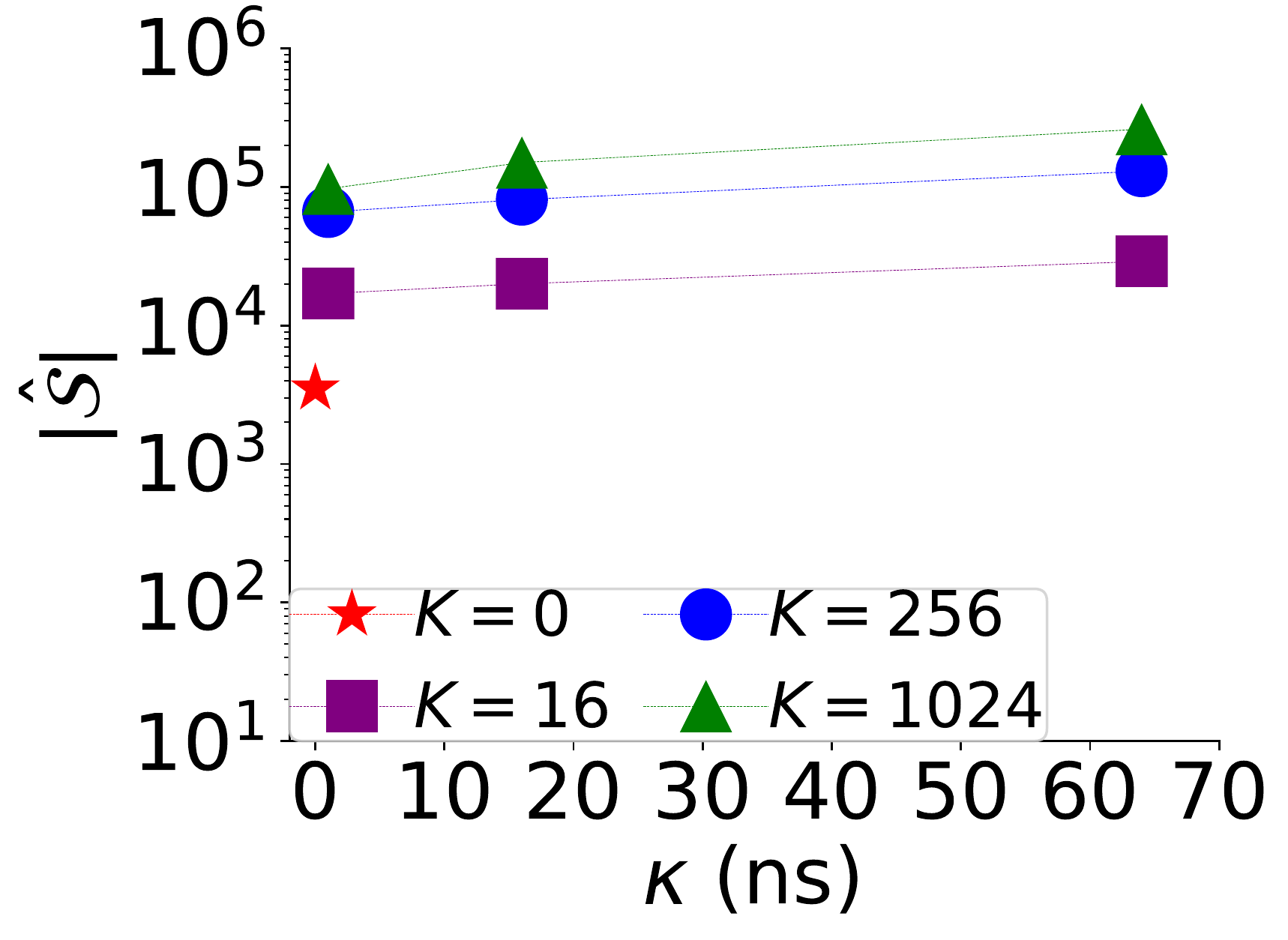}}
   \subfloat[]{ \label{figure-changingKandKappa15}\includegraphics[width=0.245\textwidth]{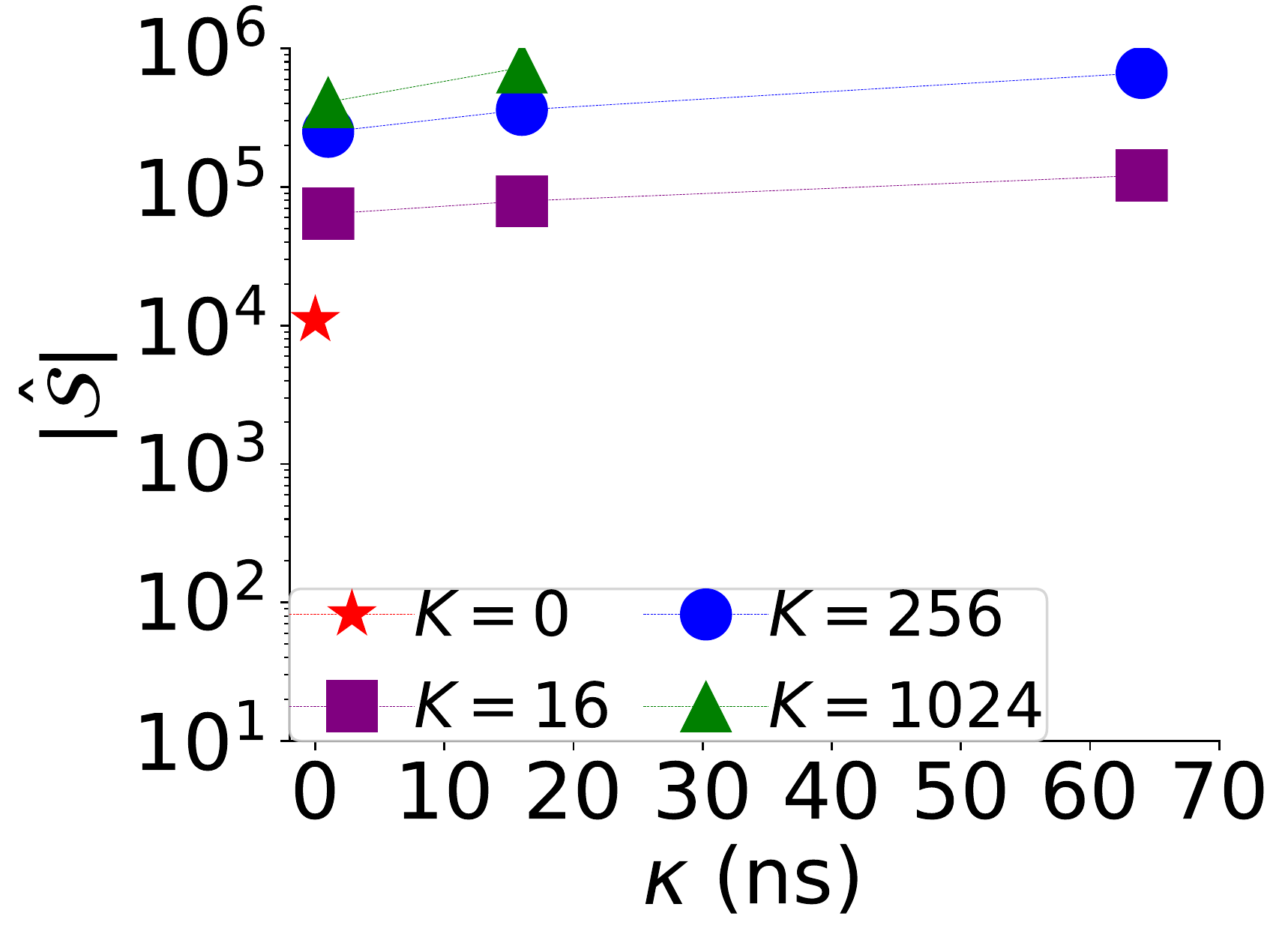}}
   \subfloat[]{\label{figure-changingKandKappa16} \includegraphics[width=0.245\textwidth]{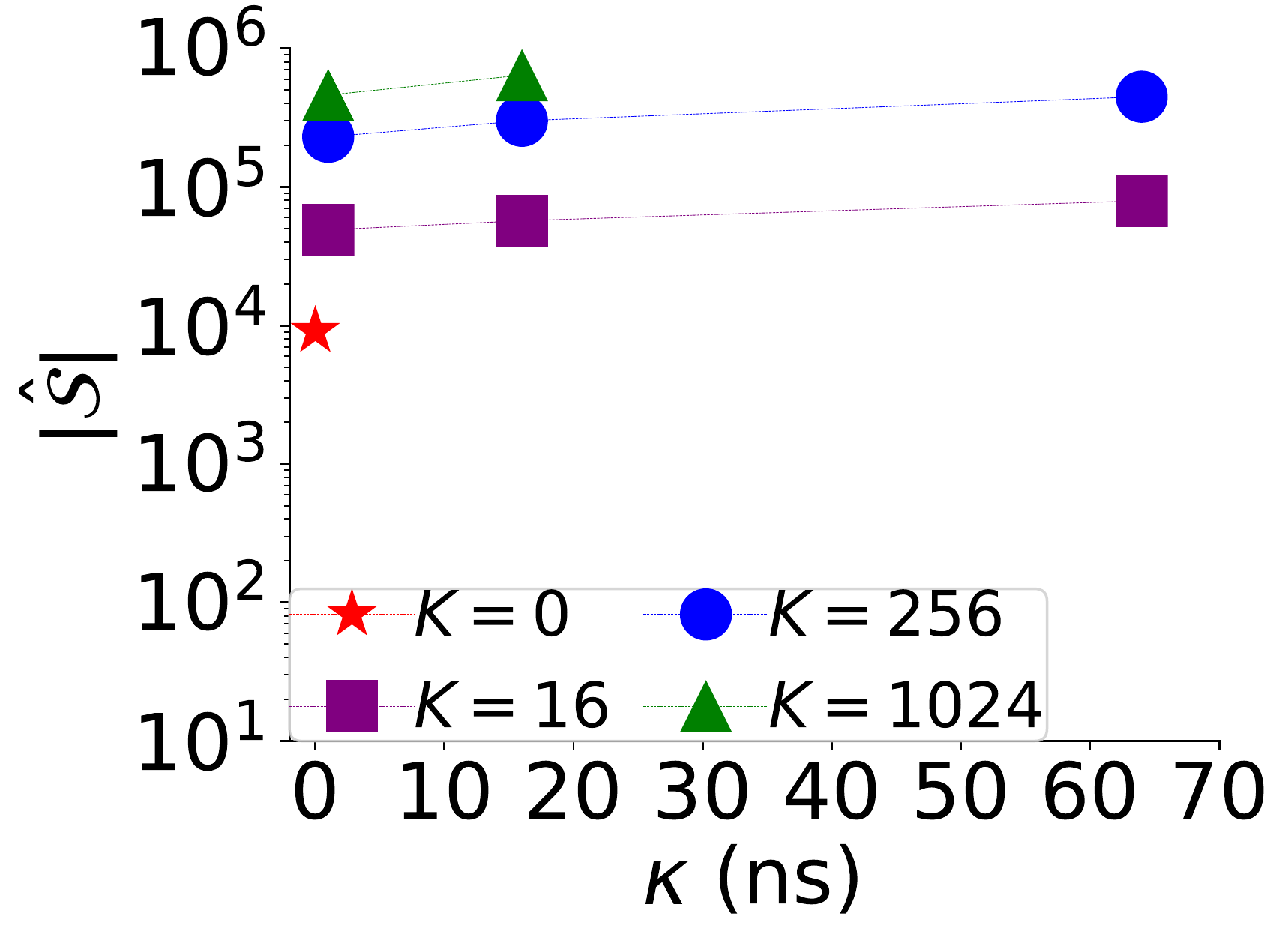}}\  
 %  \noindent\hrule  In  (\textbf{i}-\textbf{p}), $K=128$ and $\beta = 0.6$ are fixed. \\
  \hrule
   
   \vspace{0.5cm}
\caption[]{The effect of state elaboration, with different values of $K$ and $\pathwaytime$  and fixed values of $N$ and $\beta$  on the MAE of pathway elaboration with SSA and the  $|\hat{\statespace|}$ of pathway elaboration.  $K=0$ indicates that the states of the pathway are  not elaborated. In (\textbf{a}-\textbf{h}),  $N=128$ and $\beta=0.0$  are fixed. In  (\textbf{i}-\textbf{p}),  $N=128$ and $\beta=0.6$ are fixed.   (\textbf{a}), (\textbf{e}), (\textbf{i}), and (\textbf{m}) correspond to  datasets No. 1,2, and 3. (\textbf{b}), (\textbf{f}), (\textbf{j}), and  (\textbf{n}) correspond to   dataset No. 4. (\textbf{c}) (\textbf{g}), (\textbf{k}), and (\textbf{o}) correspond to   dataset No. 5.  (\textbf{d}), (\textbf{h}), (\textbf{l}), and  (\textbf{p}) correspond to dataset No. 6. For the  missing settings, pathway elaboration  did not finish within two weeks computation time. }
\label{figure-changingKandKappa}
\end{figure}

\end{appendix}

%% or include bibliography directly:
% \begin{thebibliography}{}
% \bibitem[\protect\citeauthoryear{???}{???}]{b1}
% \end{thebibliography}

\end{document}